\title{Reachability and Distances under Multiple Changes}%
\author{Samir Datta}{Chennai Mathematical Institute \& UMI ReLaX, Chennai, India}{sdatta@cmi.ac.in}{}{}
\author{Anish Mukherjee}{Chennai Mathematical Institute, Chennai, India}{anish@cmi.ac.in}{}{}
\author{Nils Vortmeier}{TU Dortmund University, Dortmund, Germany}{nils.vortmeier@tu-dortmund.de}{}{}
\author{Thomas Zeume}{TU Dortmund University, Dortmund, Germany}{thomas.zeume@tu-dortmund.de}{}{}
\authorrunning{S. Datta, A. Mukherjee, N. Vortmeier, T. Zeume} %
\subjclass{Theory of computation $\rightarrow$ Models of computation,
Theory of computation $\rightarrow$ Finite Model Theory}
\keywords{dynamic complexity, reachability, distances, complex changes}%
\algnewcommand\algorithmicinput{\textbf{Input:}}
\algnewcommand\Input{\item[\algorithmicinput]}
\algnewcommand\algorithmicoutput{\textbf{Output:}}
\algnewcommand\Output{\item[\algorithmicoutput]}
\declaretheoremstyle[
    spaceabove=6pt, 
    spacebelow=6pt, 
    headfont=\sffamily\bfseries, 
    bodyfont = \normalfont\itshape,
    postheadspace=2mm, 
    headpunct={.}]{theoremstyle}
\def\BState{\State\hskip-\ALG@thistlm}
\newif\ifcomments
\newif\ifchanges
\newcommand{\onlyLong}[1]{#1}
\newcommand  {\myclass} [1]  {\ensuremath{\textsc{#1}}}
\newcommand{\StaClass}[1]{\myclass{#1}\xspace}
\newcommand{\DynClass}[1]{\myclass{Dyn#1}\xspace}
\newcommand  {\myproblem} [1] {\textsc{#1}}
\newcommand  {\problem}[1] {\myproblem{#1}}
\newcommand     {\LOGSPACE}     {\myclass{LOGSPACE}}
\newcommand     {\NL}   {\myclass{NL}}
\newcommand     {\NC}   {\myclass{NC}}
\newcommand     {\AC}   {\myclass{AC}}
\newcommand     {\TC}   {\myclass{TC}}
\newcommand{\FO}{\StaClass{FO}}
\newcommand{\FOMaj}{\StaClass{FO+Maj}}
\newcommand{\FOMajar}{\StaClass{FO+Maj$(+,\!\times\!)$}}
\newcommand{\FOar}{\StaClass{FO$(+,\!\times\!)$}}
\newcommand{\MSO}[1][\quant]{\StaClass{MSO}}
\newcommand{\CQ}[1][]{\StaClass{CQ}}
\newcommand{\UCQ}[1][]{\StaClass{UCQ}}
\newcommand{\CQneg}[1][]{\StaClass{CQ\ensuremath{^{\mneg}}}}
\newcommand{\UCQneg}[1][]{\StaClass{UCQ\ensuremath{^{\mneg}}}}
\newcommand{\mneg}{\neg} %
\newcommand{\DynFO}{\DynClass{FO}}
\newcommand{\DynNC}{\DynClass{NC}}
\newcommand{\DynFOMaj}{\DynClass{FO+Maj}}
\newcommand{\DynFOar}{\DynClass{FO$(+,\times)$}}
\newcommand{\Emptiness}[1][]{\problem{Emptiness}\ifthenelse{\equal{#1}{}}{}{(#1)}\xspace}
\newcommand{\Consistency}[1][]{\problem{Consistency}\ifthenelse{\equal{#1}{}}{}{(#1)}\xspace}
\newcommand{\HI}[1][]{\problem{HistoryIndependence}\ifthenelse{\equal{#1}{}}{}{(#1)}\xspace}
\newcommand{\mtext}[1]{\textsc{#1}}
\providecommand {\calA}      {{\mathcal A}\xspace}
\providecommand {\calC}      {{\mathcal C}\xspace}
\providecommand {\calD}      {{\mathcal D}\xspace}
\providecommand {\calI}      {{\mathcal I}\xspace}
\providecommand {\calP}      {{\mathcal P}\xspace}
\providecommand {\calQ}      {{\mathcal Q}\xspace}
\providecommand {\calS}      {{\mathcal S}\xspace}
\newcommand{\N}{\ensuremath{\mathbb{N}}}
\newcommand{\Z}{\ensuremath{\mathbb{Z}}}
\newcommand{\Q}{\ensuremath{\mathbb{Q}}}
\newcommand{\R}{\ensuremath{\mathbb{R}}}
\newcommand{\F}{\ensuremath{\mathbb{F}}}
\newcommand{\bigO}{\ensuremath{\mathcal{O}}}
\newcommand{\tpl}{\bar}
\newcommand{\restrict}[2]{#1\mspace{-3mu}\upharpoonright \mspace{-3mu}#2}
\newcommand{\df}{\ensuremath{\mathrel{\smash{\stackrel{\scriptscriptstyle{
    \text{def}}}{=}}}} \;}
\newcommand{\auxramsey}[4]{
  \@ifmtarg{#1}{
    \@ifmtarg{#4}{
      \ensuremath{R(#2; #3)}
    }{
      \ensuremath{R^#4(#2; #3)}
    }
   }{
    \@ifmtarg{#4}{
      \ensuremath{R_{#1}(#2; #3)}
    }{
      \ensuremath{R^#4_{#1}(#2; #3)}
    }
  }
}
   \theoremstyle{plain}
   \newtheorem{proposition}[theorem]{Proposition}
   \newtheorem*{proposition*}{Proposition}
    \theoremstyle{definition}
    \newenvironment{proofof}[1]{\begin{proof}[Proof (of #1).]}{\end{proof}}
    \newenvironment{proofsketchof}[1]{\begin{proof}[Proof sketch (of #1).]}{\end{proof}}
\newcommand{\arity}{\ensuremath{\text{Ar}}}
\newcommand{\schema}{\ensuremath{\tau}\xspace}
\newcommand{\struc}{\calS}
\newcommand{\quant}{\mathbb{Q}}
\newcommand{\db}{\ensuremath{\calD}\xspace}
\newcommand{\inp}{\ensuremath{\calI}\xspace}
\newcommand{\aux}{\ensuremath{\calA}\xspace}
\newcommand{\domain}{\ensuremath{ D}\xspace}
\newcommand{\query}{\ensuremath{\calQ}}
\newcommand{\adom}{\ensuremath{\text{adom}}}
\newcommand{\state}{\ensuremath{\struc}\xspace}
\newcommand{\prog}{\ensuremath{\calP}\xspace}
\newcommand{\uf}[4]{
  \@ifmtarg{#4}{
    \ensuremath{\phi^{#1}_{#2}(#3)}
   }{
    \ensuremath{\phi^{#1}_{#2}(#3; #4)}
  }
}
\newcommand{\huf}[4]{
  \@ifmtarg{#4}{
    \ensuremath{\widehat{\phi}^{#1}_{#2}(#3)}
   }{
    \ensuremath{\widehat{\phi}^{#1}_{#2}(#3; #4)}
  }
}
\newcommand{\ufb}[4]{
  \@ifmtarg{#4}{
    \ensuremath{\psi^{#1}_{#2}(#3)}
   }{
    \ensuremath{\psi^{#1}_{#2}(#3; #4)}
  }
}
  \newcommand{\ufsubstitute}[5]{
    \@ifmtarg{#5}{
      \ensuremath{\phi^{#2}_{#3}[#1](#4)}
    }{
      \ensuremath{\phi^{#2}_{#3}[#1](#4; #5)}
    }
  }
\newcommand{\ut}[4]{
  \@ifmtarg{#4}{
    \ensuremath{t^{#1}_{#2}(#3)}
   }{
    \ensuremath{t^{#1}_{#2}(#3; #4)}
  }
}
\newcommand{\ite}[3]{
  \@ifmtarg{#1}{
    \ensuremath{\mtext{ITE}}
   }{
    \mtext{ITE}(#1,#2,#3)  
  }
}
\newcommand{\mf}[3]{
  \@ifmtarg{#3}{
    \ensuremath{\mu_{#1}(#2)}
   }{
    \ensuremath{\mu_{#1}(#2; #3)}
  }
}
\newcommand{\mfos}[4]{
  \@ifmtarg{#4}{
    \ensuremath{{#1}_{#2}(#3)}
   }{
    \ensuremath{{#1}_{#2}(#3; #4)}
  }
}
\providecommand{\nc}{\newcommand}
\newmdenv[
  innertopmargin=1mm,
  innerbottommargin=1mm,
]{fcomment@inner}
\newenvironment{fcomment}[1][]
  {%
    \if\relax\detokenize{#1}\relax
      \def\fcomment@name{}%
    \else
      \def\fcomment@name{#1}%
    \fi
    \fcomment@inner \textbf{\fcomment@name}:
  }
  {\endfcomment@inner}
\nc{\commentbox}[1]{\noindent\framebox{\parbox{0.98\linewidth}{#1}}}
\nc{\todo}[1]{\ \\ {\color{red} \fbox{\parbox{0.98\linewidth}{{\sc
          ToDo}:\\  #1}}}}
\newcommand{\acomment}[2]{\begin{fcomment}[#1]#2\end{fcomment}}
\newcommand{\mcomment}[2]{{\color{blue}(#1)}\footnote{#1: #2}} %
\nc{\commentbox}[1]{}
\newcommand{\mcomment}[2]{}
\newcommand{\acomment}[2]{}
\newcommand{\loldnew}[3]{\commentbox{{\textcolor{blue}{\setlength{\fboxsep}{1pt}\fbox{\small
          #1}}} \textcolor{red}{\footnotesize #2}}
  \textcolor{blue}{#3}}
\newcommand{\oldnew}[3]{{\textcolor{blue}{\setlength{\fboxsep}{1pt}\fbox{\small
        #1}}} \st{\footnotesize #2} \textcolor{blue}{#3}}
\newcommand{\loldnew}[3]{#3}
\newcommand{\oldnew}[3]{#3}
 \nc{\tzm}[1]{\mcomment{TZ}{#1}}
 \nc{\tsm}[1]{\mcomment{TS}{#1}}
 \nc{\nilsm}[1]{\mcomment{NV}{#1}}
 \nc{\sdm}[1]{\mcomment{SD}{#1}}
 \nc{\amm}[1]{\mcomment{AM}{#1}}
 \nc{\tz}[1]{\acomment{TZ}{#1}}
 \nc{\thz}[1]{\acomment{TZ}{#1}}
 \nc{\ts}[1]{\acomment{TS}{#1}}
 \nc{\nils}[1]{\acomment{NV}{#1}}
 \nc{\sd}[1]{\acomment{SD}{#1}}
 \nc{\am}[1]{\acomment{AM}{#1}}  
\nc{\tzon}[2][]{\oldnew{TZ}{#1}{#2}} 
\nc{\tson}[2][]{\oldnew{TS}{#1}{#2}}
\nc{\nilson}[2][]{\oldnew{NV}{#1}{#2}}
\nc{\sdon}[2][]{\oldnew{SD}{#1}{#2}}
\nc{\amon}[2][]{\oldnew{AM}{#1}{#2}}
\nc{\tzlon}[2][]{\loldnew{TZ}{#1}{#2}} 
\nc{\tslon}[2][]{\loldnew{TS}{#1}{#2}}
\nc{\nilslon}[2][]{\loldnew{NV}{#1}{#2}}
\nc{\sdlon}[2][]{\oldnew{SD}{#1}{#2}}
\nc{\amlon}[2][]{\oldnew{AM}{#1}{#2}}
\begin{document}
  \maketitle
 \begin{abstract}
   Recently it was shown that the transitive closure of a directed graph can be updated using first-order formulas after insertions and deletions of single edges in the dynamic descriptive complexity framework by Dong, Su, and Topor, and Patnaik and Immerman. In other words, Reachability is in \DynFO.
   
   In this article we extend the framework to changes of multiple edges at a time, and study the Reachability and Distance queries under these changes. We show that the former problem can be maintained in~$\DynFO(+, \times)$ under changes affecting~$\bigO(\frac{\log n}{\log \log n})$ nodes, for graphs with $n$ nodes. If the update formulas may use a majority quantifier then both Reachability and Distance can be maintained under changes that affect $\bigO(\log^c n)$ nodes, for fixed $c \in \N$. Some preliminary results towards showing that distances are in $\DynFO$ are discussed.
 \end{abstract}

\section{Introduction}\label{section:intro}
In today's databases, data sets are often large and subject to frequent changes. 
In use cases where only a fixed set of queries has to be evaluated on such data, it is not efficient to re-evaluate queries after each change, and therefore dynamic approaches have been considered. The idea is that when a database $\calD$ is modified by changing a set  $\Delta \calD$ of tuples then the result of a query is recomputed by using its result on $\calD$, the set $\Delta \calD$, and possibly other previously computed auxiliary data. 

One such dynamic approach is the \emph{dynamic descriptive complexity} approach, formulated independently by Dong, Su, and Topor \cite{DongST95}, as well as Patnaik and Immerman \cite{PatnaikI97}. In their framework the query result and the auxiliary data are represented by relations, and updates of the auxiliary relations are performed by evaluating first-order formulas. The class of queries that can be maintained in this fashion constitutes the class $\DynFO$. The motivation to use first-order logic as the vehicle for updates is that its evaluation is highly parallelizable and, in addition, that it corresponds to the relational algebra which is the core of SQL. Hence, if a query result can be maintained using a first-order update program, this program can be translated into equivalent SQL queries. 

While it is desirable to understand how to update query results under complex changes~$\Delta \calD$, the focus of dynamic descriptive complexity so far has been on single tuple changes. The reason is that for many queries our techniques did not even suffice to tackle this case. 

In recent years, however, we have seen several new techniques for maintaining queries. The Reachability query --- one of the main objects of study in dynamic descriptive complexity --- has been shown to be in $\DynFO$ using a linear algebraic method and a simulation technique~\cite{DattaKMSZ15}. The latter has been advanced into a very powerful tool: for showing that a query can be maintained in $\DynFO$, it essentially suffices to show that it can be maintained for $\log n$ many change steps after initializing the auxiliary data by an $\AC^1$ pre-computation\footnote{Readers not familiar with the circuit class $\AC^1$ may safely think of $\LOGSPACE$ pre-computations.}~\cite{Datta0SVZ17}, where $n$ is the size of the database's (active) domain.   This tool has been successfully applied to show that all queries expressible in monadic second order logic can be maintained in $\DynFO$ on structures of bounded treewidth. 

Those new techniques motivate a new attack on more complex changes $\Delta \calD$. But what are reasonable changes to look at? Updating a query after a change $\Delta \calD$ that replaces the whole database by a new database is essentially equivalent to the static evaluation problem with built-in relations: the stored auxiliary data has to be helpful for every possible new database, and therefore plays the role of built-in relations. Thus changes should be restricted in some way. Three approaches come to mind immediately: to only allow changes of restricted size; to restrict changes structurally; or to define changes in a declarative way. 

In this article we focus on the first approach. Before discussing our results we shortly outline the other two approaches.

There is a wide variety of structural restrictions. For example, the change set $\Delta \calD$ could only change the database locally or in such a way that the changes affect auxiliary relations only locally, e.g., if edges are inserted into distinct connected components it should be easier to maintain reachability. Another option is to restrict $\Delta \calD$ to be of a certain shape, examples studied in the literature are cartesian-closed changes \cite{DongST95} and deletions of anti-chains \cite{DongP97}. 

A declarative mechanism for changing a database is to provide a set of parameterised rules that state which tuples should be changed depending on a parameter provided by a user. For example, a rule $\rho(x,y; z)$ could state that all edges $(x,y)$ shall be inserted into a graph such that $x$ and $y$ are connected to the parameter $z$. First-order logic as a declarative mean to change databases has been studied in \cite{SchwentickVZ17}, where it was shown that undirected reachability can be maintained under insertions defined by first-order formulas, and single tuple deletions.

In this article we study changes of small size with a focus on the Reachability and Distance queries. As can be seen from the discussion above, the former query has been well-studied in diverse settings of dynamic descriptive complexity, and therefore results on its maintainability under small changes serve as an important reference point.

There is another reason to study Reachability under non-constant size changes. Recall that Reachability is complete for the static complexity class $\NL$. 
The result that Reachability is in $\DynFO$ does not imply $\NL \subseteq \DynFO$, as $\DynFO$ is only known to be closed under very weak reductions, called bounded first-order reductions, under which Reachability is not $\NL$-complete \cite{PatnaikI97}.
In short, these reductions demand that whenever a bit of an instance is changed, then only constantly many bits change in the image of the instance under the reduction.
When a query such as Reachability is maintainable under larger changes, then this restriction may be relaxed and might yield new maintainability results for other queries under single edge changes.

In this work we show that Reachability can be maintained under changes of non-constant size. Since our main interest is the study of changes of non-constant size, we assume throughout the article that all classes come with built-in arithmetic and denote, e.g., by $\DynFOar$ the class of queries that can be maintained with first-order updates in the presence of a built-in linear addition and multiplication relations. How our results can be adapted to classes without built-in arithmetic is discussed towards the end of Section \ref{section:framework}. 
\begin{restatable}{rtheorem}{reachabilitylognloglogn}
\label{theorem:reachability_multiple_changes}
Reachability can be maintained in $\DynFOar$ under changes that affect $\bigO(\frac{\log n}{\log \log n})$ nodes of a graph, where $n$ is the number of nodes of the graph.
\end{restatable}

The distance query was shown to be in $\DynFOMaj$ by Hesse \cite{Hesse03}, where the class $\DynFOMaj$ allows to specify updates with first-order formulas that may include majority quantifiers (equivalently, updates can be specified by uniform $\TC^0$ computations). We generalize Hesse's result to changes of size polylogarithmic in the size of the domain.

\begin{restatable}{rtheorem}{distancespolylogn}
\label{theorem:distances_multiple_changes}
  Reachability and Distance can be maintained in %
  $\DynFOMaj(+, \times)$ under changes that affect $\bigO(\log^c n)$ nodes of a graph, where $c \in \N$ is fixed and $n$ is the number of nodes of the graph. 
\end{restatable}

One of the important open questions of dynamic descriptive complexity is whether distances can be maintained in $\DynFO$, even under single edge changes. 
We contribute to the solution of this question by discussing how distances can be maintained in a subclass of $\DynFOMaj(+, \times)$ that is only slightly stronger than $\DynFO(+, \times)$.

\paragraph*{Organization} 
After recapitulating notations in Section \ref{section:preliminaries}, we adapt the dynamic complexity framework to bulk changes in Section \ref{section:framework}. Our main results, maintainability of reachability and distances under multiple changes, are proved in Section \ref{section:reachability_woodbury} and Section \ref{section:distances_woodbury}. We conclude with a discussion in Section \ref{section:conclusion}.

\section{Preliminaries}\label{section:preliminaries}
In this section we review basic definitions and results from finite model theory and databases.

We consider finite relational structures over relational signatures $\schema= \{R_1, \ldots, R_\ell\}$, where each $R_i$ is a relational symbol of arity $\arity(R_i)$. A $\schema$-structure $\db$ consists of a finite domain $\domain$ and relations $R_i^{\db}$ over $\domain$ of arity  $\arity(R_i)$, for each $i \in \{1, \ldots, \ell\}$.
The \emph{active domain} $\adom(\db)$  of a structure $\db$ contains all elements used in some tuple of $\db$. 
Since the motivation to study dynamic complexity originates from database theory, we use terminology from this area. In particular we use the terms ``relational structure'' and ``relational database'' synonymously.

We study the queries Reachability and Distance.
Reachability asks, given a directed graph $G$, for all pairs $s, t$ of nodes such that there is a path from $s$ to $t$ in $G$.
Distance asks for the length of the shortest path between any pair of reachable nodes.

We assume familiarity with first-order logic \FO and refer to \cite{ImmermanDC} for an introduction. 
The logic \FOMaj extends \FO by allowing majority quantifiers. Such quantifiers can ask whether more than half of all elements satisfy a given formula.
We write \FOar and \FOMajar to denote that formulas have access to built-in relations $\leq,+,\times$ which are interpreted as linear order, addition and multiplication on the domain of the underlying structure.
We note that \FOar and \FOMajar are equal to the circuit classes (\StaClass{DLOGTIME}-)uniform $\AC^0$ and $\TC^0$, respectively \cite{BarringtonIS90}.

In $\FO(+, \times)$, each tuple $(a_1, \ldots, a_c)$ encodes a number from $[n^{c}-1]_0 \df \{0, \ldots n^c-1\}$. %
We will henceforth identify tuples over the domain and numbers. 

It is well-known that $\FO(+, \times)$ supports arithmetic on numbers with polylog bits. Furthermore, iterated addition and multiplication for polylog many numbers with polylog bits can be expressed in $\FO(+, \times)$. More precisely:

\begin{lemma}[{cf.~\cite[Theorem 5.1]{HesseAB02}}]\label{lemma:computations_in_fo}
  Suppose $\varphi$ is a $\FO(+, \times)$ formula that defines $r \in \bigO(\log^c n)$ polylog bit numbers $a_1, \ldots, a_r$,  then there are formulas $\psi_+$ and $\psi_\times$ that define the sum and product of $a_1, \ldots, a_r$, respectively.
\end{lemma}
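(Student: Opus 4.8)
The statement is essentially a reformulation of \cite[Theorem 5.1]{HesseAB02}, so the plan is to reduce to that result after pinning down the encoding and doing a little bookkeeping. First I would fix conventions: a \emph{polylog bit number} $a$ is given by an $\FO(+,\times)$ formula $\alpha(\tpl p, \tpl x)$, where $\tpl p$ is a parameter tuple and $\tpl x$ ranges over bit positions $0, \ldots, N-1$ with $N = \bigO(\log^d n)$ for some fixed $d$, such that $\alpha(\tpl p, \tpl x)$ holds iff bit $\tpl x$ of $a$ is $1$; ``$\varphi$ defines $a_1, \ldots, a_r$'' means we have such a formula with an additional index argument $\tpl i$ ranging over $\{1, \ldots, r\}$, $r = \bigO(\log^c n)$. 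Writing $\ell = \bigO(\log^d n)$ for the common bit-length bound, $\sum_i a_i$ has at most $\ell + \lceil \log r \rceil = \bigO(\log^{d'} n)$ bits and $\prod_i a_i$ has at most $r\ell = \bigO(\log^{c+d} n)$ bits, still poly-logarithmic, so $\psi_+$ and $\psi_\times$ are again bit-formulas of the same kind with at most $N' = \bigO(\log^{c+d} n)$ bit positions.

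Next I would reduce both operations to iterated \emph{addition} via Chinese Remainder Representation (CRR). Let $p_1 < p_2 < \cdots < p_k$ be the first $k$ primes with $k$ minimal such that $\prod_j p_j > 2^{N'}$; since the $j$-th prime is $\bigO(j \log j)$ and $\log\big(\prod_{j \le k} p_j\big) = \Theta(p_k)$, one gets $k = \bigO(N'/\log N')$ and $p_k = \bigO(N') = \mathrm{poly}(\log n)$. The point is that all $p_j$, and hence all residues $a \bmod p_j$, are poly-logarithmic numbers, so primality of $p_j$, the values $2^b \bmod p_j$, a fixed primitive root $g_j$ modulo $p_j$ together with the discrete logarithms $\log_{g_j}$, and all arithmetic modulo $p_j$ are $\FO(+,\times)$-definable by brute force over candidate divisors and exponents, each such candidate being a number with far fewer than $\log n$ bits. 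Then: (i) convert each $a_i$ to CRR via $a_i \bmod p_j \equiv \sum_{b : \alpha(i,b)} (2^b \bmod p_j) \pmod{p_j}$; (ii) use $\big(\prod_i a_i\big) \bmod p_j \equiv g_j^{\,e_j}$ with $e_j = \big(\sum_i \log_{g_j}(a_i \bmod p_j)\big) \bmod (p_j-1)$, and $\big(\sum_i a_i\big) \bmod p_j \equiv \sum_i (a_i \bmod p_j) \pmod{p_j}$; (iii) convert the resulting CRR back to a binary bit-formula with the standard $\AC^0$ CRR-to-binary routine, again available because all the moduli are poly-logarithmic. Each of (i)--(iii) reduces to computing, in $\FO(+,\times)$, an iterated sum $\sum_{i=1}^{m} c_i$ of $m = \mathrm{poly}(\log n)$ numbers $c_i$ of $\mathrm{poly}(\log n)$ bits each.

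So the whole lemma reduces to: \emph{iterated addition of poly-logarithmically many poly-logarithmic-bit numbers is in $\FO(+,\times)$}, which is exactly \cite[Theorem 5.1]{HesseAB02}, and at this point I would simply invoke it. For completeness one can sketch why it holds: regroup the summands by bit position, writing $\sum_i c_i = \sum_{b=0}^{\ell-1} s_b\, 2^b$ with $s_b := |\{\, i : \text{bit}_b(c_i) = 1 \,\}| \le m$; each $s_b$ is obtained by \emph{counting within a poly-logarithmic-size set}, which \cite{HesseAB02} show is $\FO(+,\times)$-definable via a second, miniature CRR argument whose moduli have only $\bigO(\log\log n)$ bits, and the remaining sum $\sum_b s_b\, 2^b$ has summands overlapping in only $\bigO(\log\log n)$ bit positions, so it can be carried out by constant-depth column addition with carry propagation. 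The \textbf{main obstacle}, and the reason the lemma is not entirely trivial, is precisely this base case: one cannot just add the numbers in $\lceil \log m \rceil$ rounds of pairwise binary addition, because $\lceil \log m \rceil = \Theta(\log\log n)$ is not constant and $\FO$ cannot iterate it; the CRR/column-count detour is what replaces the iteration by bounded depth. Everything else is routine bookkeeping to check that no intermediate quantity ever exceeds $\mathrm{poly}(\log n)$, keeping all primality tests, discrete logarithms and modular reductions inside $\FO(+,\times)$.
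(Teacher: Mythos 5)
The paper gives no proof of this lemma at all --- it is stated as a known fact with the citation ``cf.~Theorem~5.1 of \cite{HesseAB02}'' and nothing more --- so there is no in-paper argument to compare against, only the cited source. Your reconstruction follows the standard Hesse--Allender--Barrington route: Chinese remaindering with $\mathrm{poly}(\log n)$-size primes, discrete logarithms modulo each small prime to reduce iterated multiplication to iterated addition, and then iterated addition of $\mathrm{poly}(\log n)$ many $\mathrm{poly}(\log n)$-bit numbers as the base case invoked from \cite{HesseAB02}. That is exactly the approach the citation gestures at, so this is the same route as the paper's.

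Two small points worth flagging. First, the discrete-log step silently assumes $a_i \not\equiv 0 \pmod{p_j}$; when some $a_i \equiv 0 \pmod{p_j}$ you must short-circuit and output $0$ for that residue, which is an easy $\FO(+,\times)$-definable case split but should be said. Second, your ``for completeness'' sketch of the base case is a little imprecise on why $s_b = |\{i : \mathrm{bit}_b(c_i)=1\}|$ is $\FO(+,\times)$-definable: the enabling fact is that exact symmetric functions on $\mathrm{poly}(\log n)$ many inputs lie in (DLOGTIME-uniform) $\AC^0$ --- a depth-$\bigO(c)$ circuit of polynomial size suffices for counting $\log^c n$ bits --- rather than a further ``miniature CRR'' pass, which if taken literally would set up a recursion whose depth is $\Theta(\log^* n)$ and hence not constant. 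Since you explicitly defer the base case to \cite{HesseAB02} and present the sketch only as colour, this imprecision is not load-bearing, but the remark about why one cannot simply do $\lceil\log m\rceil$ rounds of pairwise addition is exactly the right intuition and worth keeping.
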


Due to these facts, many calculations can be defined in $\FO(+, \times)$. In particular, primes can be identified, and $\frac{\log n}{\log \log n}$ numbers of $\log \log n$ bits each can be encoded and decoded in $\log n$ bit numbers.

Suppose $p_1, \ldots, p_m$ are primes whose product is $N$. Then each number $A < N$ can be uniquely represented as a tuple $\tpl a = (a_1, \ldots, a_m)$ where $a_i = A \mod p_i$. The tuple $\tpl a$ is called \emph{Chinese remainder representation} (CRR) of $A$. The number $A$ can recovered from $\tpl a$ via $A = \sum_i a_i h_i C_i - r N$, where $C_i = \frac{N}{m_i}$, $h_i$ is the inverse of $C_i$ modulo $m_i$, and $r= \sum_{i=1}^m \lfloor \frac{x_i h_i}{m_i} \rfloor$ \cite[p. 702]{HesseAB02}.
Due to Lemma \ref{lemma:computations_in_fo}, in $\FO(+,\times)$ one can encode and decode $\bigO(\log n)$ bit numbers into their CRR defined by $\bigO(\log n)$ primes with $\bigO(\log \log n)$ bits.

In this article we use basic notions and results from linear algebra which are introduced when they are needed. 
Throughout the article, a matrix with $\bigO(n^d)$ rows and columns and entries in $[n^c]_0$ will be represented by a relation $R$ that contains a tuple $(\bar r, \bar c, \bar v)$ if and only if the value at row $\bar r$ and column $\bar c$ is $\bar v$.
 
\section{Dynamic Framework for Multiple Changes}\label{section:framework}
We briefly repeat the essentials of dynamic complexity, closely following \cite{SchwentickZ16}, and discuss generalisations due to changes of non-constant size. 

The goal of a dynamic program is to answer a given query on an \emph{input database} subjected to changes that insert or delete tuples. The program may use an auxiliary data structure represented by an \emph{auxiliary database} over the same domain. Initially, both input and auxiliary database are empty; and the domain is fixed during each run of the program. 

\subparagraph*{Changes}
In previous work, changes of single tuples have been represented as explicit parameters for the formulas used to update the auxiliary relations. 
Non-constant size changes cannot be represented in this fashion. An alternative is to represent changes implicitly by giving  update formulas access to the old input database as well as to the changed input database \cite{GradelS12}. Here, we opt for this approach. %

For a database $\db$ over domain $\domain$ and schema $\schema$, a change $\Delta \db$ consists of sets $R^{+}$ and $R^{-}$ of tuples for each relation symbol $R \in \schema$. The result $\db + \Delta \db$ of an application of the change $\Delta \db$  to $\db$ is the input database where $R^{\db}$ is changed to $(R^{\db} \cup R^{+}) \setminus R^{-}$. The \emph{size} of $\Delta \db$ is the total number of tuples in relations $R^{+}$ and $R^{-}$ and the set of \emph{affected elements} is the (active) domain of tuples in $\Delta \db$.

\subparagraph*{Dynamic Programs and Maintenance of Queries} A dynamic program consists of a set of update rules that specify how auxiliary relations are updated after changing the input database. An \emph{update rule} for updating an $\ell$-ary auxiliary relation $T$ after a change is a first-order formula $\varphi$ over schema $\tau \cup \tau_{\text{aux}}$ with $\ell$ free variables, where $\tau_{\text{aux}}$ is the schema of the auxiliary database. After a change $\Delta \db$, the new version of $T$ is $T \df \{ \vec a \mid (\db + \Delta \db, \aux) \models \varphi(\vec a)\}$ where $\db$ is the old input database and $\aux$ is the current auxiliary database. Note that a dynamic program can choose to have access to the old input database by storing it in its auxiliary relations. %

For a state $\state = (\db, \aux)$ of the dynamic program $\prog$ with input database $\db$ and auxiliary database $\aux$ we denote the state of the program after applying a change sequence $\alpha$ and updating the auxiliary relations accordingly by $\prog_\alpha(\state)$. 

The dynamic program \emph{maintains} a $q$-ary query $\query$ under changes that affect $k$ elements (under changes of size $k$, respectively)  if it has a $q$-ary auxiliary relation $Q$ that at each point stores the result of $\query$ applied to the current input database. More precisely, for each non-empty sequence $\alpha$ of changes that affect $k$ elements (changes of size $k$, respectively), the relation $Q$ in $\prog_\alpha(\state_\emptyset)$ and $\query(\alpha(\db_\emptyset))$ coincide, where $\db_\emptyset$ is an empty input structure, $\state_\emptyset$ is the auxiliary database with empty auxiliary relations over the domain of $\db_\emptyset$, and $\alpha(\db_\emptyset)$ is the input database after applying $\alpha$. 

If a dynamic program maintains a query, we say that the query is in $\DynFO$.
Similarly to \DynFO one can define the class of queries \DynFOar that allows for three particular auxiliary relations that are initialised as a linear order and the corresponding addition and multiplication relations. Other classes are defined accordingly.

For many natural queries $\query$, in order to show that $\query$ can be maintained, it is enough to show that the query can be maintained for a bounded number of steps. Intuitively, this is possible for queries for which isolated elements do not influence the query result, if there are many such elements. Formally, a query $\query$ is \emph{almost domain-independent} if there is a $c \in \N$ such that $\restrict{\query(\calA)}{(\adom(\calA) \cup B)} = \query(\restrict{\calA}{(\adom(\calA) \cup B)})$ for all structures $\calA$ and sets $B \subseteq A \setminus \adom(\calA)$ with $|B| \geq c$.

A query $\query$ is \emph{$(\calC,f)$-maintainable}, for some complexity class $\calC$ and some function~\mbox{$f:\N\to\R$}, if there is a dynamic program $\prog$ and a $\calC$-algorithm $\mathbb A$ such that for each input database $\db$ over a domain of size $n$, each linear order $\leq$ on the domain, and each change sequence $\alpha$ of length $|\alpha| \leq f(n)$, the relation $Q$ in $\prog_\alpha(\state)$ and $\query(\alpha(\db))$ coincide, where $\state = (\inp, \mathbb A(\inp,\leq))$.

The following theorem is a slight adaption of Theorem 3 from \cite{Datta0SVZ17} and can be proved analogously. 

\begin{theorem}\label{theorem:fewerChanges}
    Every  $(\AC^i,\log^i n)$-maintainable, almost domain-independent query is in $\DynFOar$.
\end{theorem}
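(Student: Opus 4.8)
The plan is to carry over the proof of Theorem~3 of \cite{Datta0SVZ17}, which is exactly the case $i=1$, and to observe that its construction is parametric in the circuit depth of the precomputation and in the bound on the length of change sequences. The heart of that proof is the \emph{muddling} technique: a single change step of a \DynFOar{} program updates the auxiliary relations only by a fixed first-order formula, so the $\calC$-algorithm $\mathbb A$ that initialises the auxiliary data of the bounded program cannot be re-executed from scratch in one step; instead the dynamic program keeps $\mathrm{poly}(n)$ time-staggered copies of the bounded-maintenance program $\prog$, spreads the re-execution of $\mathbb A$ for each copy over many consecutive change steps, and arranges matters so that at every point in time at least one copy is ``live'' --- its auxiliary data is valid for the current input and $\prog$ is still within its promised window of correctness --- so that the query relation $Q$ can be defined by reading off the answer from a live copy. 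The argument does not depend on the nature of the changes (single tuples or bulk), only on the promise that $\prog$ copes with $\log^i n$ of them.

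Concretely I would proceed as follows. Fix the \StaClass{DLOGTIME}-uniform circuit family of polynomial size and depth $d(n)\in\bigO(\log^i n)$ that computes $\mathbb A$. For each copy, store in auxiliary relations a \emph{frozen snapshot} of the input, taken when the copy was last (re)started, together with counters; maintaining such a snapshot and the counters under changes is trivially first-order. Simulate $\mathbb A$ on the snapshot \emph{one layer per change step}: by uniformity, the gates of layer $j{+}1$ and their values are defined by one fixed \FOar{} formula from layer~$j$ --- this is where the built-in relations $+,\times$ are essential, both to address the $\bigO(\log^i n)$ layers and the $\mathrm{poly}(n)$ gates and to evaluate the \StaClass{DLOGTIME} connection language of the family --- so after $d(n)\in\bigO(\log^i n)$ change steps the copy holds $\mathbb A$ applied to its snapshot. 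Then stagger the copies so that their $\bigO(\log^i n)$-step reinitialisation windows tile the unbounded change sequence and, at each step, some copy is live: it has absorbed at most $\log^i n$ changes on top of its snapshot with $\prog$, so the correctness guarantee of $(\AC^i,\log^i n)$-maintainability applies; a live copy is identified in \FOar{} from the stored counters, and $Q$ copies its answer relation. Finally, use almost domain-independence to bridge the mismatch created by a restart: a snapshot and the subsequent re-embedding may change which elements are isolated, and almost domain-independence guarantees this does not affect the query answer once the active domain is large enough, which is ensured by a constant-length warm-up at the start of the run; and, in the ``start small and grow fast'' variant, by letting the effective domain size on which $\prog$ operates grow, one makes the budget $\log^i n$ exceed $d(n)$ by a comfortable margin.

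The step I expect to be the main obstacle is the scheduling bookkeeping: one must pin down the number of copies, their restart offsets and the block lengths so that simultaneously (i) every reinitialisation finishes before its copy is needed as a live copy, (ii) whenever a copy is declared live it has absorbed at most $\log^i n$ changes since its snapshot, and (iii) the live intervals cover all of time. Making (ii) coexist with (i) is delicate precisely because the precomputation depth $d(n)$ is itself of order $\log^i n$, and this is exactly the point where a larger constant in the length bound or a growing effective domain has to be invoked. This is the same juggling performed in \cite{Datta0SVZ17} for $i=1$, and because the depth bound $\bigO(\log^i n)$ and the length bound $\log^i n$ scale in lockstep, the construction goes through for every fixed $i$; the only routine additional check is that the layer-by-layer simulation faithfully implements the uniformity machine of the $\AC^i$ family within \FOar, which is immediate from the fact that \FOar{} captures \StaClass{DLOGTIME}-uniform~$\AC^0$.
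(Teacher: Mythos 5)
Your proposal correctly identifies the muddling technique of \cite{Datta0SVZ17} as the core of the argument, and the paper itself provides no proof beyond stating that the theorem is a slight adaptation of Theorem~3 there; your account of time-staggered copies, layer-by-layer simulation of the $\StaClass{DLOGTIME}$-uniform $\AC^i$ circuit in \FOar, and the role of almost domain-independence in bridging restarts matches that reference. The scheduling tension you flag --- reinitialisation depth $d(n)\in\bigO(\log^i n)$ competing with the budget $\log^i n$ --- is resolved in the standard way by evaluating constantly many circuit layers per change step (which is still a single fixed \FOar\ formula), so that reinitialisation occupies only a constant fraction of the $\log^i n$ window and the live intervals of constantly many staggered copies tile all of time.
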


\subparagraph*{The Role of the Domain and Arithmetic} In order to focus on the study of changes of non-constant size, we choose a simplified approach and include arithmetic in our setting. We state our results for \DynFOar and according classes to make it clear that we assume the presence of a linear order, addition and multiplication relation on the whole domain at all times.\footnote{Different assumptions have been made in the literature. In \cite{PatnaikI94}, Patnaik and Immerman assume only a linear order to be present, while full arithmetic is assumed in \cite{PatnaikI97}. Etessami observed that arithmetic can be built up dynamically, and therefore subsequent work usually assumed initially empty auxiliary relations, see e.g. \cite{DattaKMSZ15, Datta0SVZ17}. In the setting of first-order incremental evaluation systems usually no arithmetic is assumed to be present \cite{DongST95}.}

We shortly discuss the consequences of not assuming built-in arithmetic on our results. For single tuple changes, the presence of built-in arithmetic essentially gives no advantage.

\begin{proposition}[{\cite[Theorem 4]{DattaKMSZ15}, formulation from \cite[Proposition 2]{Datta0SVZ17}}]\label{proposition:acisfo:almost}
  If a query $\query\in \DynFOar$ under single-tuple changes is almost domain-independent, then also $\query\in \DynFO$. 
\end{proposition}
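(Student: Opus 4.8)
The plan is to simulate a given $\DynFOar$ program by a $\DynFO$ program that, instead of relying on built-in arithmetic, \emph{builds} a linear order together with matching addition and multiplication relations dynamically on (essentially) the active domain, and runs the given program on a suitably extended substructure. This is feasible precisely because the changes are single-tuple --- so each step enlarges the active domain by at most a constant number $r$ of elements, where $r$ is the maximal arity of an input relation --- and because the almost domain-independence of $\query$ permits running everything on a domain only constantly larger than the active domain.

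Concretely, the $\DynFO$ program maintains the monotone set $D^*$ of all elements that have ever been active, together with relations $\leq^*$, $+^*$, $\times^*$ representing a linear order, addition and multiplication on $D^*$, where the $i$-th element of $D^*$ stands for the number $i$. When a tuple $(a_1,\dots,a_r)$ is inserted, the $a_j$ not yet in $D^*$ are appended at the top of $\leq^*$, ordered among themselves by first occurrence in the tuple --- an $\FO$-definable choice extending the current order --- and since deletions never shrink $D^*$, the arithmetic relations only ever have to be updated under the appending of $\bigO(1)$ new maximal elements. That this update is first-order definable is Etessami's observation: for $\leq^*$ it is immediate; for $+^*$, ``$a+b$ equals a specified new maximum'' says that $b$ is reached from the $\leq^*$-complement of $a$ in a constant, $\FO$-definable number of successor steps; and $\times^*$ is handled by a bounded unfolding of $a\cdot b = a\cdot(b-1)+a$, of constant depth because the amount by which a product of two old elements exceeds the old maximum --- when that product falls among the constantly many new slots --- strictly decreases along the unfolding. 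On top of $D^*$ the simulated domain is extended by $c$ further \emph{virtual}, isolated, maximal elements, where $c$ is the almost-domain-independence constant of $\query$, enlarged so that $c\geq q$ for the $q$-ary query $\query$. Since $c$ is constant this costs only a constant blow-up: a quantifier over the virtual domain splits into one over $D^*$ plus $c$ explicit cases; arithmetic atoms mentioning virtual elements are resolved as above; and an $\ell$-ary auxiliary relation over the virtual domain is stored as the constantly many $(\leq\ell)$-ary relations over $D^*$ obtained by fixing which components equal which virtual element. Thus a full run of the given program on $\restrict{\calA}{D^*}$ extended by $c$ isolated points can be maintained in $\DynFO$ using only relations over the real structure on $D^*$.

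It remains to read off the query relation $Q$. The simulated answer relation holds $\query$ of $\restrict{\calA}{D^*}$ extended by $c$ isolated points, which by almost domain-independence and genericity agrees with $\query(\calA)$ on all tuples over $D^*$, at least when the domain has $\geq c$ never-activated elements; a tuple $\bar a\in A^q$ with components outside $D^*$ has those components isolated in $\calA$, so by genericity $\bar a\in\query(\calA)$ iff the tuple $\bar a'$ obtained from $\bar a$ by moving those components --- preserving their equality pattern --- to virtual elements lies in the simulated answer relation, and a single generic first-order rule lifts the simulated answer to $Q$ on all of $A^q$. The degenerate case of fewer than $c$ never-activated elements is absorbed by also maintaining, in parallel, the analogous simulation with exactly $j$ virtual isolated elements for each $j\in\{0,\dots,c-1\}$ (a constant blow-up) and selecting the right one --- the exact number of never-activated elements being $\FO$-testable against any constant; at the first change $D^*$ has at most $r$ elements and the arithmetic relations are definable outright, so the bootstrap is trivial. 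I expect the only genuinely technical point to be the first-order definability of the updates of $+^*$ and $\times^*$ under appending $\bigO(1)$ new maxima, multiplication being the delicate case that needs the bounded unfolding indicated above; maintaining $D^*$, the constant-factor bookkeeping for virtual elements, the genericity-based lifting, and the degenerate case are routine. The almost domain-independence hypothesis enters at a single place but essentially: it is exactly what licenses running the whole simulation on $D^*$ plus a constant number of extra points instead of on all of $A$.
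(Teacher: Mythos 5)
Your proposal is correct and follows essentially the same route the paper indicates: maintain a linear order and arithmetic on the activated domain via Etessami's technique (which works under single-tuple changes because at most constantly many new elements are activated per step), then use almost domain-independence to run the given program on the activated domain padded by constantly many virtual isolated elements, lifting the simulated answer to the full domain by genericity. The paper only sketches this idea and defers the details to the cited works, but your fleshed-out argument matches that sketch.
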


This result relies on the fact that one can maintain a linear order and arithmetic on the activated domain in \DynFO under single-tuple changes \cite{Etessami98}, that is, on all elements that were in the active domain at some point of time. Under larger changes this is a priori not possible, as then one has to express in $\FO$ a linear order and arithmetic on the elements that enter the active domain.

An alternate approach to assuming the presence of built-in arithmetic is to demand that changes \emph{provide} additional information on the changed elements, for example, that they provide a linear order and arithmetic on the domain of the change. Using this approach, our results can be stated in terms of \DynFO and \DynFOMaj with the sole modification that sizes of changes are given relative to the size of the activated domain instead of with respect to the size of the whole domain. In this fashion our results also translate to the setting of first-order incremental evaluation systems of Dong, Su, and Topor \cite{DongST95}, where the domain can grow and shrink.

\section{Reachability under Multiple Changes}\label{section:reachability_woodbury}
In this section we prove that Reachability can be maintained under multiple changes. 

\reachabilitylognloglogn*

The approach is to use the well-known fact that Reachability can be reduced to the computation of the inverse of a matrix, and to invoke the Sherman-Morrison-Woodbury identity (cf.~\cite{HendersonS81}) to update the inverse. This identity essentially reduces the update of inverses after a change affecting $k$ nodes to the computation of an inverse of a $k \times k$ matrix. 

The challenge is to define the updates in $\FO(+, \times)$. The key ingredients here are to compute inverses with respect to many primes, and throw away primes for which the inverse does not exist. As, by Theorem \ref{theorem:fewerChanges}, it suffices to maintain the inverse for $\log^c n$ many steps for some $c$ to be fixed later (see proof of Theorem \ref{theorem:non_zero_inverse}), some primes remain valid if one starts from sufficiently  -- but polynomially -- many primes. We show that the inverse of $k \times k$ matrices over $\Z_p$ can be defined in $\FO(+, \times)$ for $k = \frac{\log n}{\log \log n}$.    

Theorem \ref{theorem:reachability_multiple_changes} in particular generalizes the result that Reachability can be maintained under single edge changes \cite{DattaKMSZ15}; our proof is an alternative to the proof presented in the latter work. In \cite{DattaKMSZ15}, maintenance of Reachability is reduced to the question whether a matrix $A$ has full rank, and it was shown that the rank %
can be maintained by storing and updating an invertible matrix $B$ and a matrix $D$ from which the rank can be easily extracted, such that $B \cdot A = D$.

\subsection{Reachability and Matrix Inverses}

There is a path from $s$ to $t$ in a graph $G$ of size $n$ with adjacency matrix $A_G$ if and only if the $s$-$t$-entry of the matrix $(nI - A_G)^{-1}$ is non-zero. This follows from the equation $(nI - A_G)^{-1} = \frac{1}{n} \sum_{i = 0}^{\infty} (\frac{1}{n} A_G)^i$ and the fact that $A_G^i$ counts the number of paths from $s$ to $t$ of length $i$.
Notice that $A \df nI - A_G$ is invertible as matrix over $\Q$ for every adjacency matrix $A_G$ since it is strictly diagonally dominant \cite[Theorem 6.1.10] {HornJ12}.

When applying a change $\Delta G$ to $G$ that affects $k$ nodes, the adjacency matrix of $G$ is updated by adding a suitable change matrix $\Delta A$ with at most $k$ non-zero rows and columns to $A$. Thus Theorem \ref{theorem:reachability_multiple_changes} follows from the following proposition\footnote{Due to lack of space some details are hidden here. The described reduction maps the empty graph to the matrix whose diagonal entries are $n$. Values of the inverse for this matrix cannot be determined in $\FO$, and thus one does not immediately get the desired result for Reachability. This issue can be circumvented by mapping to matrices with only some non-zero entries on the diagonal, and studying the inverse of the matrices induced by non-zero diagonal entries.}.

\begin{theorem} \label{theorem:non_zero_inverse}
  When $A \in \Z^{n \times n}$ takes values polynomial in $n$ and is assumed to stay invertible over $\Q$, then non-zeroness of entries of $A^{-1} \in \Q^{n \times n}$ can be maintained in $\DynFOar$ under changes that affect $\bigO(\frac{\log n}{\log \log n})$ rows and columns.
\end{theorem}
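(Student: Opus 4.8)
The plan is to maintain the inverse $A^{-1}$ under changes affecting $k = \bigO(\frac{\log n}{\log\log n})$ rows and columns by combining the Sherman--Morrison--Woodbury identity with a Chinese-remainder representation over polynomially many primes, and then invoking Theorem~\ref{theorem:fewerChanges} so that the auxiliary data only has to be maintained correctly for $\log^c n$ steps after an $\AC^i$ precomputation. First I would fix a family of $\bigO(n)$ primes $p_1,\dots,p_m$, each of $\bigO(\log\log n)$ bits (these are $\FO(+,\times)$-definable), and store, for each prime $p_j$, the matrix $A^{-1} \bmod p_j$ as well as a flag indicating whether this residue is "valid", i.e.\ whether $\det A$ is invertible modulo $p_j$ \emph{and} has remained so throughout the change sequence seen so far. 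A value $M_{s,t}$ of $A^{-1}$ over $\Q$ can be recovered from sufficiently many valid residues: since $A$ has polynomially bounded entries and stays invertible, Cramer's rule writes each entry as a ratio of two determinants each of absolute value at most $n^{\bigO(n)}$, hence with $\bigO(n\log n)$ bits; reconstructing such a rational from its residues modulo $\bigO(n)$ valid primes (each residue of the numerator and denominator, cleared by the common denominator $\det A$) and testing non-zeroness is then an $\FO(+,\times)$ computation by Lemma~\ref{lemma:computations_in_fo} and the CRR facts recalled in Section~\ref{section:preliminaries}. The point of Theorem~\ref{theorem:fewerChanges} is that at most $\log^c n$ change steps need to be handled, each killing at most the primes dividing one $k\times k$ determinant (a number of $\bigO(k \log n) = \bigO(\log^2 n)$ bits, hence with at most $\bigO(\log^2 n)$ prime factors), so starting from $\Theta(n)$ primes a polynomial majority survives and reconstruction stays possible.

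Next I would carry out the update step modulo a single fixed prime $p = p_j$. Write the change to the adjacency matrix as $A \mapsto A + \Delta A$ where $\Delta A = U V$ with $U \in \Z^{n\times k}$, $V \in \Z^{k\times n}$ supported on the $\le k$ affected coordinates; concretely one may take $U$ to be the $n\times k$ matrix selecting the affected rows/columns and $V$ encoding the new entries. The Sherman--Morrison--Woodbury identity gives
\[
(A + UV)^{-1} = A^{-1} - A^{-1} U \,\bigl(I_k + V A^{-1} U\bigr)^{-1}\, V A^{-1},
\]
valid over $\Z_p$ whenever the $k\times k$ matrix $W \df I_k + V A^{-1} U$ is invertible mod $p$; if $W$ is singular mod $p$ we simply mark the residue invalid and drop it. Computing $A^{-1}U$, $VA^{-1}$, and $W$ from the stored $A^{-1}\bmod p$ is a constant number of matrix products, each an iterated sum of $\le n$ products of $\bigO(\log\log n)$-bit numbers, hence $\FO(+,\times)$-definable via Lemma~\ref{lemma:computations_in_fo}; the final outer products and subtraction likewise. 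Quantifying over all $m$ primes in parallel is unproblematic since $m = \bigO(n)$ and a prime is just a tuple of the domain. The validity flag is updated monotonically: a residue that becomes invalid at some step stays invalid.

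The main obstacle, and the technical heart of the argument, is the single remaining ingredient: \textbf{defining the inverse of a $k\times k$ matrix over $\Z_p$ in $\FO(+,\times)$ for $k = \frac{\log n}{\log\log n}$}. This is needed to evaluate $W^{-1}$ in the Woodbury formula. The idea is that such a matrix is tiny: it is described by $k^2 = \bigO\!\bigl((\frac{\log n}{\log\log n})^2\bigr) = o(\log^2 n)$ entries, each a number in $[p] \subseteq [\,\mathrm{polylog}\,n\,]$, so the \emph{entire} matrix can be encoded into $o(\log^2 n)$ bits, and with a slightly finer accounting --- packing $\frac{\log n}{\log\log n}$ numbers of $\log\log n$ bits into a single $\log n$-bit number, as the preliminaries note is $\FO(+,\times)$-doable --- into a tuple of a fixed number of domain elements. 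One then needs that the map sending (an encoding of) a $k\times k$ matrix and a prime $p$ to (an encoding of) its inverse mod $p$, or a marker if none exists, is $\FO(+,\times)$-definable; this follows because inverting a $k\times k$ matrix over $\Z_p$ (e.g.\ by the adjugate-over-determinant formula, a sum of $k!$ signed products, with $k! = 2^{o(\log n)} = n^{o(1)}$ many terms each a product of $k = o(\log n)$ numbers of $\log\log n$ bits) fits inside the iterated-sum/product bounds of Lemma~\ref{lemma:computations_in_fo} once everything is viewed as arithmetic on polylog-bit numbers indexed by tuples. Pinning down the exact encoding so that all of Woodbury's bookkeeping ($U$, $V$, the products $A^{-1}U$ and $VA^{-1}$, the assembly of $W$, the inversion, and the rank-$k$ correction) is simultaneously $\FO(+,\times)$-definable --- and checking that $k = \frac{\log n}{\log\log n}$ is exactly the threshold at which this works while $k = \log n$ would not --- is where the real care is required; everything else is an assembly of standard facts.
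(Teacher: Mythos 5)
Your overall architecture (Woodbury identity over many primes, plus Theorem~\ref{theorem:fewerChanges} to reduce to polylogarithmically many steps) matches the paper's, but there are two genuine gaps, and the second one is fatal to the argument as written.

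\textbf{The prime accounting is off by polynomial factors.} You propose $\bigO(n)$ primes of $\bigO(\log\log n)$ bits. There are only $\mathrm{poly}(\log n)$ primes of that size, so this family does not exist; the paper takes the first $2n^3$ primes, each of $\bigO(\log n)$ bits. More importantly, your bound on how many primes die per step is wrong. A prime $p$ becomes invalid when $\det(A+\Delta A)\equiv 0 \pmod p$, and since the entries of $A+\Delta A$ are polynomial in $n$, this determinant is a nonzero integer of magnitude up to $2^{\Theta(n^2)}$, hence can have $\Theta(n^2)$ prime divisors. It is not controlled by the size of the $k\times k$ Woodbury block $W$: $\det(W)$ lives in $\Q$ with entries built from $A^{-1}$, and its numerator over $\Z$ is again huge, so the ``$\bigO(\log^2 n)$ bits, hence $\bigO(\log^2 n)$ prime factors'' estimate does not apply. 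The paper's choice of $2n^3$ primes is calibrated precisely to lose $n^2$ per step over $\log^2 n$ steps, leaving $n^3$ valid primes. Your allocation of $\bigO(n)$ primes would be exhausted on the very first step. Relatedly, you do not need (and cannot afford) full CRR reconstruction of $n^{\bigO(1)}$-bit integers in $\FO(+,\times)$ — the paper's encoding/decoding facts only cover $\bigO(\log n)$-bit numbers, and decoding larger numbers needs $\TC^0$. The paper sidesteps reconstruction entirely: it declares an entry nonzero iff \emph{some} valid prime has a nonzero residue, which is correct because a nonzero numerator of magnitude $\leq 2^{n^2}$ vanishes modulo at most $n^2$ of the $\geq n^3$ surviving primes.

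\textbf{The $k\times k$ determinant cannot be computed na\"ively.} You identify the inversion of the $k\times k$ Woodbury block as ``the technical heart'' and then dispose of it via the Leibniz/adjugate expansion, claiming $k!=2^{o(\log n)}=n^{o(1)}$ terms. This is incorrect: with $k=\frac{\log n}{\log\log n}$ one has $\log(k!) \sim k\log k = \log n - \Theta\!\left(\frac{\log n\,\log\log\log n}{\log\log n}\right)$, so $k!=n^{1-o(1)}$ is genuinely polynomial in $n$. Summing that many products is iterated addition of polynomially many numbers, which is a counting/majority operation outside $\AC^0$, and well outside the $\bigO(\log^c n)$-term regime of Lemma~\ref{lemma:computations_in_fo}. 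This is exactly the obstacle the paper flags at the start of Section~\ref{section:determinant_in_fo} (``a sum of $k!\in n^{\bigO(1)}$ monomials and therefore cannot be na\"{\i}vely defined in $\FO(+,\times)$''). The paper's actual solution is nontrivial: it guesses (as a $\bigO(\log n)$-bit string) a column permutation making all leading principal minors nonzero (Lemma~\ref{lemma:permutations}), plus the scalars $b_i,d_i$ from a self-reducibility characterization (Proposition~\ref{proposition:determinant_characterization}), and verifies all of these simultaneously using only zero-tests for small determinants, which reduce to rank deficiency and can be witnessed by a guessed $\bigO(\log n)$-bit linear dependency. You would need an argument at this level of care; the adjugate formula does not fit.

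Your outer Woodbury machinery, the decomposition of $\Delta A$ into a thin product, and the use of Theorem~\ref{theorem:fewerChanges} are all fine and aligned with the paper, but the two points above are missing ideas rather than details: without a sub-$\AC^0$-friendly determinant algorithm and without the correct prime budget and non-zeroness test, the proof does not go through.
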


Each change affecting $\bigO(\frac{\log n}{\log \log n})$ rows and columns can be partitioned into constantly many changes that affect $k \df \frac{\log n}{\log \log n}$ rows and columns. We therefore concentrate on such changes in the following. 

The change matrix $\Delta A$ for a change affecting $k$ rows and columns has at most $k$ non-zero rows and columns and can therefore be decomposed into a product $UBV$ of suitable matrices $U, B,$ and $V$, where $U$, $B$, and $V$ have dimensions $n \times k$, $k \times k$, and $k \times n$, respectively.

\begin{lemma}\label{lemma:change_decomposition}
  Fix a ring $R$. Suppose $M \in R^{n \times n}$ with non-zero rows $r_{i_1}, \ldots, r_{i_k}$ and columns   $c_{j_1}, \ldots, c_{j_k}$. Then $M = UBV$ with $U \in R^{n \times k}, B \in R^{k \times k},$ and $V \in R^{k \times n}$ where 
  \begin{enumerate}
    \item $B$ is obtained from $M$ by removing all-zero rows and columns.
    \item $U = \begin{pmatrix}\tpl u_1 \\ \vdots \\ \tpl u_n \end{pmatrix}$ where $\tpl u_i = \begin{cases}
                                                                                                \tpl 0^T \text{ if } i \notin \{i_1, \ldots, i_k\}\\
                                                                                                \tpl e_m^T  \text{ if } i = i_m
                                                                                              \end{cases}$
    \item $V = \begin{pmatrix}\tpl v_1, \ldots, v_n \end{pmatrix}$ where $\tpl v_j = \begin{cases}
                                                                                                \tpl 0 \text{ if } j \notin \{j_1, \ldots, j_k\}\\
                                                                                                \tpl e_m  \text{ if } j = j_m
                                                                                              \end{cases}$
  \end{enumerate}
  Here, $\tpl e_m$ denotes the $m$-th unit vector. 
\end{lemma}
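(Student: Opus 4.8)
The statement is a purely linear-algebraic bookkeeping fact, so the plan is simply to verify that the product $UBV$ reproduces $M$ entry by entry. First I would fix notation: let $S_r = \{i_1,\dots,i_k\}$ be the indices of the (potentially) non-zero rows and $S_c = \{j_1,\dots,j_k\}$ those of the non-zero columns, indexed in increasing order. By definition $B$ is the $k\times k$ submatrix $B_{m,m'} = M_{i_m,j_{m'}}$, obtained by deleting the all-zero rows and columns of $M$; note that every entry of $M$ outside $S_r\times S_c$ is zero, since such an entry lies in an all-zero row or an all-zero column. The matrix $U$ is the $n\times k$ ``row-selector'': its $i_m$-th row is $\tpl e_m^T$ and all other rows are zero, i.e.\ $U_{i,m} = 1$ iff $i = i_m$. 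Dually $V$ is the $k\times n$ ``column-selector'': $V_{m,j} = 1$ iff $j = j_m$.

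Next I would compute the product. For arbitrary $i,j$,
\[
  (UBV)_{i,j} \;=\; \sum_{m=1}^{k}\sum_{m'=1}^{k} U_{i,m}\, B_{m,m'}\, V_{m',j}.
\]
The factor $U_{i,m}$ is nonzero only if $i\in S_r$, say $i=i_a$, and then only for $m=a$; likewise $V_{m',j}$ is nonzero only if $j\in S_c$, say $j=j_b$, and then only for $m'=b$. Hence if $i\notin S_r$ or $j\notin S_c$ the double sum is empty and $(UBV)_{i,j}=0=M_{i,j}$ by the observation above. If $i=i_a$ and $j=j_b$ the sum collapses to the single term $B_{a,b} = M_{i_a,j_b} = M_{i,j}$. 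In all cases $(UBV)_{i,j}=M_{i,j}$, so $M=UBV$, with the stated dimensions $n\times k$, $k\times k$, $k\times n$.

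There is essentially no obstacle here — the argument is a one-line index manipulation once the three matrices are written out — so the only thing to be careful about is the convention that all entries of $M$ outside $S_r\times S_c$ vanish (this is where the hypothesis that the non-zero rows and the non-zero columns are exactly those listed is used), and the bookkeeping that $U$ and $V$ act as the correct selection/embedding maps so that $UB$ places the rows of $B$ back into positions $i_1,\dots,i_k$ and right-multiplication by $V$ places the columns back into positions $j_1,\dots,j_k$. The point of the lemma for the rest of the paper is that $U$ and $V$ are fixed $0/1$ matrices depending only on the set of affected rows/columns, so that the whole change matrix $\Delta A$ is captured by the small $k\times k$ matrix $B$, which is what the Sherman--Morrison--Woodbury update will exploit.
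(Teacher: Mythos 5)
Your proof is correct, and it is the natural argument: an entry-by-entry verification that $U$ and $V$ act as row/column selectors so that $(UBV)_{i,j}$ collapses to $B_{a,b}=M_{i_a,j_b}$ when $i=i_a$, $j=j_b$, and to $0$ otherwise (matching $M_{i,j}=0$ since such an $(i,j)$ lies in an all-zero row or column). The paper states this lemma without proof, treating it as routine bookkeeping, so there is no alternative approach to contrast against; your write-up simply supplies the omitted verification.
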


By the Sherman-Morrison-Woodbury identity (cf. \cite{HendersonS81}), the updated inverse can therefore be written as
\[(A + \Delta A)^{-1} = (A + UBV)^{-1} = A^{-1}-A^{-1}U(I+BVA^{-1}U)^{-1}BVA^{-1} \label{equation:woodbury} \tag*{$(\star)$}\] 
The inverse of a matrix in $\Z^{n \times n}$ with entries that are polynomial in $n$ is a matrix in $\Q^{n \times n}$ with entries $\frac{a}{b}$ that may involve numbers exponential in $n$. In particular computations cannot be performed in $\FO(+, \times)$ directly. For this reason all computations will be done modulo many primes, and non-zeroness of entries of $A^{-1}$ is extracted from these values. 

Let us first see how  to update $(A + \Delta A)^{-1}$ modulo a prime $p$ under the assumption that both $A \pmod p$ and $A + \Delta A \pmod p$ are invertible. Observe that $(I+BVA^{-1}U)^{-1}$ is a $k \times k$ matrix and therefore an essential prerequisite to compute $(A + \Delta A)^{-1} \pmod p$ is to be able to define the inverse of such small matrices. That this is possible follows from the following lemma and the fact that $[D^{-1}]_{ij} = (-1)^{i+j} \frac{\det D_{ji}}{\det D}$ for invertible $D \in \Z^{k \times k}_p$. Here $[C]_{ij}$ denotes the $ij$-th entry of a matrix $C$ and $C_{ji}$ denotes the submatrix obtained by removing the $j$-th row and the $i$-th column.

\begin{theorem}\label{theorem:determinant_in_fo}
  Fix a domain of size $n$ and a prime $p \in \bigO(n^c)$. The value of the determinant of a matrix $A \in \Z^{k \times k}_p$ for $k = \frac{\log n}{\log \log n}$ can be defined in $\FO(+, \times)$.
\end{theorem}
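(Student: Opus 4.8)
The starting point is the Leibniz expansion
$\det(A) = \sum_{\sigma \in S_k} \mathrm{sgn}(\sigma)\prod_{i=1}^{k} A[i,\sigma(i)] \pmod p$,
together with the arithmetic that makes the value $k = \tfrac{\log n}{\log\log n}$ natural: since $k\log k \le \log n$ we have $k! \le n$, so permutations of $\{1,\dots,k\}$ are in bijection with an initial segment of the domain, while (as $p \le n^c$) every matrix entry is an $O(\log n)$-bit number and $k \le \log n$ is polylogarithmic. For a fixed permutation index $m < k!$ I would first $\FO(+,\times)$-define the $m$-th permutation $\sigma_m$ by decoding $m$ in the factorial number system, i.e.\ as its $k-1$ polyloglog-bit digits, using Lemma~\ref{lemma:computations_in_fo}; its sign $(-1)^{\mathrm{inv}(\sigma_m)}$, because the number of inversions is at most $\binom k2 = O(\log^2 n)$, hence a sum of polylogarithmically many $0/1$-values, again by Lemma~\ref{lemma:computations_in_fo}; and the product $\prod_{i=1}^{k} A[i,\sigma_m(i)] \bmod p$, which is a product of $k$ — that is, polylogarithmically many — $O(\log n)$-bit numbers, once more by Lemma~\ref{lemma:computations_in_fo}. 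Thus each signed Leibniz term $t_m$ is $\FO(+,\times)$-definable as a function of $m$ and $A$.

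What is then missing is the final aggregation $\det(A) \equiv \sum_{m<k!} t_m \pmod p$, and here Lemma~\ref{lemma:computations_in_fo} does not apply directly, since $k!$ is superpolylogarithmic; an $\FO(+,\times)$ formula cannot iterate-add over that many summands. The plan is therefore not to sum the Leibniz terms flatly but to express $\det(A)$ by a computation of \emph{constant depth} in which every addition and multiplication layer ranges over only polylogarithmically many $O(\log n)$-bit numbers, so that Lemma~\ref{lemma:computations_in_fo} can be applied layer by layer and a constant number of nested aggregations suffices. This is where the Agrawal--Vinay technique for the determinant enters: starting from the standard polynomial-size arithmetic circuit for the $k\times k$ determinant over $\Z_p$, one obtains a bounded-depth $\Sigma\Pi\Sigma\Pi\cdots$ expression and then, exploiting that $k$ is as small as $\tfrac{\log n}{\log\log n}$, rebrackets it so that the surviving sum- and product-gates have fan-in polylogarithmic in $n$. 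To keep all intermediate values inside the $O(\log n)$-bit (equivalently, Chinese remainder) regime guaranteed by Lemma~\ref{lemma:computations_in_fo} and Section~\ref{section:preliminaries}, the whole computation is carried out simultaneously modulo the $O(\log n)$ primes of $O(\log\log n)$ bits used for CRR, and the results are recombined at the end; the outer application (via $[D^{-1}]_{ij} = (-1)^{i+j}\det(D_{ji})/\det(D)$) then needs only this determinant routine run on the $k\times k$ matrix $I+BVA^{-1}U$ and its minors.

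The main obstacle is exactly this rebracketing step. A flat Leibniz sum has a single aggregation of width $k!$; a recursive Laplace expansion trades width for a recursion of non-constant depth $\Theta(k)$; and an off-the-shelf depth-reduced circuit for the $k\times k$ determinant still contains $\Sigma$-gates of superpolylogarithmic fan-in — each of these lies outside $\FO(+,\times)$. The delicate part is to split the determinant into a constant number of layers in such a way that, precisely for $k=\tfrac{\log n}{\log\log n}$, each layer's fan-in drops to $\mathrm{polylog}(n)$ (so that Lemma~\ref{lemma:computations_in_fo} applies throughout) while the arithmetic stays within $O(\log n)$ bits; this is simultaneously where the exact threshold on $k$ is forced and where the Agrawal--Vinay technique does the real work, and I expect it to be the most technical step of the argument.
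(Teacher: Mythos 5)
Your plan correctly identifies the central obstacle --- that the $k!$-term Leibniz sum cannot be aggregated flatly in $\FO(+,\times)$ --- and correctly exploits that $k!\le n$ so a permutation index fits in a domain element, and that individual signed terms are $\FO(+,\times)$-definable via Lemma~\ref{lemma:computations_in_fo}. But the step you then defer (``rebracket so that every layer has polylog fan-in, in constant depth'') is exactly where the argument would have to succeed, and it is not clear it can. A recursive rebracketing of a $k\times k$ determinant into polylog-fan-in aggregation layers naturally has depth $\Theta(\log k)=\Theta(\log\log n)$, which is not constant; $\FO(+,\times)=$ uniform $\ACz$ only licenses constantly many nested applications of Lemma~\ref{lemma:computations_in_fo}. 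The Agrawal--Vinay technique, which you invoke here, is indeed used in this paper --- but for Lemma~\ref{lemma:determinant_polylog}, i.e.\ for computing determinants of $\mathrm{polylog}$-sized matrices in $\FOMaj(+,\times)$, where the class is $\TC^0$ and therefore has iterated multiplication, and even there it needs a $2c$-step recursion into $\sqrt{\log n}$-sized fragments. There is no analogous constant-depth compilation available for $\FO(+,\times)$, and you do not supply one.

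The paper's actual proof of Theorem~\ref{theorem:determinant_in_fo} does not compute the determinant by a circuit at all. It goes nondeterministically: working modulo $O(\log\log n)$-bit primes $q$ (and recombining by CRR, as you also propose), the formula first settles \emph{zeroness} of $\det(A)\bmod q$ by existentially quantifying an $O(\log n)$-bit encoding of a vanishing linear combination of the $k$ columns and checking each of the $k$ dot products --- a sum of $k$ small numbers, within Lemma~\ref{lemma:computations_in_fo}. If $\det(A)\bmod q\neq 0$, the value is obtained via the self-reducibility characterization of Proposition~\ref{proposition:determinant_characterization} together with Lemma~\ref{lemma:permutations}: one existentially quantifies a column permutation $\pi$ (representable in $O(\log n)$ bits), sequences $b_2,\dots,b_k$ and $d_1,\dots,d_k$ (also $O(\log n)$ bits total), and verifies $d_1=a_{11}$, $d_i=(a_{ii}-b_i)d_{i-1}$, and $\det(A^{b_i}_i)=0$; the last condition is again the zero-test above. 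So the determinant value is certified, not computed, and each verification step only ever aggregates $k$ small numbers. Your proposal would need either to fill in the rebracketing with an actual constant-depth argument (which seems hard, and is not what the paper does) or to switch to this guess-and-verify strategy; as it stands, the key step is missing.
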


The technical proof of this theorem is deferred until the next Subsection \ref{section:determinant_in_fo}.

That $(A + \Delta A)^{-1} \pmod p$ can defined in $\FO(+, \times)$ using Equation \ref{equation:woodbury} now is a consequence of a straightforward analysis of the involved matrix operations.

\begin{proposition}\label{theorem:inverse_modp}
  Fix a domain of size $n$ and a prime $p \in \bigO(n^c)$. Given the inverse of a matrix $A \in \Z^{n \times n}_p$ and a matrix $\Delta A \in \Z^{n \times n}_p$ with at most $k = \frac{\log n}{\log \log n}$ non-zero rows and columns, one can determine whether $A + \Delta A$ is invertible in $\FO(+, \times)$ and, if so, the inverse can be defined.  
\end{proposition}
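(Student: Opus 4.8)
The plan is to carry out the Sherman--Morrison--Woodbury identity (Equation~\ref{equation:woodbury}) symbolically modulo $p$: decompose $\Delta A$ by Lemma~\ref{lemma:change_decomposition}, so that the only genuine matrix inversion is that of a single $k \times k$ matrix over $\Z_p$, and then verify that every matrix that arises can be defined entrywise by an iterated sum of polylogarithmically many products of $\bigO(\log n)$-bit numbers --- exactly the regime covered by Lemma~\ref{lemma:computations_in_fo}. All matrices --- the given $A^{-1}$ and $\Delta A$, as well as everything constructed from them --- are represented by relations storing their entries modulo $p$, following Section~\ref{section:preliminaries}.

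First I would define, in $\FO(+, \times)$, the decomposition $\Delta A = UBV$ of Lemma~\ref{lemma:change_decomposition}. An element $i$ indexes a nonzero row of $\Delta A$ iff $\exists j\colon [\Delta A]_{ij} \neq 0$, and its position $m$ among the at most $k$ nonzero rows is one plus the number of nonzero rows that precede $i$ in the built-in linear order; since this count never exceeds $k = \frac{\log n}{\log \log n}$, it is $\FO(+, \times)$-definable. Thus $U$, $V$, and the $k \times k$ matrix $B$ (with $[B]_{mm'} = [\Delta A]_{i_m j_{m'}}$ for the $m$-th nonzero row $i_m$ and the $m'$-th nonzero column $j_{m'}$) are all definable; if $\Delta A$ has fewer than $k$ nonzero rows or columns, the surplus rows of $U$, resp.\ columns of $V$, are simply left identically zero, which affects none of the determinants below. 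Observe next that $V A^{-1}$ is the $k \times n$ matrix whose rows are the rows $j_1, \dots, j_k$ of $A^{-1}$, and $A^{-1} U$ is the $n \times k$ matrix whose columns are the columns $i_1, \dots, i_k$ of $A^{-1}$; forming these requires no arithmetic at all, only the indexing defined above. Consequently $M \df I + B V A^{-1} U$ is a $k \times k$ matrix each of whose entries is, modulo $p$, a sum of at most $k$ products of entries of $\Delta A$ and $A^{-1}$, hence $\FO(+, \times)$-definable by Lemma~\ref{lemma:computations_in_fo}.

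For the invertibility test I would use the standard identity $\det(A + UBV) = \det(A)\det(I + BVA^{-1}U)$, which is immediate from $A + UBV = A(I + A^{-1}UBV)$ together with Sylvester's determinant identity $\det(I + XY) = \det(I + YX)$: since $A$ is invertible modulo $p$ by hypothesis, $A + \Delta A$ is invertible modulo $p$ iff $\det(M) \not\equiv 0 \pmod p$, and $\det(M)$ is $\FO(+, \times)$-definable by Theorem~\ref{theorem:determinant_in_fo}. When $\det(M) \not\equiv 0$, I would define $M^{-1}$ entrywise by $[M^{-1}]_{st} = (-1)^{s+t}(\det M)^{-1}\det(M_{ts})$: each cofactor $\det(M_{ts})$ is the determinant of a matrix of size at most $k$, hence again $\FO(+, \times)$-definable by Theorem~\ref{theorem:determinant_in_fo}, while the modular inverse $(\det M)^{-1}$ of the $\bigO(\log n)$-bit number $\det M$ is definable by selecting the unique $x \in \{0, \dots, p-1\}$ with $x \cdot \det M \equiv 1 \pmod p$ --- a property testable in $\FO(+, \times)$. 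Finally, Equation~\ref{equation:woodbury} yields $(A + \Delta A)^{-1} = A^{-1} - (A^{-1}U)\,M^{-1}B\,(VA^{-1})$, valid because both $A$ and $M$ are invertible modulo $p$ (no invertibility of $B$ is needed, which matters since the submatrix $B$ of $\Delta A$ may well be singular); the right-hand side is obtained by a bounded number of multiplications of matrices of dimensions among $n \times k$, $k \times k$, $k \times n$ followed by one subtraction from $A^{-1}$, and since each such product contracts over at most $k$ indices, every entry is an iterated sum of polylogarithmically many products of $\bigO(\log n)$-bit numbers modulo $p$, hence $\FO(+, \times)$-definable by Lemma~\ref{lemma:computations_in_fo}.

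The only ingredient that is not routine is Theorem~\ref{theorem:determinant_in_fo}, whose proof is deferred to Subsection~\ref{section:determinant_in_fo}. Its role --- and the reason the whole argument stays within $\FO(+, \times)$ --- is that Equation~\ref{equation:woodbury} confines all matrix inversions to a single $k \times k$ matrix (with $k = \frac{\log n}{\log \log n}$) and makes every matrix product contract over one of the $\bigO(k)$ ``small'' indices rather than over the full domain of size $n$; this is precisely what keeps every intermediate matrix within the reach of Lemma~\ref{lemma:computations_in_fo}.
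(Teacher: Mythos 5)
Your proof is correct and follows essentially the same route as the paper: decompose $\Delta A=UBV$ via Lemma~\ref{lemma:change_decomposition}, reduce to inverting the $k\times k$ matrix $I+BVA^{-1}U$ via the Sherman--Morrison--Woodbury identity, test invertibility and compute the small inverse via Theorem~\ref{theorem:determinant_in_fo}, and bound the remaining arithmetic by Lemma~\ref{lemma:computations_in_fo}. You are slightly more explicit than the paper in justifying, via $\det(A+UBV)=\det(A)\det(I+BVA^{-1}U)$, why singularity of $I+BVA^{-1}U$ is exactly the obstruction to invertibility of $A+\Delta A$, which the paper asserts without the Sylvester identity argument.
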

\begin{proof}
  A decomposition of the matrix $\Delta A$ into $UBV$ with $U \in \Z_p^{n \times k}, B \in \Z_p^{k \times k},$ and $V \in \Z_p^{k \times n}$ can be defined in $\FO(+, \times)$ using the characterization from Lemma \ref{lemma:change_decomposition}. A simple analysis of the right hand side of Equation \ref{equation:woodbury} -- taking the dimensions of $U, V,$ and $B$ into account -- yields that $VA^{-1}U$ and therefore $(I+BVA^{-1}U)^{-1}B$ are $k \times k$ matrices. Furthermore, $U(I+BVA^{-1}U)^{-1}BV$ is an $n \times n$ matrix that has at most $k$ non-zero rows and columns.
  
  The only obstacle to invertibility is that the inverse of $D \df I+BVA^{-1}U$ may not exist in $\Z_p$. This is the case if and only if $\det (D) \equiv 0 \pmod p$ which can be tested using Theorem~\ref{theorem:determinant_in_fo}. If $D$ is invertible, then its inverse can be defined by invoking  Theorem \ref{theorem:determinant_in_fo} twice and using $[D]_{ij} = (-1)^{i+j} \frac{\det D_{ji}}{\det D}$. 
  
  Finally, if one knows how to compute $(I+BVA^{-1}U)^{-1}$, each entry in $A^{-1}U(I+BVA^{-1}U)^{-1}BV$ can be defined by adding $k$ products of two numbers, and similarly for $(A^{-1}U(I+BVA^{-1}U)^{-1}BV)A^{-1}$. This can be done in $\FO(+, \times)$ due to Lemma \ref{lemma:computations_in_fo}.
\end{proof}

It remains to show how to maintain non-zeroness of entries of $(A + \Delta A)^{-1} \in \Q^{n \times n}$. Essentially a dynamic program can maintain a Chinese remainder representation of $(A + \Delta A)^{-1}$ and extract whether an entry is non-zero from this representation. An obstacle is that whenever $(I+BVA^{-1}U)^{-1} \pmod p$ does not exist for a prime $p$ during the update process, then this prime $p$ becomes \emph{invalid} for the rest of the computation. The idea to circumvent this is simple: with each change, only a small number of primes become invalid. However, since the determinant can be computed in $\NC^2$ (cf. \cite{Cook85}), using Theorem \ref{theorem:fewerChanges} we only need to be able to maintain a correct result for $\log^2 n$ many steps. Thus starting from sufficiently many primes will guarantee that enough primes are still valid after $\log^2 n$ steps.

We make these numbers more precise in the following.

\begin{proofof}{Theorem \ref{theorem:non_zero_inverse}}
  By  Theorem \ref{theorem:fewerChanges} and since values of the inverse of a matrix are almost domain-independent,  it suffices to exhibit a dynamic program\footnote{Actually we only describe a program that works correctly for sufficiently large $n$. However, small $n$ can be easily dealt with separately.} that maintains non-zeroness of entries of $A^{-1}$ for $\log^2 n$ changes of size $\frac{\log n}{\log \log n}$.  The dynamic program maintains $A^{-1} \pmod p$ for each of the first $2n^3$ many primes $p$, which, by the Prime Number Theorem, can be found among the first $n^4$ numbers. Denote by $P$ the set of the first $2n^3$ primes. The $\NC^2$ initialization procedure computes $A^{-1} \pmod p$ for each prime in $P$. The update procedure for a change $\Delta A$ is simple:
  \begin{enumerate}
    \item[(1)] For each prime $p \in P$:
      \begin{enumerate}
        \item[(a)] If $(A + \Delta A)^{-1} \pmod p$ is not invertible then remove $p$ from $P$.
        \item[(b)] If $(A + \Delta A)^{-1} \pmod p$ is invertible then update $(A + \Delta A)^{-1} \pmod p$.
      \end{enumerate}
    \item[(2)] Declare $[(A + \Delta A)^{-1}]_{st} \neq 0$  if there is a prime $p \in P$ with $[(A + \Delta A)^{-1}]_{st} \not\equiv 0 \pmod p$.
  \end{enumerate}
  The Steps 1a and 1b can be performed in $\FO(+, \times)$ due to Proposition \ref{theorem:inverse_modp}. 
  
It remains to argue that the result from Step 2 is correct. Observe that the values of entries of $A$ are at most $n$ at all times, and therefore $\det(A) \leq n! n^n \leq 2^{n^2}$ for large enough~$n$.  Thus, since $\det(A) \neq 0$ over $\Z$ by assumption, there are at most $n^2$ primes $p$ such that $\det(A) \equiv  0 \pmod p$, for all $A$ reached after a sequence of changes. 
  
  In particular, $(A + \Delta A)^{-1} \pmod p$ is not invertible --- equivalently, $(I+BVA^{-1}U)^{-1} \pmod p$ does not exist --- for at most $n^2$ primes $p$. 
Hence, each time Step 1 is executed, at most $n^2$ primes are declared invalid and removed from $P$. All in all this step is executed at most $\log^2 n$ times, and therefore not more than $n^3$ primes are removed from~$P$. 
Thus for the remaining $n^3$ valid primes, the inverses $(A + \Delta A)^{-1} \pmod p$ are computed correctly.

Each entry of $(A + \Delta A)^{-1}$ is, again, bounded by $2^{n^2}$, so if $[(A + \Delta A)^{-1}]_{st} \neq 0$ there are at most $n^2$ primes $p \in P$ with $[(A + \Delta A)^{-1}]_{st} \equiv 0 \pmod p$. So, the result declared in Step~2 is correct.  
\end{proofof}

\subsection{Defining the Determinant of Small Matrices}\label{section:determinant_in_fo}

In this subsection we prove Theorem \ref{theorem:determinant_in_fo}. The symbolic determinant of a $k \in \bigO(\frac{\log}{\log \log n})$ sized matrix is a sum of $k! \in n^{\bigO(1)}$ monomials and therefore cannot be na\"{\i}vely defined in $\FO(+, \times)$. Here we use the fact that $\FO(+, \times)$ can easily convert $\log n$ bit numbers into their Chinese remainder presentation and back, and show how the determinant can be computed modulo $\log \log n$ bit primes.

It is easy to verify whether the value of a determinant modulo a $\bigO(\log \log n)$ bit prime is zero
in $\FO(+, \times)$ by guessing a linear combination witnessing that the rank is less than full. We aim for a characterization that allows to reduce the verification of determinant values to such zeroness tests. To this end we use the self-reducibility and multilinearity of determinants.  Assume $[A]_{11} \neq 0$ and that the determinant of $A_{11}$ is also non-zero. Then the determinant can be written as $[A]_{11}\cdot d + r$ for some $d$ and $r$. By finding an $a$ such that the determinant is zero when $[A]_{11}$ is replaced by $a$ in $A$ we gain $r = -ad$. Repeating this step recursively for $d$ ---  which is the determinant of a smaller matrix --- one obtains a procedure for determining the value of the determinant that can be parallelized.

The following lemma is a preparation for deriving the characterization. We denote by $A_i$ the matrix obtained from a matrix $A$ by removing all rows and columns larger than $i$.

\begin{lemma}\label{lemma:permutations}
  Suppose $B = (\tpl b_1, \ldots, \tpl b_k) \in \F^{k \times k}$ is a non-singular matrix over a field $\F$. Then there is a permutation $\pi: [k] \rightarrow [k]$ such that for $A \df (b_{\pi(1)}, \ldots, b_{\pi(k)})$:
  \begin{quote}
    $[A]_{ii} \neq 0$ and  $\det(A_{i}) \neq 0$ for all $i \in [k]$
  \end{quote}
\end{lemma}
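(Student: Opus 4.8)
The plan is to prove this by induction on $k$, constructing the permutation $\pi$ one position at a time by choosing, at step $i$, a column to place into position $i$ so that both minor conditions are preserved. The key structural fact driving the argument is the cofactor expansion of the determinant along a single row or column, combined with the nonsingularity of $B$.

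\medskip

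\noindent\textbf{Base case and the inductive construction.} For $k = 1$ the matrix is a nonzero scalar and the identity permutation works. For the inductive step, suppose we have already chosen a permutation $\pi$ of a subset of $[k]$ of size $i-1$ filling positions $1, \ldots, i-1$, such that the resulting leading submatrix $A_{i-1}$ (of the partially reordered matrix) satisfies $[A]_{jj} \neq 0$ and $\det(A_j) \neq 0$ for all $j \leq i-1$; in particular $A_{i-1}$ is invertible. I want to select, from among the columns not yet used, one column $\tpl b_m$ to place into position $i$ so that both $[A]_{ii} \neq 0$ and $\det(A_i) \neq 0$ hold. Here $A_i$ is the $i \times i$ submatrix formed by the first $i$ reordered columns restricted to their first $i$ rows. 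The crucial observation is that, since $A_{i-1}$ is invertible and $B$ is nonsingular, the rows of $B$ indexed $1, \ldots, i$ are linearly independent when restricted to the already-chosen $i-1$ column coordinates together with \emph{some} additional coordinate; more precisely, among the remaining columns there must be at least one, say $\tpl b_m$, such that the $i \times i$ matrix $A_i$ obtained by appending (the top $i$ entries of) $\tpl b_m$ has nonzero determinant --- otherwise the $i$-th row of $B$, projected onto the span considerations, would force a linear dependence contradicting that $B$ has full rank. This gives the condition $\det(A_i) \neq 0$.

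\medskip

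\noindent\textbf{Securing the diagonal condition simultaneously.} The subtlety is that one column may satisfy $\det(A_i) \neq 0$ but have a zero in position $(i,i)$, i.e.\ $[\tpl b_m]_i = 0$. To handle this, expand $\det(A_i)$ along its last column: $\det(A_i) = \sum_{j=1}^{i} (-1)^{i+j} [\tpl b_m]_j \det((A_i)_{ji})$, where $(A_i)_{ji}$ deletes row $j$ and column $i$. Note $(A_i)_{ii} = A_{i-1}$, which is invertible, so the coefficient of $[\tpl b_m]_i$ in this expansion is $\pm \det(A_{i-1}) \neq 0$. Hence among the candidate columns, I should pick one for which $\det(A_i) \neq 0$ \emph{and}, if the chosen column happens to have $[\tpl b_m]_i = 0$, argue that it can be swapped/combined with another admissible column to restore a nonzero $(i,i)$-entry without destroying $\det(A_i) \neq 0$. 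The cleanest way: consider the set $S$ of remaining columns for which $\det(A_i) \neq 0$; this set is nonempty by the previous paragraph. If some column in $S$ already has nonzero $i$-th entry, we are done. Otherwise every column in $S$ has $i$-th entry zero; but then for any $\tpl b_m \in S$ and any remaining column $\tpl b_\ell \notin S$ (so $\det$ with $\tpl b_\ell$ appended is zero), the combination argument via multilinearity of the determinant in the last column shows $\tpl b_m + \tpl b_\ell$ (or a suitable scalar combination, choosing a nonzero field element if $\F = \F_2$ requires slightly more care) gives nonzero determinant and nonzero $i$-th entry --- and then we can realize this by an actual column permutation because the span of all remaining columns must, by nonsingularity of $B$, contain a vector with nonzero $i$-th coordinate. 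I would formalize this last point by noting that if \emph{every} remaining column had zero $i$-th entry, the $i$-th row of $B$ would be supported entirely on the first $i-1$ chosen coordinates, contradicting linear independence of the first $i$ rows of $B$ together with invertibility of $A_{i-1}$.

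\medskip

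\noindent\textbf{Main obstacle.} The genuinely delicate part is the simultaneous satisfaction of the two conditions at each step --- ensuring the column that keeps the minor determinant nonzero can also be arranged to have a nonzero diagonal entry, purely by permuting columns (not by arbitrary linear combinations, since the conclusion is about a permutation $\pi$). I expect the right framing is to prove a slightly stronger inductive invariant --- e.g.\ that among the not-yet-used columns, the submatrix formed by rows $i, \ldots, k$ restricted to those columns has full row rank --- which makes both the "some remaining column has nonzero $i$-th entry after reordering" claim and the "some choice keeps $\det(A_i) \neq 0$" claim fall out of a single rank count. Once that invariant is set up correctly, each step is a short argument via cofactor expansion and the rest is bookkeeping on the permutation.
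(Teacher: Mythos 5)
Your forward-greedy approach has a genuine gap. The inductive step as you state it is false: take
$B = \begin{pmatrix} 1 & 1 & 1 \\ 1 & 0 & 0 \\ 0 & 1 & 2 \end{pmatrix}$
over $\Q$, which is nonsingular. Placing column~$1$ at position~$1$ is a perfectly valid start ($[A]_{11}=1\neq 0$ and $\det(A_1)=1\neq 0$), but then both remaining columns have entry~$0$ in row~$2$, so no choice for position~$2$ gives $[A]_{22}\neq 0$ --- a forward greedy must backtrack. Your attempted patch via multilinearity does not close the gap: $\tpl b_m + \tpl b_\ell$ may indeed have the desired properties, but it is not a column of $B$, so it does not yield a permutation. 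The claimed contradiction in the case ``every remaining column has zero $i$-th entry'' does not follow either: row $i$ being supported on the $i-1$ chosen coordinates does not make it a linear combination of rows $1,\ldots,i-1$, since those rows need not vanish on the unchosen coordinates. Your proposed stronger invariant (full rank of the lower-right block on unused columns) would maintain the diagonal condition via a cofactor expansion along the top row of that block, but by itself it does not control $\det(A_i)$, which is a different submatrix; so the bookkeeping you defer to is not, in fact, routine.

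The paper sidesteps all of this by going \emph{backward} from position $k$. In the Laplacian expansion $\sum_{j} (-1)^{k+j} [B]_{kj}\det(B_{kj}) = \det(B) \neq 0$ some \emph{term}, say the $\ell$-th, is nonzero; since a term is a product, this gives $[B]_{k\ell}\neq 0$ \emph{and} $\det(B_{k\ell})\neq 0$ simultaneously --- exactly the two conditions you correctly identify as the crux, obtained in one shot. Swapping columns $k$ and $\ell$ gives $[B']_{kk}\neq 0$, and the leading $(k-1)\times(k-1)$ block of $B'$ equals $B_{k\ell}$ up to a column permutation and is therefore nonsingular; one then recurses on it. Going backward, ``the complementary block on unused columns is nonsingular'' and ``the leading principal minor is nonzero'' are literally the same condition, so a single invariant suffices --- this is precisely where the forward direction runs into trouble.
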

\onlyLong{
\begin{proof}
  In the Laplacian expansion $\sum_{j=1}^k (-1)^{k+j} [B]_{kj}\det (B_{kj})$ of $\det(B)$ with respect to the $k$-th row there must be at least on non-zero term; say, the $\ell$-th term. Then $[B]_{k\ell} \neq 0$ and $\det(B_{k\ell}) \neq 0$. Thus if $B'$ is the matrix obtained by swapping the $k$-th and $\ell$-th columns of $B$ then $[B']_{kk} \neq 0$ and, if $k > 1$, $\det(B'_{k-1}) \neq 0$. Proceed inductively with the matrix $B'$, and combine the column swaps into a permutation $\pi$.
\end{proof}
}

The following proposition characterizes the determinant of a matrix. We will see that this characterization allows for parallel computation of the determinant of small matrices.

\begin{proposition}\label{proposition:determinant_characterization}
  Suppose $A = (a_{ij})_{1 \leq i, j \leq k} \in \F^{k \times k}$ is a matrix over a field $\F$ such that $a_{ii} \neq 0$ and $\det(A_{i}) \neq 0$ for all $i \in [k]$. Let $A^b_{i}$ be the matrix obtained from $A_{i}$ by replacing $a_{ii}$ by $b$ for some $b \in \F$. Then there are unique $b_2, \ldots, b_k \in \F$ and $d_1, \ldots, d_k \in \F$ such that
  \begin{enumerate}
    \item $d_1 = a_{11}$,
    \item $d_{i} = (a_{ii}- b_i) d_{i-1}$, and 
    \item $\det(A^b_i) = 0$ 
  \end{enumerate}
  for $2 \leq i \leq k$. Furthermore, it holds that $d_i = \det(A_i)$.
\end{proposition}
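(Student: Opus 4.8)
The plan is to exploit the multilinearity of the determinant in a single column. Fix $i \in \{2, \dots, k\}$ and regard $\det(A_i^b)$ as a function of the parameter $b$, which occurs in $A_i^b$ only in position $(i,i)$. Expanding the determinant along the $i$-th column of $A_i^b$, each minor is obtained by deleting column $i$ and hence does not involve $b$; moreover the only entry of column $i$ that depends on $b$ is the diagonal entry, whose cofactor is exactly $\det(A_{i-1})$. Therefore $b \mapsto \det(A_i^b)$ is an affine map
\[
  \det(A_i^b) \;=\; \det(A_{i-1})\, b \,+\, r_i,
\]
where $r_i \in \F$ does not depend on $b$. Substituting $b = a_{ii}$, so that $A_i^b = A_i$, identifies the constant term as $r_i = \det(A_i) - a_{ii}\det(A_{i-1})$.

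With this in hand I would first settle existence and uniqueness of the $b_i$. By hypothesis $\det(A_{i-1}) \neq 0$ for every $i \in \{2,\dots,k\}$, so the affine map above is non-constant and has a unique root, namely $b_i := -\,r_i / \det(A_{i-1})$; this is precisely the unique value of $b$ for which condition (3), i.e. $\det(A_i^{b_i}) = 0$, holds. Thus each $b_i$ is determined independently of the others, and once $b_2, \dots, b_k$ are fixed, conditions (1) and (2) form an explicit recurrence that determines $d_1, \dots, d_k$ uniquely.

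It then remains to verify $d_i = \det(A_i)$, which I would prove by induction on $i$. The base case $i = 1$ is condition (1), since $\det(A_1) = a_{11}$. For the inductive step, assuming $d_{i-1} = \det(A_{i-1})$, condition (2) together with $b_i \det(A_{i-1}) = -r_i$ and the formula for $r_i$ gives
\[
  d_i \;=\; (a_{ii} - b_i)\, d_{i-1} \;=\; a_{ii}\det(A_{i-1}) - b_i\det(A_{i-1}) \;=\; a_{ii}\det(A_{i-1}) + r_i \;=\; \det(A_i),
\]
completing the induction.

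The argument is short; the only step that requires care is establishing the affine dependence of $\det(A_i^b)$ on $b$ via the cofactor expansion, after which the uniqueness in condition (3) is immediate from the non-vanishing of the principal minors $\det(A_{i-1})$. I note that the separate hypothesis $a_{ii} \neq 0$ is not actually needed for this proposition — non-vanishing of the $d_i$, when required later, follows here from $\det(A_i) \neq 0$ — but it is natural to carry it along since it is supplied by Lemma~\ref{lemma:permutations}.
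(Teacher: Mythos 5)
Your proposal is correct and follows essentially the same approach as the paper's proof: both exploit multilinearity to see that $\det(A_i^b)$ is affine in $b$ with leading coefficient $\det(A_{i-1}) \neq 0$, from which uniqueness of $b_i$ and the recurrence for $d_i$ follow. The only cosmetic differences are that you expand along the $i$-th column where the paper uses the $i$-th row, and you settle the $b_i$ all at once before the induction rather than interleaving the two; your closing observation that the hypothesis $a_{ii} \neq 0$ is redundant (for $i>1$) is also accurate.
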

\onlyLong{
\begin{proof}
Clearly, $d_1 = \det(A_1)$. We inductively show that the $b_i$ exist and are unique. The values $d_i$ are then determined by (b), and we prove that $d_i = \det(A_i)$ for $i \in [2, \ldots, k]$. 
  Suppose this has been ensured for $i-1$. Expanding the determinant of $A_{i}$ with respect to the $i$-th row and splitting the sum into the term for the $i$-th column and the term for all other columns yields
    \[\det(A_i) = a_{ii} d_{i-1} + r_i \label{equation:e1} \tag*{$(\star)$}\]
  with $r_i \df \sum_{j=1}^{i-1} (-1)^{i+j} a_{ij} \det ((A_i)_{ij})$.

  Similarly the determinant $d_{i}(b) \df \det(A^b_i)$ expands to $d_{i}(b)=  b d_{i-1} + r_i$. Since $d_{i-1} = \det(A_{i-1}) \neq 0$ there is a unique $b_i$ such that $d_{i}(b_i) = 0$. With this $b_i$, we have that $r_i = -b_i d_{i-1}$, and plugging this into Equation \ref{equation:e1} yields that  $d_i = \det(A_i)$.
\end{proof}
}

Finally we show that the characterization from the previous proposition can be used to define the determinant of small matrices in $\FO(+, \times)$.

\begin{proofof}{Theorem \ref{theorem:determinant_in_fo}}
  Suppose $A \in \Z^{k \times k}_p$ is a matrix with $p \in \bigO(n^c)$ and $k = \frac{\log n}{\log \log n}$. The idea is to define $\det(A) \pmod p$ in Chinese remainder representation for primes $q_1, \ldots, q_m$. A simple calculation shows that $m \in \bigO(\log n)$ primes each of $\bigO(\log \log n)$ bits suffice. The Chinese remainder representation can be defined from $A$ and the value $\det(A) \pmod p$ can be recovered from the values $\det(A) \pmod{q_1}, \ldots, \det(A) \pmod{q_m}$ in $\FO(+, \times)$ due to Lemma~\ref{lemma:computations_in_fo}.
  Thus let us show how to define $\det(A) \pmod q$ for a prime $q$ of $\bigO(\log \log n)$ bits. 
  
  The idea is to first test whether the determinant is zero. If not, the fact that it is not zero is used to define the determinant using Proposition~\ref{proposition:determinant_characterization}. 
  
  If $A \pmod q$ is singular then there exists a non-trivial linear combination of the columns that yields the all zero vector. Such a linear combination is determined by specifying one $\bigO(\log \log n)$ bit number for each of the $k$ columns. It can thus be encoded in $\bigO(\log n)$ bits, and therefore existentially quantified by a first-order formula. Such a ``guess'' can be decoded (i.e., the $k$ numbers of $\bigO(\log \log n)$ length can be extracted) in $\FO(+, \times)$, see Section \ref{section:preliminaries}. Checking if a guessed linear combination is zero requires to sum $k$ small numbers and is  hence in $\FO(+, \times)$ due to Lemma~\ref{lemma:computations_in_fo}.

  Now, for defining the determinant $\det (A)\pmod q$ when $A \pmod  q$ is non-singular, a formula can guess a permutation $\pi$ of $[k]$ and verify that it satisfies the conditions from Lemma \ref{lemma:permutations}. Note that such a permutation can be represented as a sequence of $k$ pairs of numbers of $\log \log n$ bits each, and hence be stored in $\bigO(\log n)$ bits. The verification of the conditions from Lemma \ref{lemma:permutations} requires the zero-test for determinants explained above. After fixing $\pi$, the values $b_2, \dots, b_k$ as well as $d_1, \ldots, d_{k}$ from Proposition~\ref{proposition:determinant_characterization} can be guessed and verified. Again, these numbers can be stored in $\bigO(\log n)$ bits. For verifying the conditions from Proposition~\ref{proposition:determinant_characterization} on the determinants of $A^b_i$, the zero-test for determinants is used. 
\end{proofof}

\section{Distances under Multiple Changes}\label{section:distances_woodbury}

In this section we extend the techniques from the previous section to show how distances can be maintained under changes that affect polylogarithmically many nodes with first-order updates that may use majority quantifiers. Afterwards we discuss how the techniques extend to other dynamic complexity classes. 

\distancespolylogn*

The idea is to use generating functions for counting the number of paths of each length, following Hesse \cite{Hesse03}. Fix a graph $G$ with adjacency matrix $A_G \in \Z^{n \times n}$ and a formal variable~$x$. Then $D \df \sum_{i = 0}^{\infty} (xA_G)^i$ is a matrix of formal power series from $\Z[[x]]$ such that if $[D]_{st} = \sum_{i = 0}^{\infty} c_i x^i$ then $c_i$ is the number of paths from $s$ to $t$ of length $i$. In particular, the distance between $s$ and $t$ is the smallest $i$ such that $c_i$ is non-zero.
Note that if such an $i$ exists, then $i < n$.

Similarly to the corresponding matrix from the previous section, the matrix $D$ is invertible over $\Z[[x]]$ and can be written as $(I - x A_G)^{-1}$ (cf.~\cite[Example 3.6.1]{Godsil93}). The maintenance of distances thus reduces to maintaining for a matrix $A \in \Z[[x]]$, for each entry $(s, t)$, the smallest $i < n$ such that the $i$th coefficient is non-zero.

\begin{theorem} \label{theorem:inverse_coefficients}
  Suppose $A \in \Z[[x]]^{n \times n}$ stays invertible over $\Z[[x]]$. For all $s, t \in [n]$ one can maintain the smallest $i < n$ such that the $i$th coefficient of the $st$-entry of $A^{-1}$ is non-zero  in $\DynFOMaj(+,\times)$ under changes that affect  $\bigO(\log^c n)$ nodes, for fixed $c \in \N$. 
\end{theorem}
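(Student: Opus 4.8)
The plan is to mimic the Sherman--Morrison--Woodbury strategy from Section~\ref{section:reachability_woodbury}, but now over the ring $\Z[[x]]$ of formal power series, and to combine it with a Chinese-remaindering-over-primes trick to keep all intermediate objects of polynomial size. A change that affects $\bigO(\log^c n)$ nodes amounts to adding a change matrix $\Delta A \in \Z[[x]]^{n\times n}$ with at most $k\in\bigO(\log^c n)$ non-zero rows and columns; by Lemma~\ref{lemma:change_decomposition} we may write $\Delta A = UBV$ with $U\in\Z[[x]]^{n\times k}$, $B\in\Z[[x]]^{k\times k}$, $V\in\Z[[x]]^{k\times n}$, and by~\ref{equation:woodbury} the updated inverse is
\[
(A+\Delta A)^{-1}=A^{-1}-A^{-1}U\,(I+BVA^{-1}U)^{-1}\,BVA^{-1}.
\]
Since we only need the coefficients $c_0,\dots,c_{n-1}$ of each entry of $A^{-1}$ (the distance, if finite, is $<n$), every power series involved is truncated modulo $x^{n}$; so each entry is a polynomial of degree $<n$ whose integer coefficients, for an adjacency-matrix situation, are at most $n^{n}\le 2^{n^2}$ in absolute value. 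As in the proof of Theorem~\ref{theorem:non_zero_inverse}, these coefficients are too large to handle directly in $\FO(+,\times)$ or even $\FOMaj(+,\times)$, so we represent each coefficient in Chinese remainder form with respect to a polynomially large pool $P$ of primes, maintaining $A^{-1}\bmod(p,x^{n})$ for every $p\in P$.

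First I would establish the analogue of Theorem~\ref{theorem:determinant_in_fo} over $\Z_p[x]/(x^{n})$: the determinant (and, via the adjugate formula $[D^{-1}]_{ij}=(-1)^{i+j}\det(D_{ji})/\det(D)$, the inverse) of a $k\times k$ matrix with polynomial entries of degree $<n$ over $\Z_p$ can be defined in $\FOMaj(+,\times)$ for $k\in\bigO(\log^c n)$. This is where majority is needed and where the key step lies: the symbolic determinant is a sum of $k!$ monomials, each a product of $k$ polynomials, i.e.\ a polynomial of degree $<kn$ with potentially exponentially many terms. Using the self-reducibility/multilinearity characterization of Proposition~\ref{proposition:determinant_characterization} together with Lemma~\ref{lemma:permutations}, the determinant is recovered from a sequence of zero-tests for determinants of the matrices $A^b_i$. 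A zero-test over $\Z_p[x]/(x^{n})$ reduces, coefficient by coefficient, to checking singularity of a $k\times k$ matrix over $\Z_p$ --- but now there are $n$ coefficients and each guessed linear combination is specified by $k$ polynomials of degree $<n$ over $\Z_p$, hence by $\bigO(kn\log p)$ bits, which is $\bigO(n\log^c n\log n)$ --- too many bits to existentially quantify by a first-order formula. This is the main obstacle. I would overcome it by the same Chinese-remainder idea applied in the coefficient domain: work over $\Z_q$ for small primes $q$, reduce the polynomial arithmetic modulo $x^{n}$ to evaluation/interpolation at $n$ points in a (possibly slightly larger) field, so that a determinant of polynomials becomes $n$ independent scalar determinants over a small field, each of which is handled by the scalar machinery of Theorem~\ref{theorem:determinant_in_fo}; the iterated sums and products over the $\bigO(\log^c n)$-sized objects and over the $\bigO(n)$ evaluation points that glue everything back together are exactly what Lemma~\ref{lemma:computations_in_fo} and majority quantifiers provide (iterated addition/multiplication and interpolation coefficients are $\TC^0$-computable). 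A careful bit-count shows each guess now fits in $\bigO(\log n)$ bits per scalar subproblem and the number of subproblems is polynomial, so the whole construction stays in $\FOMaj(+,\times)=\TC^0$.

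Given the $\FOMaj(+,\times)$-definability of inverses of small matrices over $\Z_p[x]/(x^{n})$, the update step is routine: as in Proposition~\ref{theorem:inverse_modp}, for each prime $p\in P$ we test whether $D\df I+BVA^{-1}U$ is invertible mod $(p,x^{n})$ --- equivalently whether its constant-term matrix is invertible over $\Z_p$ --- and if so compute $(A+\Delta A)^{-1}\bmod(p,x^{n})$ via~\ref{equation:woodbury}, which only requires multiplying $n\times k$, $k\times k$, $k\times n$ matrices of polynomials of degree $<n$: each entry is a sum of $k$ products of truncated polynomials, i.e.\ iterated $\TC^0$ arithmetic on $\bigO(\log^c n)$ many $\bigO(n\log n)$-bit objects, covered by Lemma~\ref{lemma:computations_in_fo}. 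Finally, to read off the answer: the smallest $i<n$ with $c_i\ne 0$ in the $st$-entry of $(A+\Delta A)^{-1}$ is the smallest $i$ for which there is a valid $p\in P$ with $c_i\not\equiv 0\pmod p$; since each true $c_i$ is bounded by $2^{n^2}$ it is killed by at most $n^2$ primes, and since the determinant of an $n\times n$ polynomial matrix over $\Z_p[x]/(x^n)$ sits in $\NC$-ish complexity, Theorem~\ref{theorem:fewerChanges} (in its $\FOMaj$/$\TC^i$ variant) tells us it suffices to stay correct for $\log^{\bigO(1)} n$ steps, so starting from polynomially many primes --- say the first $\mathrm{poly}(n)$ primes, locatable among the first $\mathrm{poly}(n)$ numbers by the Prime Number Theorem --- and discarding at most $n^2$ primes per step leaves enough valid primes throughout. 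This establishes Theorem~\ref{theorem:inverse_coefficients}, and hence Theorem~\ref{theorem:distances_multiple_changes} via the reduction $D=(I-xA_G)^{-1}$ discussed above.
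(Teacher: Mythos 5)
Your proposal diverges from the paper's proof in two important ways, and one of them hides a genuine gap.

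First, the route. The paper's proof is much more direct: it never introduces primes or Chinese remaindering at all. It maintains a truncated matrix $C \in \Z[x]^{n\times n}$ (all entries degree $< n$, integer coefficients) which is an $n$-approximation of $A^{-1}$, updates it via $C' := C - CU(I+BVCU)^{-1}BVC$ using Proposition~\ref{proposition:approximation_update}, and truncates again. The reason this works without any modular arithmetic is that the coefficients stay of polynomially many bits (for the target application $A = I - xA_G$, the truncated $A^{-1}$ has coefficients bounded by $n^n$), and $\FOMaj(+,\times)=\TC^0$ can do iterated polynomial addition and multiplication on such objects directly (cited from \cite{HesseAB02}). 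Furthermore, since $A$ and $A+\Delta A$ are assumed invertible over $\Z[[x]]$, the small matrix $I+BVCU$ always has a unit constant term and is thus invertible over $\Z[x]/(x^n)$ --- there is no analogue of ``bad primes'' to discard and hence no need for Theorem~\ref{theorem:fewerChanges}. Your CRT scaffolding and the appeal to an ``$\FOMaj$/$\TC^i$ variant'' of Theorem~\ref{theorem:fewerChanges} are therefore entirely avoidable complications: they solve a problem that does not arise.

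Second, and this is the real gap: the $k\times k$ determinant step for $k \in \bigO(\log^c n)$ cannot be done by the guess-and-verify machinery of Theorem~\ref{theorem:determinant_in_fo}, Lemma~\ref{lemma:permutations}, and Proposition~\ref{proposition:determinant_characterization}. That machinery hinges on existentially quantifying a witness (a linear combination of $k$ column coefficients, or a permutation of $[k]$) encoded in $\bigO(\log n)$ bits, which is exactly why the paper restricts it to $k = \frac{\log n}{\log\log n}$: there, $k$ numbers of $\bigO(\log\log n)$ bits each fit into $\bigO(\log n)$ bits. For $k \in \bigO(\log^c n)$ with $c \geq 1$, the same witness takes $\Omega(\log^c n \cdot \log\log n) = \omega(\log n)$ bits, which cannot be packed into a constant number of first-order variables; adding majority quantifiers does not help, since they do not increase the number of bits an existential guess can carry. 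Your evaluation/interpolation trick does not repair this: even after evaluating at points $a_i$, each scalar subproblem is a $k\times k$ determinant over $\Z_p$ with $k$ still polylogarithmic, so the bottleneck is unchanged, and your claim that ``each guess now fits in $\bigO(\log n)$ bits per scalar subproblem'' is incorrect. The paper's Lemma~\ref{lemma:determinant_polylog} sidesteps this entirely: it reduces the $k\times k$ determinant to an iterated product of $k$ matrices of size $\bigO(k)$, then multiplies them in batches of size $\bigO(\sqrt{\log n})$, each batch having only $(\log^c n)^{\bigO(\sqrt{\log n})} = n^{\bigO(1)}$ terms and hence being a polynomial-size $\TC^0$ sum-of-products; after $2c$ rounds of batching the full product is assembled. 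That Agrawal--Vinay style argument, not a guess-and-verify argument, is what makes the polylogarithmic $k$ work, and it is the piece your proposal is missing.
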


The idea is the same as for Reachability. When updating $A$ to $A + \Delta A$ then one can decompose the change matrix $\Delta A$ into $UBV$ for suitable matrices $U, B,$ and $V$, and apply the Sherman-Morrison-Woodbury identity \ref{equation:woodbury}, this time over the field of fractions $\Z((x))$\onlyLong{ (see the appendix for a short recollection of this field)}.

Of course computing with inherently infinite formal power series is not possible in $\DynFOMaj(+,\times)$. However, as stated in Theorem \ref{theorem:inverse_coefficients}, in the end we are only interested in the first $i < n$ coefficients of power series. We therefore show that it suffices to truncate all occurring power series at the $n$-th term and use $\FOMaj(+, \times)$'s ability to define iterated sums and products of polynomials \cite{HesseAB02}.  

Formally, we have to show that no precision for the first $i < n$ coefficients is lost when computing with truncated power series. This motivates the following definition. A formal power series $g(x) = \sum_i c_i x^i \in \Z[[x]]$ is an \emph{$m$-approximation} of a formal power series $h(x) = \sum_i d_i x^i \in \Z[[x]]$, denoted by $g(x) \approx_m h(x)$,  if $c_i = d_i$ for all $i \leq m$. This notion naturally extends to matrices over $\Z[[x]]$: a matrix $A \in \Z[[x]]^{\ell \times k}$ is an $m$-approximation of a matrix $B \in \Z[[x]]^{\ell \times k}$ if each entry of $A$ is an $m$-approximation of the corresponding entry of~$B$. The notion of $m$-approximation is preserved under all arithmetic operations that will be relevant.
 
\begin{lemma}\label{lemma:approximation_closure} Fix an $m \in \N$.
  \begin{enumerate}
    \item  Suppose $g(x), g'(x), h(x), h'(x)  \in \Z[[x]]$ with $g(x) \approx_m g'(x)$ and $h(x) \approx_m h'(x)$. Then
      \begin{enumerate}[(i)]
          \item $g(x) + h(x) \approx_m g'(x) + h'(x)$,
          \item $g(x)h(x) \approx_m g'(x)h'(x)$, and
          \item $\frac{1}{g(x)} \approx_m \frac{1}{g'(x)}$ whenever $g(x)$ and $g'(x)$ are normalized.
      \end{enumerate}
    \item Suppose $A, A', B, B' \in \Z[[x]]^{n \times n}$ with $A \approx_m A'$ and $B \approx_m B'$. Then
      \begin{enumerate}[(i)]
        \item $A + B \approx_m A' + B'$,
        \item $AB \approx_m A'B'$,
        \item If $A$ is invertible over $\Z[[x]]$ then so is $A'$, and $A^{-1} \approx_m A'^{-1}$.
      \end{enumerate}
  \end{enumerate}
\end{lemma}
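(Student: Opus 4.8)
The plan is to prove part~1 by elementary coefficient bookkeeping, then deduce part~2 by applying part~1 entrywise, reducing every determinant-type quantity (determinant, adjugate, cofactors) to a finite sum of products of matrix entries. The only step that needs genuine care is the inversion claim 2(iii), so I sketch the others briefly and concentrate on it.

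\emph{Part 1.} Write $g(x)=\sum_i c_i x^i$, $g'(x)=\sum_i c_i' x^i$, $h(x)=\sum_i d_i x^i$, $h'(x)=\sum_i d_i' x^i$, so that $c_i=c_i'$ and $d_i=d_i'$ for all $i\le m$. For (i): the $i$th coefficient of $g+h$ is $c_i+d_i$, which matches that of $g'+h'$ when $i\le m$. For (ii): the $i$th coefficient of $gh$ is $\sum_{j=0}^i c_j d_{i-j}$, and for $i\le m$ every index occurring in this sum is $\le m$, so it equals the $i$th coefficient of $g'h'$. For (iii): if $g$ is normalized then $g$ is a unit of $\Z[[x]]$, its constant term $c_0$ is a unit of $\Z$, and $1/g=\sum_i e_i x^i$ is pinned down by $e_0=c_0^{-1}$ and $e_i=-c_0^{-1}\sum_{j=1}^i c_j e_{i-j}$; an induction on $i$ shows $e_i$ depends only on $c_0,\dots,c_i$, hence $e_i=e_i'$ for all $i\le m$.

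\emph{Part 2, cases (i) and (ii).} Approximation of matrices is defined entrywise, so (i) is immediate from 1(i). For (ii), each entry $(AB)_{ij}=\sum_{\ell} A_{i\ell} B_{\ell j}$ is a sum of $n$ products of single entries; apply 1(ii) to each product and then 1(i) to the sum to obtain $AB\approx_m A'B'$.

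\emph{Part 2, case (iii) --- the main obstacle.} I will use two standard facts about $\Z[[x]]$. First, writing $C=\sum_i C_i x^i$ with $C_i\in\Z^{n\times n}$, the matrix $C$ is invertible over $\Z[[x]]$ iff its constant term $C_0$ is invertible over $\Z$ (equivalently $\det C_0\in\{1,-1\}$), since $C$ is invertible iff $\det C$ is a unit of $\Z[[x]]$ and the constant term of $\det C$ is $\det C_0$. Second, in that case $C^{-1}=\sum_i B_i x^i$ satisfies $B_0=C_0^{-1}$ and, for $k\ge 1$, $B_k=-C_0^{-1}\sum_{j=1}^k C_j B_{k-j}$, obtained by comparing coefficients in $CC^{-1}=I$. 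Now $A\approx_m A'$ forces $A_i=A_i'$ for every $i\le m$; in particular $A_0=A_0'$, so invertibility of $A$ over $\Z[[x]]$ gives $\det A_0\in\{1,-1\}$, whence $A_0'$ is invertible over $\Z$ and $A'$ is invertible over $\Z[[x]]$. Finally, the recursion above exhibits $B_k$ as a function of $A_0,\dots,A_k$ and $B_0,\dots,B_{k-1}$ only, so an induction on $k$ yields $B_k=B_k'$ for all $k\le m$, i.e.\ $A^{-1}\approx_m A'^{-1}$. (Equivalently, one can argue via $A^{-1}=(\det A)^{-1}\,\mathrm{adj}(A)$: both $\det A$ and $\mathrm{adj}(A)$ are finite sums of products of entries of $A$, so $\det A\approx_m\det A'$ and $\mathrm{adj}(A)\approx_m\mathrm{adj}(A')$ by Part~1 and 2(ii), while $(\det A)^{-1}\approx_m(\det A')^{-1}$ by 1(iii) after pulling out the sign of the constant term $\det A_0$; multiplying through and using 2(ii) gives the claim.) The one recurring subtlety is that a normalized series may have constant term $-1$ rather than $1$, which is dispatched uniformly by factoring off that sign before applying 1(iii); I expect no other difficulties.
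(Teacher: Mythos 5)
Your proof is correct, and Part~1 plus Part~2(i)--(ii) follow essentially the same path as the paper (coefficient bookkeeping for sums and Cauchy products, then applying Part~1 entrywise). For 2(iii), however, you take a genuinely different and arguably cleaner route. The paper reduces the matrix inverse to the scalar case via the cofactor formula, writing $[A^{-1}]_{st} = (-1)^{s+t}\det(A_{ts})/\det(A)$, then invokes 1(iii) and 2(i)--(ii) on determinants of submatrices; your main argument instead compares coefficients in $A A^{-1} = I$ directly, producing the recursion $B_0 = A_0^{-1}$, $B_k = -A_0^{-1}\sum_{j=1}^k A_j B_{k-j}$, and concluding by induction that each $B_k$ depends only on $A_0,\dots,A_k$. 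The recursion avoids determinants entirely (it does not even use commutativity of the entry ring) and yields 2(iii) without having to route through 1(iii). Your parenthetical also does the adjugate version, which matches the paper, but there you --- correctly --- flag a point the paper glosses over: the constant term of $\det(A)$ is only a unit of $\Z$, hence $\pm 1$, whereas 1(iii) as stated requires constant term exactly $1$; factoring out that sign is needed to apply 1(iii) legitimately, and your direct recursion sidesteps this entirely since it only needs $A_0 \in \mathrm{GL}_n(\Z)$. In short: same skeleton, but a more elementary and slightly more careful key step than the paper's.
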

Here, a formal power series $\sum_i c_i x^i \in \Z[[x]]$ is \emph{normalized} if $c_0 = 1$.

An approximation of the inverse of a matrix $A \in \Z[[x]]^{n \times n}$ can be updated using the Sherman-Morrison-Woodbury identity.
\begin{proposition}\label{proposition:approximation_update}
    Suppose $A \in \Z[[x]]^{n \times n}$ is invertible over $\Z[[x]]$, and $C \in \Z[[x]]^{n \times n}$ is an $m$-approximation of $A^{-1}$. If $A + \Delta A$ is invertible over $\Z[[x]]$ and $\Delta A$ can be written as $UBV$ with $U \in \Z[[x]]^{n \times k}, B \in \Z[[x]]^{k \times k},$ and $V \in \Z[[x]]^{k \times n}$, then
      $$(A + \Delta A)^{-1} \approx_m C-CU(I+BVCU)^{-1}BVC$$
\end{proposition}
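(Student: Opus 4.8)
The plan is to combine the exact Sherman--Morrison--Woodbury identity~\ref{equation:woodbury} over the field of fractions $\Z((x))$ with the closure properties of $m$-approximations established in Lemma~\ref{lemma:approximation_closure}. First I would note that, since $A$ is invertible over $\Z[[x]]$ and $A + \Delta A = A + UBV$ is invertible over $\Z[[x]]$, both matrices are invertible over $\Z((x))$, and moreover the exact identity
\[
(A + \Delta A)^{-1} = A^{-1} - A^{-1}U(I + BVA^{-1}U)^{-1}BVA^{-1}
\]
holds over $\Z((x))$; in particular $I + BVA^{-1}U$ is an invertible $k \times k$ matrix over $\Z((x))$. So the only task is to show that replacing $A^{-1}$ by its $m$-approximation $C$ throughout the right-hand side perturbs the result by at most an $m$-approximation, i.e. that
\[
A^{-1} - A^{-1}U(I+BVA^{-1}U)^{-1}BVA^{-1} \approx_m C - CU(I+BVCU)^{-1}BVC.
\]

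The key steps are then purely an application of Lemma~\ref{lemma:approximation_closure}, walking outward from the innermost subexpression. We have $C \approx_m A^{-1}$ by hypothesis, and $U, B, V$ are matrices over $\Z[[x]]$, hence trivially $\approx_m$ themselves. By part~2(ii) of the lemma, products are preserved under $\approx_m$, so $BVCU \approx_m BVA^{-1}U$, and by part~2(i), $I + BVCU \approx_m I + BVA^{-1}U$. Since $I + BVA^{-1}U$ is invertible over $\Z((x))$ — in fact I would argue it is invertible over $\Z[[x]]$, as $(A+\Delta A)^{-1}$ lies in $\Z[[x]]^{n\times n}$ and the identity forces the $k\times k$ factor to be a unit there, or alternatively one restricts to the matrix induced by the nonzero rows/columns exactly as in the comments around Proposition~\ref{theorem:inverse_modp} — part~2(iii) of the lemma applies to give $(I + BVCU)^{-1} \approx_m (I + BVA^{-1}U)^{-1}$ with the former invertible. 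Two more applications of part~2(ii) yield $CU(I+BVCU)^{-1}BVC \approx_m A^{-1}U(I+BVA^{-1}U)^{-1}BVA^{-1}$, and a final application of part~2(i) subtracting this from $C \approx_m A^{-1}$ gives the claim. Chaining with the exact identity completes the proof.

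The main obstacle I anticipate is purely bookkeeping about \emph{where} the $k \times k$ inverse lives and whether Lemma~\ref{lemma:approximation_closure}(2)(iii) — stated for $n\times n$ matrices invertible over $\Z[[x]]$ — applies to it. The cleanest route is to observe that $I + BVA^{-1}U$ has constant term $I_k$ (since $x A_G$-type change matrices, and more generally the relevant $\Delta A$, have zero constant term, so $BVA^{-1}U$ has zero constant term), hence it is invertible over $\Z[[x]]$ as a $k\times k$ matrix and the lemma applies verbatim after noting $k \le n$; if the paper's $\Delta A$ need not have zero constant term, one instead reproves the three-line $\approx_m$-closure for $k\times k$ matrices, which is identical to the $n \times n$ case. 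Everything else is a routine outside-in unwinding, so I would keep the written proof to essentially the displayed chain of $\approx_m$ relations above, citing Lemma~\ref{lemma:approximation_closure} at each step and the exact Woodbury identity~\ref{equation:woodbury} once at the end.
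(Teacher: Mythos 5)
Your proposal matches the paper's proof exactly: the paper simply cites the Sherman--Morrison--Woodbury identity together with Lemma~\ref{lemma:approximation_closure}, and your outside-in chain of $\approx_m$ relations is precisely the detail the paper leaves implicit. Your concern about $I+BVA^{-1}U$ being invertible over $\Z[[x]]$ rather than merely over $\Z((x))$ is the right one to flag, and the clean resolution is the one you sketch via the determinant factorization $\det(A+UBV)=\det(A)\det(I+BVA^{-1}U)$: since both $\det(A)$ and $\det(A+\Delta A)$ have unit constant term by hypothesis, so does $\det(I+BVA^{-1}U)$, making the $k\times k$ block invertible over $\Z[[x]]$ and making Lemma~\ref{lemma:approximation_closure}(2)(iii) applicable (the restriction to $n\times n$ in its statement is cosmetic).
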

\begin{proof}
   This follows immediately from the Sherman-Morrison-Woodbury identity $(A + UBV)^{-1} = A^{-1}-A^{-1}U(I+BVA^{-1}U)^{-1}BVA^{-1}$ and Lemma \ref{lemma:approximation_closure}. 
\end{proof}

As already discussed in Section \ref{section:reachability_woodbury}, the Sherman-Morrison-Woodbury identity involves inverting $k \times k$ matrices, which reduces to computing the determinant of such matrices.  We show that this is possible in $\FOMaj$ for $k \times k$ matrices of polynomials for $k \in \bigO(\log^c n)$. 

\begin{lemma}\label{lemma:determinant_polylog}
  Fix a domain of size $n$ and $c \in \N$. The  determinant of a matrix $A \in \Z[x]^{k \times k}$, with entries of degree polynomial in $n$, can be defined in $\FOMaj(+, \times)$ for $k \in \bigO(\log^c n)$.
\end{lemma}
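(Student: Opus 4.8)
The goal is to compute $\det(A)$ for $A \in \Z[x]^{k \times k}$ with $k \in \bigO(\log^c n)$ and polynomial-degree entries in $\FOMaj(+,\times)$. I would mirror the strategy of Theorem~\ref{theorem:determinant_in_fo}, but exploit the extra counting power of majority quantifiers, which makes the argument considerably more direct. First, note that each entry is a polynomial of degree $\bigO(n^d)$ with integer coefficients; the coefficients of $\det(A)$ are integers bounded by roughly $k! \cdot (\text{coefficient bound})^k \cdot (\text{degree})^k$, which is at most $2^{\operatorname{poly}(n)}$, and the degree of $\det(A)$ is at most $k \cdot \bigO(n^d) = \operatorname{poly}(n)$. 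So $\det(A)$ is a polynomial with polynomially many coefficients, each being an integer of polynomially many bits. It therefore \emph{can} be represented by a relation in the sense fixed in Section~\ref{section:preliminaries}, and the task is genuinely to \emph{define} this relation.

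**Key steps.** I would first reduce the problem over $\Z[x]$ to many problems over $\F_q[x]$ for small primes $q$: by the coefficient bound above, $\bigO(\operatorname{poly}(n))$ primes $q$ of $\bigO(\log n)$ bits suffice to recover each coefficient of $\det(A)$ via CRR, and CRR encoding/decoding of polynomially-bounded integers is available in $\FO(+,\times)$ (Lemma~\ref{lemma:computations_in_fo}). So fix a prime $q$ and work over $\F_q[x]$. Next I would reduce computing a determinant of polynomials to computing many determinants of \emph{scalars}: evaluate $A$ at $\operatorname{poly}(n)$ distinct points $\alpha \in \F_{q'}$ (passing to a polynomially-large extension field $\F_{q'}$ of $\F_q$ if $q$ is too small, which $\FO(+,\times)$ can still handle since $q'$ has $\bigO(\log n)$ bits), obtain the scalar values $\det(A)|_{x=\alpha} = \det(A(\alpha)) \in \F_{q'}$, and interpolate, since $\det(A) \pmod q$ has degree $< \operatorname{poly}(n)$ and interpolation of a polynomial from polynomially many point-value pairs is expressible in $\FOMaj(+,\times)$ (iterated sums and products of $\operatorname{poly}(n)$ many $\operatorname{poly}(n)$-bit numbers, via \cite{HesseAB02}). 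The crux is then: given a single scalar matrix $M = A(\alpha) \in \F_{q'}^{k \times k}$ with $k \in \bigO(\log^c n)$, define $\det(M) \in \F_{q'}$ in $\FOMaj(+,\times)$. Here I would use that majority quantifiers let us count, hence sum, over sets of polynomial size indexed by tuples of $\bigO(c \log n)$ bits — in particular over the $k! \le 2^{\operatorname{poly}(n)}\cdot$… wait, $k!$ with $k = \operatorname{polylog}(n)$ is quasi-polynomial, not polynomial, so the naive Leibniz sum is \emph{not} directly available. Instead I would run the same self-reducibility argument as in Proposition~\ref{proposition:determinant_characterization}: guess a permutation $\pi$ of $[k]$ (storable in $\bigO(k \log k) = \bigO(\operatorname{polylog} n)$ bits, hence existentially quantifiable), guess the witnesses $b_2,\dots,b_k, d_1,\dots,d_k \in \F_{q'}$ (again $\operatorname{polylog} n$ bits total), and verify the defining equations of Proposition~\ref{proposition:determinant_characterization}; the zeroness tests for the sub-determinants $\det(A^b_i)$ are done by guessing a vanishing linear combination of columns, which now has $k = \operatorname{polylog} n$ coordinates each in $\F_{q'}$, storable in $\operatorname{polylog} n$ bits, and checking it requires summing $k$ field elements — an iterated sum of $\operatorname{polylog} n$ numbers, within $\FOMaj(+,\times)$ (indeed within $\FO(+,\times)$). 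Assembling these, $\det(M)$ is $\pm d_k$ for the guessed-and-verified data, with the sign determined by the parity of $\pi$, which is $\FOMaj$-definable by counting inversions.

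**Main obstacle.** The delicate point is the same one hidden in Theorem~\ref{theorem:determinant_in_fo}: Proposition~\ref{proposition:determinant_characterization} only applies after a suitable column permutation guaranteeing $[A]_{ii}\neq 0$ and $\det(A_i)\neq 0$ for all $i$, which exists by Lemma~\ref{lemma:permutations} only when $M$ is nonsingular — so the singular case must be split off first by the column-dependency guess, and one must check that \emph{all} the nested guessed objects (permutation, the $b_i$, the $d_i$, and all the linear-combination witnesses for the $\bigO(k)$ zeroness subtests, which themselves recurse) together fit into a single $\bigO(\operatorname{polylog} n)$-bit existential guess. Since $k \in \bigO(\log^c n)$ and each guessed number has $\bigO(\log n)$ bits (entries of $M$ are in $\F_{q'}$ with $\bigO(\log n)$-bit $q'$), the total is $\bigO(\log^{c+1} n)$ bits, still only polynomially many values, hence legitimately existentially quantifiable by a single first-order block with the help of the pairing/encoding of polylog-many polylog-bit numbers noted in Section~\ref{section:preliminaries}. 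The verification uses only iterated sums/products of polynomially many polynomially-bounded numbers, i.e.\ Lemma~\ref{lemma:computations_in_fo} generalized to $\FOMaj$ via \cite{HesseAB02}. Everything else — the CRR reduction over $\Z$, the evaluation–interpolation reduction over $\F_q[x]$, passing to $\F_{q'}$ — is routine $\FOMaj(+,\times)$ bookkeeping, so I expect the write-up to be short once the encoding budget in the nonsingular case is checked carefully.
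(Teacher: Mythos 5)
There is a genuine gap in your central step, and it is exactly the obstruction that confines Theorem~\ref{theorem:determinant_in_fo} to $k=\frac{\log n}{\log\log n}$. You plan to existentially guess a permutation $\pi$ of $[k]$ together with the witnesses $b_2,\dots,b_k,\,d_1,\dots,d_k$ from Proposition~\ref{proposition:determinant_characterization} and the column-dependency certificates for the zeroness subtests. You estimate this whole bundle at $\bigO(\log^{c+1} n)$ bits and assert this is ``still only polynomially many values, hence legitimately existentially quantifiable.'' Both halves of that sentence fail for $c\geq 1$: $2^{\bigO(\log^{c+1} n)}$ is quasi-polynomial, not polynomial, and — more to the point — a constant block of first-order (or majority) quantifiers ranges over a constant number of domain elements, i.e.\ $\bigO(\log n)$ bits. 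A permutation of $[k]$ for $k\in\bigO(\log^c n)$ already needs $\Theta(\log^c n \cdot \log\log n)$ bits; to quantify over that you would need $\omega(1)$, indeed $\Theta(\log^{c-1} n \cdot \log\log n)$, variables, so no fixed $\FOMaj(+,\times)$ formula can make this guess. The reason the analogous guess works in Theorem~\ref{theorem:determinant_in_fo} is that there $k=\frac{\log n}{\log\log n}$ and each index/field element has only $\bigO(\log\log n)$ bits, so the entire bundle packs into $\bigO(\log n)$ bits — exactly one (or constantly many) first-order variable. Your CRR-plus-evaluation/interpolation reductions are fine and indeed similar in spirit to machinery used elsewhere in the paper, but they cannot rescue the guess, and they are also not what the paper does here.

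The paper's proof of this lemma goes a genuinely different route and avoids any large existential guess. It reduces the $k\times k$ determinant to an iterated product of $k$ matrices of dimension at most $(k+1)\times(k+1)$ (Cook; the reduction is $\TC^0$-uniform by Mahajan--Vinay), and then computes the iterated matrix product of $\bigO(\log^c n)$ polynomial matrices in uniform $\TC^0$ by recursive fragmentation: group the matrices into blocks of length $\bigO(\sqrt{\log n})$; a block product has each entry equal to a sum of $(\log^c n)^{\bigO(\sqrt{\log n})}=n^{\bigO(1)}$ products of $\bigO(\sqrt{\log n})$ polynomials, which is within $\FOMaj(+,\times)$'s iterated-sum/iterated-product power \cite{HesseAB02}; recurse on the $\bigO(\log^{c-1/2} n)$ block products for $2c$ rounds. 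This replaces ``guess a small certificate'' by ``directly aggregate polynomially many contributions,'' which is precisely where the extra counting power of majority quantifiers is used — not to decode a larger guess, but to avoid guessing altogether. If you want to keep a guess-and-verify flavour, you would have to compress the witness to $\bigO(\log n)$ bits total, and there is no obvious way to do that once $k$ exceeds $\frac{\log n}{\log\log n}$.
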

\begin{proof}
  We show that the value can be computed in uniform $\TC^0$, which is as powerful as $\FOMaj(+, \times)$ \cite{BarringtonIS90}.

  Computing the determinant of an $k \times k$ matrix is equivalent to computing the iterated matrix product of $k$ matrices of dimension at most $(k+1) \times (k+1)$ \cite{Cook85}, and this reduction is a uniform $\TC^0$-reduction as can be seen implicitly in \cite[p.~482]{MahajanV99}. Thus the lemma statement follows from the fact that iterated products of matrices $A_1, \ldots, A_k \in \Z[x]^{k \times k}$ with $k \in \bigO(\log^c n)$ can be computed in uniform $\TC^0$, which can be proven like in~\mbox{\cite[p.~69]{AgrawalV08}}. \onlyLong{The full proof can be found in the appendix.}
\end{proof}

\begin{proofof}{Theorem \ref{theorem:inverse_coefficients}}
  The dynamic program maintains an $n$-approximation $C \in \Z[x]^{n \times n}$ of $A^{-1}$ that truncates $A^{-1}$ at degree $n$. When $A$ is updated to $A + \Delta A$ then:
  \begin{enumerate}
   \item $\Delta A$ is decomposed into suitable $U \in \Z[x]^{n \times k}, B \in \Z[x]^{k \times k},$ and $V \in \Z[x]^{k \times n}$;
   \item $C$ is updated via $C' \df C-CU(I+BVCU)^{-1}BVC$;
   \item All entries of $C'$ are truncated at degree $n$.
  \end{enumerate}
  
  The steps can be defined in $\FOMaj(+, \times)$ due to Lemma \ref{lemma:change_decomposition}, Lemma \ref{lemma:determinant_polylog}, and the fact that iterated addition and multiplication of polynomials can be defined in $\FOMaj(+, \times)$, see~\cite{HesseAB02}. The maintained matrix $C$ is indeed an $n$-approximation of $A^{-1}$ due to Proposition~\ref{proposition:approximation_update}.
\end{proofof}

From the proof of Theorem \ref{theorem:inverse_coefficients} it is clear that the main obstacle towards maintaining distances for changes that affect a larger set of nodes is to compute determinants of larger matrices. Since distances can be computed in $\NL$, only classes below $\NL$ are interesting from a dynamic perspective. As an example we state a result for the circuit class $\NC^1$.

\begin{corollary}\label{corollary:distances_NC1}
  Reachability and Distance can be maintained in $\DynNC^1$ under changes that affect $\bigO(2^{\sqrt{\log n/\log^*n}})$ nodes.   
\end{corollary}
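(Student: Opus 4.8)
The plan is to follow the same template as the proof of Theorem~\ref{theorem:inverse_coefficients}, replacing the use of Lemma~\ref{lemma:determinant_polylog} by a suitable $\NC^1$ statement about determinants of larger matrices. By the proof of Theorem~\ref{theorem:inverse_coefficients}, maintaining Reachability and Distance under changes affecting $k$ nodes reduces, via the Sherman-Morrison-Woodbury identity applied to $A = I - xA_G$ truncated at degree $n$, to three tasks performed after each change: decomposing $\Delta A$ into $UBV$ (in $\FO(+,\times)$ by Lemma~\ref{lemma:change_decomposition}, hence in $\DynNC^1$); computing iterated sums and products of polynomials of degree $\bigO(n)$ over the $n\times n$ and $k\times k$ matrices occurring on the right-hand side of $(\star)$ (in uniform $\TC^0 \subseteq \NC^1$, see~\cite{HesseAB02}); and inverting the $k\times k$ matrix $I + BVCU$ over polynomials truncated at degree $n$, which by $[D^{-1}]_{ij} = (-1)^{i+j}\det(D_{ji})/\det(D)$ reduces to computing determinants of $k\times k$ matrices with polynomial entries of degree $\bigO(n)$.

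So the only point where the bound on $k$ enters is the computation of the determinant of a $k\times k$ matrix of polynomials. The plan is to argue that this determinant can be computed in uniform $\NC^1$ when $k \in \bigO(2^{\sqrt{\log n/\log^* n}})$. As in the proof of Lemma~\ref{lemma:determinant_polylog}, the $\TC^0$-reduction from the determinant of a $k\times k$ matrix to the iterated product of $\bigO(k)$ matrices of dimension $\bigO(k)\times\bigO(k)$ (following~\cite{Cook85,MahajanV99}) still applies; what changes is the complexity of iterated matrix multiplication as a function of $k$. Here I would invoke the Agrawal--Vinay-style analysis of block matrix products~\cite{AgrawalV08}: iterated products of a growing number of growing matrices can be computed by a bounded-depth recursion, and tracking the resulting size/depth tradeoff shows that for $k$ up to roughly $2^{\sqrt{\log n/\log^* n}}$ the iterated product — and hence the determinant — of $k\times k$ matrices of polynomials of degree $\bigO(n)$ lands in (uniform) $\NC^1$. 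The arithmetic on polynomials of degree $\bigO(n)$, including iterated products of polylogarithmically many such polynomials, is itself in $\NC^1$, so it does not dominate.

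Assembling these pieces: the initialization computes the truncated inverse $C$ of $I - xA_G$ for the empty graph (trivial), and the update procedure performs Steps~1--3 from the proof of Theorem~\ref{theorem:inverse_coefficients}, each of which is now an $\NC^1$ computation; correctness of the maintained $n$-approximation follows from Proposition~\ref{proposition:approximation_update} and Lemma~\ref{lemma:approximation_closure} exactly as before, and the distance between $s$ and $t$ is read off as the smallest $i < n$ with a non-zero $i$th coefficient of $[C]_{st}$, which is first-order. Since every change affecting $\bigO(2^{\sqrt{\log n/\log^* n}})$ nodes splits into constantly many changes affecting $k \df 2^{\sqrt{\log n/\log^* n}}$ nodes, this yields the corollary.

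The main obstacle is the precise $\NC^1$ analysis of iterated matrix multiplication for $k\times k$ matrices over polynomials with $k$ super-polylogarithmic: one must carefully balance the recursion depth of the block-matrix scheme of~\cite{AgrawalV08} against the bit-size blow-up coming from both the $\bigO(n)$-degree polynomial entries and the iterated products, and verify that the resulting circuit family is uniform and of logarithmic depth exactly in the regime $k \le 2^{\sqrt{\log n/\log^* n}}$. Everything else is a direct transcription of the reachability/distance argument already carried out for polylogarithmic changes.
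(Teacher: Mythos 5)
You correctly isolate the bottleneck — computing the determinant of a $k\times k$ matrix over $\Z[x]$ in uniform $\NC^1$ — and you correctly observe that the rest of the argument transcribes from Theorem~\ref{theorem:inverse_coefficients}. But the step you flag as ``the main obstacle'' is not merely unverified; as sketched, it would not go through, and the paper's proof of Lemma~\ref{lemma:distance_NC1} uses two ideas you have not mentioned.

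First, the paper does \emph{not} carry out the iterated matrix product over polynomial entries. It first evaluates the matrix $B$ at $n^2$ integer points $a_1,\ldots,a_{n^2}$, computes the integer determinants $\det(B(a_i))$ in parallel, and recovers the polynomial $\det(B)$ at the end by interpolation (which is in $\TC^0 \subseteq \NC^1$). This evaluation-then-interpolate step is essential: it turns the depth-critical part into a purely \emph{integer} arithmetic formula. Your plan keeps degree-$\bigO(n)$ polynomial entries throughout, and the arithmetic on those at every level of the recursion sits in $\TC^0$ but has no good \emph{formula-depth} bound, so the overall depth control collapses for super-polylogarithmic $k$. Second, and more importantly, the paper's bound $k = \bigO(2^{\sqrt{\log n/\log^* n}})$ comes from a concrete depth calculation that you never perform: a binary tree of $\log k$ matrix multiplications with each level's $+$-fan-in expanded into a binary formula of depth $\bigO(\log k)$ gives an arithmetic \emph{formula} of depth $\bigO((\log k)^2)$ over $\Z$; Jung's theorem then says arithmetic formulas of depth $\bigO(\log n/\log^* n)$ are evaluable in uniform $\NC^1$, and $(\log k)^2 = \bigO(\log n/\log^* n)$ is exactly where $k = \bigO(2^{\sqrt{\log n/\log^* n}})$ comes from. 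The Agrawal--Vinay block scheme you cite (which the paper uses for Lemma~\ref{lemma:determinant_polylog}) sums polynomially many terms at once and is inherently a $\TC^0$ argument; it does not yield $\NC^1$ for $k$ beyond polylogarithmic, and ``tracking the size/depth tradeoff'' in that scheme will not produce the stated threshold. To make your proof correct you need to replace that hand-wave with the evaluation/interpolation reduction to integer matrices and the Jung-theorem depth bound on arithmetic formulas.
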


Here $\log^*n$ denotes the smallest number $i$ such that $i$-fold application of $\log$ yields a number smaller than $1$. \onlyLong{The corollary follows by plugging Lemma \ref{lemma:distance_NC1} into the proof above. %

\begin{lemma}\label{lemma:distance_NC1}
Fix a domain of size $n$. The determinant of a matrix $B \in \Z[x]^{k \times k}$, with entries of degree polynomial in $n$, can be computed in uniform $\NC^1$ for $k \in \bigO(2^{\sqrt{\log n/\log^*n}})$. 
\end{lemma}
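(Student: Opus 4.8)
The plan is to prove Lemma~\ref{lemma:distance_NC1} in the same way as Lemma~\ref{lemma:determinant_polylog}, replacing its $\TC^0$ ingredients by $\NC^1$ ingredients and pushing the admissible matrix dimension as far as the depth budget allows. As in that proof, computing the determinant of a $k \times k$ matrix over $\Z[x]$ reduces, by the uniform reduction implicit in \cite[p.~482]{MahajanV99} and going back to \cite{Cook85}, to computing one designated entry of an iterated product $A_1 \cdots A_{\bigO(k)}$ of $\bigO(k)$ matrices of dimension at most $k+1$ over $\Z[x]$. For $k \in \bigO(2^{\sqrt{\log n/\log^* n}})$ both the degree and the bit-length of the coefficients of this determinant — and of every partial product arising on the way — stay polynomial in $n$, so all scalars occurring in the computation have polynomially many bits and their sums, products, and iterated versions are available in uniform $\NC^1$ (in fact in uniform $\TC^0$, see~\cite{HesseAB02}). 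It therefore suffices to compute, in uniform $\NC^1$, iterated products of $m = \bigO(k)$ matrices of dimension $d = \bigO(k)$ with polynomially bounded polynomial entries, for $k$ in the claimed range.

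For this one sets up a divide-and-conquer recursion in the spirit of the Agrawal--Vinay construction \cite{AgrawalV08} used for Lemma~\ref{lemma:determinant_polylog}: partition the $m$ factors into blocks, compute the block products in parallel, and multiply the block products together, recursing on the block size and on the number of blocks. A block of $t$ factors expands into a sum of $d^{t-1}$ products of $t$ scalars each, i.e.\ one iterated-product layer followed by one iterated-addition layer; this is a constant number of $\TC^0$-layers — hence $\bigO(\log n)$ Boolean depth — as long as $d^{t-1}$ is polynomial in $n$. To afford more than a constant number of recursion levels, one moreover works in Chinese-remainder representation over short primes and, as in the proof of Theorem~\ref{theorem:determinant_in_fo}, iterates this representation by recursively handling arithmetic on $b$-bit residues over $\bigO(\log b)$-bit primes, driving the per-layer Boolean depth down so that several layers fit inside $\bigO(\log n)$. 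Iterating the residue-shrinking step until residues have constant size takes $\bigO(\log^* n)$ rounds, and balancing the $\bigO(\log^2 k)$ arithmetic depth of the iterated product against a per-layer cost that this machinery makes as small as the construction permits, one arrives at the condition $\log^2 k \cdot \log^* n = \bigO(\log n)$, i.e.\ $k \in \bigO(2^{\sqrt{\log n/\log^* n}})$. Corollary~\ref{corollary:distances_NC1} then follows by substituting Lemma~\ref{lemma:distance_NC1} for Lemma~\ref{lemma:determinant_polylog} in the proof of Theorem~\ref{theorem:inverse_coefficients}.

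The main obstacle is this depth bookkeeping. A naive balanced-tree evaluation of $A_1 \cdots A_m$ uses $\Theta(\log m)$ matrix-multiplication levels; turned into a Boolean circuit over polynomially many bits, each level costs $\Theta(\log n)$ depth, for a total of $\Theta(\log m \cdot \log n)$, which already exceeds the $\NC^1$ budget for non-constant $m$. Beating even $m = \mathrm{polylog}\,n$ forces two opposing ideas to be reconciled: collapsing many consecutive sum/product levels into a single constant-depth $\TC^0$ chunk, which is only legitimate while the relevant fan-in $d^{t-1}$ stays polynomial; and shrinking the bit-length of the scalars by iterated Chinese remaindering, so that each such chunk is much cheaper than $\bigO(\log n)$ Boolean depth. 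Carrying out the resulting two-parameter recursion so that both constraints hold simultaneously is the technical heart of the lemma, and it is precisely this balancing act that produces the unusual bound $2^{\sqrt{\log n/\log^* n}}$.
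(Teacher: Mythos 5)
You have the right overall shape: reduce the determinant to an iterated matrix product, bound the arithmetic depth of the resulting computation by $\bigO(\log^2 k)$, use Chinese-remaindering with progressively shorter primes to turn each arithmetic layer into a cheap Boolean layer, and balance $\log^2 k$ against the resulting per-layer cost to get $k\in\bigO(2^{\sqrt{\log n/\log^*n}})$. That is exactly the calculation the paper makes. But there are two places where your write-up would not survive as a proof, both of which the paper handles more cleanly.

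First, the device you are gesturing at --- ``iterate the residue-shrinking step until residues have constant size, $\bigO(\log^*n)$ rounds'' --- \emph{is} Jung's theorem (cited in the paper as \cite{Jung85}, with Allender's survey \cite{Allender04} for a short proof): an arithmetic formula of depth $d$ over polynomially-bounded integers can be evaluated by a uniform Boolean circuit of depth $\bigO(d\cdot\log^*n)$. You are in effect re-deriving this as a side remark rather than stating and applying it, and as a result the claimed per-layer cost and the condition $\log^2 k\cdot\log^*n=\bigO(\log n)$ are asserted rather than derived. Once Jung's theorem is on the table the remaining computation is one line; without it your ``balancing act'' paragraph is not a proof.

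Second, Jung's theorem and the CRR machinery behind it are about integer arithmetic, while you keep the entries as polynomials in $\Z[x]$ throughout and speak of Chinese-remaindering their coefficients, which does not reduce the polynomial-multiplication cost. The paper first replaces $B$ by its evaluations $B(a_1),\dots,B(a_{n^2})$ at $n^2$ distinct integers, computes each $\det(B(a_i))\in\Z$ by the arithmetic-formula argument (balanced binary tree of matrix products, depth $\log k$, unbounded $+$-fan-in reduced to binary, giving arithmetic formula depth $\bigO(\log^2 k)$), and recovers $\det(B)\in\Z[x]$ at the end by interpolation in $\TC^0$ (\cite{HesseAB02}). This evaluate-then-interpolate step is what makes the Jung's-theorem application legitimate, and it is missing from your proposal. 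Relatedly, your claim that ``sums, products, and iterated versions are available in uniform $\NC^1$'' for all intermediate scalars is true per operation but, as you yourself note two sentences later, composing $\Theta(\log k)$ of them naively blows the $\NC^1$ budget --- so the earlier sentence is misleading as an ingredient of the proof. Also, your asserted $\bigO(\log^2 k)$ arithmetic depth is not derived from the Agrawal--Vinay block expansion you set up; the paper obtains it from the balanced binary tree plus explicit fan-in reduction, and if you want to keep the block-expansion picture you should do that calculation rather than borrow the $\log^2 k$ figure without justification.
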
 
}

\section{Conclusion}\label{section:conclusion}
For us it came as a surprise that Reachability can be maintained under changes of non-constant size, without any structural restrictions.
In contrast, the dynamic program for Reachability from \cite{DattaKMSZ15} can only deal with changing $\log n$ many outgoing edges of single nodes (or, symmetrically, $\log n$ many incoming edges; a combination is not possible). \onlyLong{For that program it is essential that only single rows of the adjacency matrix are changed. }

It would be interesting to improve our results for $\DynFO(+, \times)$ to changes of size~$\bigO(\log n)$. The obstacle is the computation of determinants of matrices of this size, which we can only do for $\bigO(\frac{\log n}{\log \log n})$ size matrices. Yet in principle our approach can deal with certain changes that affect more nodes: the matrices $U$ and $V$ in the Sherman-Morrison-Woodbury identity can be chosen differently, as long as all computations involve only adding $\bigO(\log n)$ numbers.  

One of the big remaining open questions in dynamic complexity is whether distances are in $\DynFO$. Our approach sheds some light\onlyLong{ on this question}. It can be adapted so as to maintain information within $\DynFO(+, \times)$ from which shortest distances can be extracted in $\FOMaj(+, \times)$. \onlyLong{The technical proof of this result is deferred to the appendix.}

\begin{theorem}\label{theorem:towards_distances_in_dynfo}
Distances can be defined by a $\FOMaj(+, \times)$ query from auxiliary relations that can be maintained in $\DynFO(+, \times)$ under changes that affect $\bigO(\frac{\log n}{\log \log n})$ nodes.
\end{theorem}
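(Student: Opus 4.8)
The plan is to transfer the generating-function approach of Section~\ref{section:distances_woodbury} to the weaker update logic $\FO(+, \times)$ by keeping all \emph{maintained} data scalar and postponing the genuinely polynomial computations to query time. Recall that the distance from $s$ to $t$ in $G$ equals the least $\ell < n$ for which $(A_G^\ell)_{st} \neq 0$, i.e.\ the least $\ell$ with a non-zero $\ell$-th coefficient in the rational function $f_{st}(x) := [(I - xA_G)^{-1}]_{st} = P_{st}(x)/Q(x)$, where $Q(x) := \det(I - xA_G)$ has degree at most $n$ and constant term $1$, and $P_{st}(x) := [(I - xA_G)^{-1}]_{st}\cdot\det(I - xA_G)$ is the $(s,t)$-entry of the adjugate of $I - xA_G$, a polynomial of degree at most $n-1$. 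Maintaining $f_{st}$ symbolically, as in Theorem~\ref{theorem:inverse_coefficients}, forces the update formulas to multiply polynomials of degree $\Theta(n)$ modulo a prime, which is a linear-size convolution and hence provably outside $\FO(+, \times)$. Instead the plan is to fix sufficiently many primes $p$ and scalar evaluation points $\xi$ and to maintain only the residues $[(I - \xi A_G)^{-1}]_{st} \bmod p$ and $\det(I - \xi A_G) \bmod p$; each of their updates touches only \emph{scalar} arithmetic modulo $p$ and therefore stays in $\FO(+, \times)$, while the degree-$\Theta(n)$ polynomial information needed for distances is recovered from the evaluations by an $\FOMaj(+, \times)$ query.

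Concretely, I would fix the first $2n$ primes $p > n^2$ (all among the first $\bigO(n^2)$ numbers by the Prime Number Theorem) and the evaluation points $0, 1, \ldots, n^2 - 1$, which are pairwise distinct modulo each chosen prime. For each pair $(\xi, p)$ the dynamic program maintains $[(I - \xi A_G)^{-1}]_{st} \bmod p$ and $\det(I - \xi A_G) \bmod p$. A change to $G$ affecting $k := \frac{\log n}{\log \log n}$ nodes yields, for every $\xi$, the change matrix $-\xi\,\Delta A_G$, which has at most $k$ non-zero rows and columns; decomposing it as $UBV$ (Lemma~\ref{lemma:change_decomposition}) and applying Equation~\ref{equation:woodbury}, the inverse residue is updated by Proposition~\ref{theorem:inverse_modp} and the determinant residue by the matrix-determinant identity $\det(A + UBV) = \det(A)\det(I+BVA^{-1}U)$, where the determinant and inverse of the $k \times k$ matrix $I+BVA^{-1}U$ over $\Z_p$ come from Theorem~\ref{theorem:determinant_in_fo} and the remaining matrix products are $\bigO(k)$-fold scalar sums and products over $\Z_p$, definable by Lemma~\ref{lemma:computations_in_fo}. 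Whenever the updated matrix $I - \xi A_G$ turns singular modulo $p$ (equivalently $\det(I+BVA^{-1}U) \equiv 0 \bmod p$) the pair $(\xi, p)$ is discarded. Exactly as in the proof of Theorem~\ref{theorem:non_zero_inverse}, the auxiliary data is set up by an $\AC^2$ precomputation and, by Theorem~\ref{theorem:fewerChanges}, only needs to survive $\bigO(\log^2 n)$ change steps; since the degree-$\le n$ polynomial $x \mapsto \det(I - xA_G)$ has constant term $1$ and thus at most $n$ roots modulo $p$ at each step, fewer than $n\log^2 n$ points are discarded per prime, so for large $n$ every prime retains well more than $n$ live points. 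The maintained relations are almost domain-independent --- an isolated vertex contributes only trivial identity entries and determinant value $1$ --- so Theorem~\ref{theorem:fewerChanges} shows that they are maintained in $\DynFO(+, \times)$.

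It remains to define the distance query from these relations in $\FOMaj(+, \times)$. Fix $s, t$. For each chosen prime $p$ --- all of which retain more than $n$ live points --- Lagrange interpolation of the values $\det(I - \xi A_G) \bmod p$ at $n+1$ live points recovers $Q(x) \bmod p$, and the products $[(I - \xi A_G)^{-1}]_{st}\cdot\det(I - \xi A_G)$, which equal $P_{st}(\xi) \bmod p$, at $n+1$ live points recover $P_{st}(x) \bmod p$. Since $1 - Q(x)$ has zero constant term, $Q(x)^{-1} \equiv \sum_{m=0}^{n-1}(1 - Q(x))^m \pmod{x^n}$, so $c^{(p)}_\ell := (A_G^\ell)_{st} \bmod p$ is the $\ell$-th coefficient of $P_{st}(x)\,Q(x)^{-1}$ modulo $(p, x^n)$, for every $\ell < n$. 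The two Lagrange interpolations and the truncated power-series inversion are iterated sums of iterated products of polynomially many numbers and polynomials over $\Z_p$, hence computable in uniform $\TC^0 = \FOMaj(+, \times)$~\cite{BarringtonIS90, HesseAB02}. Finally, since $(A_G^\ell)_{st} < n^n$ while the product of the chosen primes exceeds $n^n$, the integer $(A_G^\ell)_{st}$ is non-zero if and only if $c^{(p)}_\ell \neq 0$ for some chosen prime $p$, and the distance between $s$ and $t$ is the least $\ell < n$ for which this holds --- a first-order minimisation on top of the $\FOMaj(+, \times)$ computation of the $c^{(p)}_\ell$ --- while $t$ is unreachable from $s$ and no distance tuple is produced if no such $\ell$ exists.

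The main obstacle is precisely the tension visible above: the information ultimately needed lives in polynomials of degree $\Theta(n)$, whose arithmetic (polynomial multiplication, interpolation, power-series inversion) is inherently $\TC^0$ and provably not $\FO(+, \times)$, so a naive symbolic maintenance cannot be carried out with first-order updates. The plan circumvents this by keeping the maintained data purely scalar --- evaluations of $f_{st}$ and of $\det(I - xA_G)$ at many points modulo many primes, updatable by the Section~\ref{section:reachability_woodbury} machinery with only $\bigO(\log n)$-fold scalar arithmetic over $\Z_p$ --- and by paying the single $\FOMaj(+, \times)$ reconstruction (two interpolations and one truncated power-series inversion) only at query time. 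The remaining care concerns the accounting of surviving $(\xi, p)$-pairs over an epoch of $\bigO(\log^2 n)$ steps, which is handled by the root count above just as primes are counted in the proof of Theorem~\ref{theorem:non_zero_inverse}.
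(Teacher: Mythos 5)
Your proposal is correct, and it takes a genuinely different route from the paper's.

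The paper's proof of this theorem (via Theorem~\ref{theorem:inverse_coefficients_tcquery} in the appendix) maintains, for many points $a$ and primes $p$, the evaluations $g_{st}(a) \bmod p$ and $h_{st}(a) \bmod p$ of an explicitly tracked quotient $n$-approximation $\frac{g_{st}(x)}{h_{st}(x)}$ of $[A^{-1}]_{st}$, where the polynomials $g_{st}, h_{st}$ are whatever Algorithm~\ref{algorithm:woodbury} happens to produce. Since that algorithm multiplies denominators together, the degrees and coefficient sizes of $g_{st}$ and $h_{st}$ \emph{grow at each step}; the paper therefore needs Lemma~\ref{lemma:quotient_approximation_update} to bound this growth by $n k^{3k}$, and correspondingly needs large supplies of points and primes sized to these bounds, and uses Cauchy (rational-function) interpolation at query time. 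You avoid this degree-growth analysis entirely by maintaining, alongside the inverse residues, the determinant residues $\det(I-\xi A_G) \bmod p$ via the matrix-determinant lemma $\det(A+UBV)=\det(A)\det(I+BVA^{-1}U)$. This lets you reconstruct at query time not some history-dependent pair $(g_{st}, h_{st})$ but the \emph{canonical} representatives $P_{st}(x)=\operatorname{adj}(I-xA_G)_{st}$ and $Q(x)=\det(I-xA_G)$, which have intrinsically bounded degree at most $n$ no matter how many updates have occurred, so two Lagrange interpolations at $n+1$ live points suffice, followed by the elementary truncated power-series inversion $Q^{-1}\equiv\sum_{m<n}(1-Q)^m\pmod{x^n}$. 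Your accounting ($n^2$ points per prime, at most $n$ points lost per step as roots of a degree-$\le n$ polynomial with constant term $1$) is sound, and your appeal to the matrix-determinant lemma and Sylvester's identity is correct. What the paper's route buys is generality --- its Theorem~\ref{theorem:inverse_coefficients_tcquery} is stated for arbitrary $A\in\Z[[x]]^{n\times n}$ --- while yours buys simplicity by exploiting the specific structure $A=I-xA_G$ and making the determinant a first-class maintained object rather than an implicit byproduct.

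One small caveat worth stating explicitly: Theorem~\ref{theorem:fewerChanges} is phrased for maintaining a \emph{query}, not a tuple of auxiliary relations whose meaning depends on the domain size (your evaluation points and primes), and neither your proof nor the paper's invokes it literally; both rely on the informal observation that the technique of that theorem carries over. The paper acknowledges this with ``an inspection of the proof of Theorem~\ref{theorem:fewerChanges} shows \ldots'', and you should too, rather than citing it as a black box.
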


\bibliography{bibliography}

\onlyLong{
  \section*{Appendix}
\section{Background on Formal Power Series}
Recall that $\Z$ is an integral domain and has the only units $1$ and $-1$. By $\Z[[x]]$ we denote the ring of formal power series over $\Z$, i.e. the ring with elements $\sum_i c_i x^i$ and natural addition an multiplication. An element $\sum_i c_i x^i \in \Z[[x]]$ is \emph{normalized} if $c_0 = 1$. Since $\Z$ is an integral domain, all normalized elements of $\Z[[x]]$ have an inverse. The integral domain $\Z[[X]$ can be embedded into its field of fractions $\Z((x)) \df \{\frac{g(x)}{h(x)} \mid g(x), h(x) \in Z[[x]] \text{ and } h(x) \neq 0$. We denote the subring of $\Z[[x]]$ consisting of all finite polynomials by $\Z[x]$; and the field of fractions of $\Z[x]$ by $\Z(x)$.

A matrix $A \in \Z[[x]]^{n \times n}$ is invertible over $\Z[[x]]$ if there is a matrix $B \in  \Z[[x]]^{n \times n}$ with $AB = I$. The matrix $A \in \Z[[x]]^{n \times n}$ is invertible over $\Z[[x]]$ if and only if it is invertible in $\Z((x))$ and the constant term of $\det(A)$ is a unit of $\Z$, i.e. it is $1$ or $-1$. 

For a polynomial $g(x) \in \Z[x]$ we abbreviate its degree by $\deg g(x)$ and write $\|g(x)\|$ for the value of its largest coefficient. The degree of a representation $\frac{g(x)}{h(x}$ of an element of $\Z(x)$ is the maximum of the degrees of $g(x)$ and $h(x)$, and similarly for the largest coefficient. Degree and maximal coefficient are defined similarly for matrices over $\Z[x]$ and $\Z(x)$

If $A$ is of the form $I + xC$ for some matrix $C \in \Z[[x]]^{n \times n}$ then $A$ is invertible over $\Z[[x]]$ as $\det(I + xC)$ exists and is a normalized polynomial.

\section{Proofs of Section \ref{section:distances_woodbury}}

\begin{proofof}{Lemma \ref{lemma:approximation_closure}}
  The first two parts of (a) are straightforward. For the last part suppose that $\frac{1}{g(x)} = \sum_i d_i x^i$ and $\frac{1}{g'(x)} = \sum_i d'_i x^i$. Further write $g(x)$ and $g'(x)$ as $g(x) = \sum_{i=0}^m c_i x^i + r(x)$ and $g'(x) = \sum_{i=0}^m c_i x^i + r'(x)$ where $x^{m+1} | r(x)$ and $x^{m+1} | r'(x)$. Then it is easy to see that $c_0 d_0 = 1$ and $\sum_{i=0}^j c_i d_{j-i} = 0$ for all $j \in [m]$. Similarly $c_0 d'_0 = 1$ and $\sum_{i=0}^j c_i d'_{j-i} = 0$. Solving both systems of equations yields $d_i = d'_i$ for $i \in [m]_0$.
  
  The first two parts of (b) follow immediately from (a). For the third part, recall that $A$ is invertible over $\Z[[x]]$ if and only if $\det(A) \neq 0$ and is normalized. This translates, via (a), to the matrix $A'$. Furthermore \[[A^{-1}]_{st} = (-1)^{s+t}\frac{\det(A_{ts})}{\det(A)} \approx_m (-1)^{s+t}\frac{\det(A'_{ts})}{\det(A')} = [A'^{-1}]_{st}\]
\end{proofof}

\begin{proofof}{Lemma \ref{lemma:determinant_polylog}}
  We show that the value can be computed in uniform $\TC^0$, which is as powerful as $\FOMaj(+, \times)$ \cite{BarringtonIS90}.

  Computing the determinant of an $k \times k$ matrix is equivalent to computing iterated matrix product of $k$ matrices of dimension at most $(k+1) \times (k+1)$ \cite{Cook85}, and this reduction is indeed a uniform $\TC^0$-reduction as can be seen implicitly in cf.~\cite[p.~482]{MahajanV99}. Thus the lemma statement follows from the fact that iterated products of matrices $A_1, \ldots, A_k \in \Z[x]^{k \times k}$ with $k \in \bigO(\log^c n)$ can be computed in uniform $\TC^0$, which can be proven in the spirit of \cite[p.~69]{AgrawalV08}. 
  
  For the sake of completeness we outline the proof. We first explain how $\bigO(\sqrt {\log n})$ such matrices can be multiplied.  Each entry in such a product is the sum of $(\log^c n)^{\bigO(\sqrt {\log n})}$ many products of $\bigO(\sqrt {\log n})$ polynomials. Such products can be computed in uniform $\TC^0$ due to \cite[Corollary 6.5]{HesseAB02}. The sum can be computed in $\TC^0$ as it is over at most polynomially many terms:
 \begin{align*}
 (\log^c n)^{\bigO(\sqrt {\log n})} = 2^{\log((\log^c n)^{\bigO(\sqrt {\log n})})} = 2^{\bigO(\sqrt {\log n}) \log \log n} \subseteq 2^{\bigO(\sqrt {\log n}) \bigO(\sqrt {\log n})} = n^{\bigO(1)}.
\end{align*}

  The idea for computing the product of $A_1, \ldots, A_k$ with $k \in \bigO(\log^c n)$ is to partition the sequence into fragments of length $\bigO(\sqrt {\log n})$ each. The product $B_i$ of the matrices of the $i$th fragment can be computed by the procedure from above. As $k \in \bigO(\log^c n)$ there are $\bigO(\log^{c-\frac{1}{2}} n)$ such fragments. This procedure can now be repeated recursively, that is, the sequence $B_i$ is partitioned into fragments of length $\bigO(\sqrt {\log n})$, and so on. After $2c$ repetitions, the final product is obtained.

\end{proofof}

\begin{proofsketchof}{Lemma \ref{lemma:distance_NC1}}
  In order to keep the depth of the $\NC^1$ circuit in $\bigO(\log n)$, the circuit reduces the amount of multiplications of polynomials by computing $\det(B(a_1)), \ldots, \det(B(a_n))$ for distinct integers $a_1, a_2, \ldots, a_{n^2}$. Here $B(a_i)$ denotes the matrix $B$ evaluated at $a_i$. As the determinant of $B$ has degree at most $n^2$, it can be recovered from these values in the end by using interpolation. Interpolating a polynomial is even possible in uniform $\TC^0$, see \cite{HesseAB02}. 

  We follow the outline of Lemma \ref{lemma:determinant_polylog} and use that it suffices to show that iterated products of matrices $A_1, \ldots, A_k \in \Z^{ k\times k}$ with $k \in \bigO(\log^c n)$ can be computed in uniform $\NC^1$.

  Building a tree to compute the product of the matrices two at a time gives a semi-unbounded fan-in arithmetic formula of depth $\log k$ with $+$-fan-in $2r^2$ and $\times$-fan-in $2$. The $+$-fan-in can be reduced to $2$ only to make the circuit depth $(\log r)^2$ by introducing a binary arithmetic formula of depth $O(\log r)$ at each unbounded fan-in $+$-gate.

  We know that $O(\log n)$ depth arithmetic circuits, and therefore equivalently, $O(\log n)$ depth arithmetic formulas can be computed by polynomial size uniform $\NC^1$-circuits of depth $O(\log n \log^*n)$ using Jung's theorem \cite{Jung85} (see Allender's survey \cite{Allender04} for a simple proof). Hence, arithmetic formulas of depth $O(\log n/\log^*n)$ can be computed in $\NC^1$. Thus if we have $(\log k)^2 = O(\log n/\log^*n)$ then $\det (B(a_i))$ is in $\NC^1$, which yields $k = O(2^{\sqrt{\log n/\log^*n}})$. 

\end{proofsketchof}

\section{Proofs of Section \ref{section:conclusion}}

Towards proving Theorem \ref{theorem:towards_distances_in_dynfo} we proceed in the same spirit as for maintaining distances under polylogarithmic changes in $\DynFOMaj(\times, +)$, see Section \ref{section:distances_woodbury}, and prove the following. 

\begin{theorem} \label{theorem:inverse_coefficients_tcquery}
  Suppose $A \in \Z[[x]]^{n \times n}$ contains coefficients that are polynomial in $n$ and stays invertible over $\Z[[x]]$. For all $s, t \in [n]$ one can maintain auxiliary relations in $\DynFO(+, \times)$ under changes that affect  $\bigO(\frac{\log n}{\log \log n})$ nodes, from which the smallest $i < n$ such that the $i$-th coefficient of the $st$-entry of $A^{-1}$ is non-zero can be defined in $\FOMaj(+, \times)$.  
\end{theorem}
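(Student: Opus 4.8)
The plan is to combine the two ``modular'' devices of Section~\ref{section:reachability_woodbury} and Section~\ref{section:distances_woodbury}: work modulo polynomially many primes~$p$ to keep all integers small --- so that the \emph{maintenance} can be carried out in $\FO(+, \times)$ rather than $\FOMaj(+, \times)$ --- and additionally substitute the formal variable~$x$ by polynomially many scalar values~$a$, so that every arithmetic operation performed during an update takes place over the finite field $\Z_p$ and no polynomial multiplication (which would need majority quantifiers) is ever carried out. All reconstruction --- Chinese remaindering over the primes, interpolation over the evaluation points, and power-series division --- is deferred to the $\FOMaj(+, \times)$ query. As in Theorem~\ref{theorem:inverse_coefficients}, maintaining distances reduces to maintaining, for each pair $(s,t)$, the least $i < n$ for which the $i$-th coefficient of $[A^{-1}]_{st}$ is non-zero, where $A \df I - xA_G$. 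Recall that $A^{-1} = \sum_{i \geq 0}(xA_G)^i$, so this coefficient equals $[A_G^i]_{st}$, and that $[A^{-1}]_{st} = (-1)^{s+t}\det(A_{ts})/\det(A)$, where $h(x) \df \det(A)$ is a \emph{normalized} polynomial of degree at most~$n$ and $g_{st}(x) \df (-1)^{s+t}\det(A_{ts})$ has degree less than~$n$.

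For the auxiliary data, fix a polynomial-size set $P$ of primes and a polynomial-size set $E$ of evaluation points, both $\FO(+, \times)$-definable. For every $p \in P$ and $a \in E$ the program maintains the matrix $M_{p,a} \df (I - aA_G)^{-1} \bmod p \in \Z_p^{n \times n}$ and the scalar $\delta_{p,a} \df \det(I - aA_G) \bmod p$, together with a bit marking $(p,a)$ \emph{invalid} as soon as $I - aA_G$ turns singular modulo~$p$; all of this is set up by an $\NC^2$ precomputation (matrix inverse and determinant over a finite field). A change affecting $k \df \bigO(\frac{\log n}{\log \log n})$ nodes replaces $I - aA_G$ by $I - aA_G + \Delta_a$ with $\Delta_a \df -a\,\Delta A_G \bmod p$ having at most $k$ non-zero rows and columns; decomposing $\Delta_a = UBV$ by Lemma~\ref{lemma:change_decomposition}, the Sherman--Morrison--Woodbury identity~\ref{equation:woodbury} together with the matrix-determinant lemma $\det(A + UBV) = \det(A)\cdot\det(I + BVA^{-1}U)$ yields the new $M_{p,a}$ and $\delta_{p,a}$ from the old ones and from the inverse and determinant of a single $k \times k$ matrix over~$\Z_p$. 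By Theorem~\ref{theorem:determinant_in_fo} --- exactly as in Proposition~\ref{theorem:inverse_modp} --- these $k \times k$ operations, and hence the whole update including the singularity test that flips the invalidity bit, are $\FO(+, \times)$-definable. Thus the auxiliary data is maintained in $\DynFO(+, \times)$ under changes that affect $\bigO(\frac{\log n}{\log \log n})$ nodes.

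By the ``squirrel away and restart'' technique underlying Theorem~\ref{theorem:fewerChanges} --- here applied with the auxiliary data maintained in $\DynFO(+, \times)$ and the query extracted from it by a fixed $\FOMaj(+, \times)$ formula --- it suffices to stay correct for $\log^2 n$ change steps after the $\NC^2$ initialisation. In each step, for every fixed $a$ at most $\bigO(n \log n)$ primes divide the non-zero integer $\det(I - aA_G)$, and at most $n$ points~$a$ make this determinant vanish; over $\log^2 n$ steps only polynomially many pairs $(p,a)$ are lost, so taking $|P|$ and $|E|$ to be suitable polynomials guarantees that in the end at least $n+1$ points~$a$ keep $\det(I - aA_G)$ non-vanishing at every intermediate step, and for each such point enough primes of~$P$ remain valid that their product exceeds the $\bigO(n \log n)$-bit integers $h(a)$ and $g_{st}(a)$. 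The query, computed in $\FOMaj(+, \times)$ (uniform $\TC^0$), then goes: for a valid $(p,a)$ read off $h(a) \equiv \delta_{p,a}$ and $g_{st}(a) \equiv \delta_{p,a}\cdot[M_{p,a}]_{st} \pmod p$; Chinese remaindering over the valid primes of a surviving point~$a$ recovers the integers $h(a)$ and $g_{st}(a)$; polynomial interpolation over the surviving points --- $\FOMaj(+, \times)$-definable, see~\cite{HesseAB02} --- recovers the integer polynomials $h(x)$ and $g_{st}(x)$. Since $h$ is normalized, $g_{st}/h \bmod x^n = g_{st}\cdot\bigl(\sum_{j < n}(1 - h)^j\bigr) \bmod x^n$, an iterated sum and product of polynomially many polynomials; carrying this out modulo $\bigO(n)$ further fixed small primes --- again an iterated sum and product of polynomially many field elements, hence in $\TC^0$ --- one tests each coefficient for non-zeroness and outputs the least $i < n$ at which some coefficient is non-zero.

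The delicate point is to keep the maintenance strictly within $\FO(+, \times)$: every polynomial multiplication, interpolation, and Chinese-remainder reconstruction must be pushed into the $\FOMaj(+, \times)$ query, which is why one substitutes scalars for~$x$ and only manipulates $\Z_p$-scalars during updates. For the same reason it is essential to also maintain the determinants $\delta_{p,a}$: having $h(a)$ and $g_{st}(a)$ \emph{separately} reduces the query to ordinary polynomial interpolation, which lies in $\TC^0$, whereas recovering the pair $(g_{st}, h)$ from the values of the rational function $g_{st}/h$ alone would be a rational (Pad\'e) interpolation, i.e.\ the solution of a $\Theta(n)$-sized linear system, which is not known to be in $\TC^0$. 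The bound $\bigO(\frac{\log n}{\log \log n})$ on the number of affected nodes is inherited directly from Theorem~\ref{theorem:determinant_in_fo}, the only place where the parameter~$k$ enters.
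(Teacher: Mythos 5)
Your proposal is correct, but it takes a genuinely different (and arguably cleaner) route than the paper's. The paper maintains a \emph{quotient $n$-approximation} $C=\frac{g(x)}{h(x)}$ of $A^{-1}$ built up step by step via Algorithm~\ref{algorithm:woodbury}, where clearing denominators in the Woodbury step multiplies the degree of $g,h$ by a factor $\Theta(k^3)$ per change (Lemma~\ref{lemma:quotient_approximation_update}). This growth is what forces the paper to initialise with $\FOMaj(+,\times)$ (so that the degree of the stored numerators and denominators is reset to $n$ and $1$) and to restrict to $\frac{\log n}{\log\log n}$ change steps per restart, by the inclusion $\FOMaj(+,\times)\subseteq\AC[\frac{\log n}{\log\log n}]$; it also explains the detour through ``quotient $m$-approximations'' and the need for the initialisation to access earlier auxiliary relations. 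You bypass all of this by observing that the \emph{exact} cofactor polynomials $h(x)=\det(I-xA_G)$ and $g_{st}(x)=(-1)^{s+t}\det((I-xA_G)_{ts})$ already have degree bounded by $n$ at every moment, independently of the change history, and by additionally maintaining the scalar determinant $\delta_{p,a}$ (via the matrix-determinant lemma) so that $g_{st}(a)\equiv\delta_{p,a}[M_{p,a}]_{st}$ and $h(a)\equiv\delta_{p,a}$ can be read off. This makes the maintained object ``stateless'' (no accumulating degree), lets you use a plain $\NC^2$ initialisation and $\log^2 n$ steps directly from Theorem~\ref{theorem:fewerChanges}, and makes the invalidation bookkeeping (points and primes lost per step) simpler to analyse. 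The parameter $k\in\bigO(\frac{\log n}{\log\log n})$ then indeed enters only through Theorem~\ref{theorem:determinant_in_fo}, as you note. One small simplification you missed: by the paper's Lemma~\ref{lemma:coefficient_extraction}, since $h$ is normalized, the smallest $i$ with $[A_G^i]_{st}\neq 0$ equals the smallest $i$ with the $i$-th coefficient of the interpolated $g_{st}(x)$ non-zero, so you do not actually need to form $g_{st}\cdot\sum_{j<n}(1-h)^j$; reading off the coefficients of $g_{st}$ directly suffices and avoids the iterated-product argument in the query step.
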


The approach is the same as before. However, truncating the approximated polynomials does not suffice here as in $\FO(+, \times)$ it is not possible to compute with polynomials of large degree and large coefficients.

Therefore our goal is to maintain an implicit representation of an $n$-approximation $C(x)$ of $A^{-1} \in \Z[[x]]$ from which the smallest non-zero term of each of the entries can be extracted. The idea is to store and update the evaluation $C(a)$ for several numbers $a \in \N$.  If the entries of $C(x)$ have small degree, then the smallest non-zero term can be extracted from this via Cauchy interpolation. However, when only storing $C(x)$ implicitly via $C(a)$ it is not possible to truncate the polynomials after each step, as in the proof of Theorem \ref{theorem:inverse_coefficients}. Furthermore, the update formula for $C(x)$ provided in Lemma \ref{proposition:approximation_update} does not ensure that polynomials keep a small degree if they are not truncated. 

For this reason we proceed as follows. We first introduce a representation where each entry of $C(x)$ is represented by a fraction $\frac{g(x)}{h(x)}$ such that $g(x)$ and $h(x)$ are polynomials of degree $\bigO(n^d)$ for some $d$. The program then stores $g(a)$ and $h(a)$ for several numbers $a \in \N$. Actually the numbers $g(a)$ and $h(a)$ might be very large, indeed exponential in $n$, and therefore we will store all numbers in Chinese remainder representation.

Next we prepare by proving several lemmata that will ensure the correctness of this course of action. Afterwards we prove Theorem \ref{theorem:inverse_coefficients_tcquery} by presenting the dynamic program in detail. 

We start by introducing a representation of the matrix $C(x)$ in terms fractions. A quotient $m$-approximation of a formal power series $f(x)$ is a fraction $\frac{g(x)}{h(x)} \in \Z(x)$ with normalized $h(x)$ such that $\frac{g(x)}{h(x)} \approx_m f(x)$ when $\frac{g(x)}{h(x)}$ is treated as a formal power series. Quotient $m$-approximations for matrices are defined analogously.

\begin{lemma}\label{lemma:quotient_approximation_update}
    Suppose $A \in \Z[[x]]^{n \times n}$ is invertible over $\Z[[x]]$, and $C \in \Z(x)^{n \times n}$ is a quotient $m$-approximation of $A^{-1}$. Then if $A + \Delta A$ is invertible over $\Z[[x]]$ and $\Delta A$ can be written as $UBV$ with $U \in \Z[x]^{n \times k}, B \in \Z[x]^{k \times k},$ and $V \in \Z[x]^{k \times n}$ then \[C' \df C-CU(I+BVCU)^{-1}BVC\] is an $m$-approximation of $(A + \Delta A)^{-1}$.
    Furthermore, if $C'$ is computed with Algorithm \ref{algorithm:woodbury}, $B$ is of the form $xB^*$ for some $B^* \in \Z[x]^{k \times k}$ that takes only values from $\{-1, 0, 1\}$, and $U,V$ take only values from $\{0,1\}$, then
    \begin{enumerate}
      \item $\deg C' \in \bigO(k^3 \deg C)$ and $\|C'\| \in (\|C\| k \deg C)^{\bigO(k^3)}$, and
      \item if the denominators in $C$ are normalized then the denominators of $C'$ are normalized as well. 
    \end{enumerate}
\end{lemma}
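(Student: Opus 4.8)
The plan is to split the argument into the two assertions of the lemma: first the $m$-approximation claim, which is essentially a corollary of Proposition~\ref{proposition:approximation_update}, and then the quantitative bounds, which come down to tracking degrees, coefficient sizes and normalization of denominators through the steps of Algorithm~\ref{algorithm:woodbury}.

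\emph{The $m$-approximation claim.} By definition a quotient $m$-approximation $C$ of $A^{-1}$ is a matrix of fractions with normalized denominators whose power-series expansions $m$-approximate the entries of $A^{-1}$; so, read as an element of $\Z[[x]]^{n\times n}$, we have $C\approx_m A^{-1}$. Proposition~\ref{proposition:approximation_update}, applied to $A$, $C$ and $\Delta A=UBV$, then gives directly that $(A+\Delta A)^{-1}\approx_m C-CU(I+BVCU)^{-1}BVC=C'$, all operations being carried out in the fraction field $\Z((x))$. What remains for this part is to note that when $C'$ is instead computed by Algorithm~\ref{algorithm:woodbury} using $\Z(x)$-arithmetic on explicit numerator/denominator pairs, the result coincides with the $\Z((x))$-valued $C'$ above, provided no denominator ever created has vanishing constant term; that this holds follows from the normalization invariant proved in the next part. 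In particular $I+BVCU$ is invertible over $\Z(x)$: since $B=xB^*$, the constant term of $BVCU$ is $0$, so $\det(I+BVCU)$ has constant term $1$.

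\emph{The degree, coefficient and normalization bounds.} Here I would walk through Algorithm~\ref{algorithm:woodbury} step by step, maintaining for every intermediate matrix an upper bound on its degree, on the size of its largest coefficient $\|\cdot\|$, and the invariant that all its denominators are normalized, using only the elementary estimates $\deg(fg)=\deg f+\deg g$, $\deg(f+g)\le\max(\deg f,\deg g)$, $\|fg\|\le\|f\|\,\|g\|\,(1+\min(\deg f,\deg g))$, and that a product of normalized polynomials is normalized. The crucial observations are the following. First, by the shape of $U$ and $V$ from Lemma~\ref{lemma:change_decomposition} (each column of $U$, resp.\ each row of $V$, is zero or a unit vector), the matrices $CU$, $VC$ and $VCU$ are just column-, row- and $k\times k$-submatrix selections of $C$, so their entries have degree $\le\deg C$, coefficients $\le\|C\|$ and normalized denominators. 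Second, since $B=xB^*$ with $B^*$ constant and $\{-1,0,1\}$-valued, each entry of $D\df I+BVCU$ is (possibly $1$ plus) $x$ times a $\pm1$-combination of at most $k$ entries of $VCU$; bringing such a combination over a common denominator shows every entry of $D$ is a fraction of degree $\bigO(k\deg C)$ with coefficients $(\|C\|\,k\,\deg C)^{\bigO(k)}$ and normalized denominator, and that $D$ has constant term $I$. Third — and this is where the factor $k^3$ enters — computing $\det D$ and the $(k-1)\times(k-1)$ minors of $D$ by clearing the at most $k^2$ denominators (whose product has degree $\bigO(k^2\cdot k\deg C)=\bigO(k^3\deg C)$ and, being a product of normalized polynomials, is normalized), taking an ordinary polynomial determinant, and dividing back, yields fractions of degree $\bigO(k^3\deg C)$ and coefficients $(\|C\|\,k\,\deg C)^{\bigO(k^3)}$; since $D$ has constant term $I$, the numerator of $\det D$ has constant term $1$ and is normalized, hence every entry of $(I+BVCU)^{-1}$, being $\pm$ a ratio of a minor of $D$ to $\det D$, is a fraction with the same degree and coefficient bounds and a normalized denominator. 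Finally, left-multiplying by $CU$, right-multiplying by $VC$ (both with inner dimension $k$, so each output entry is a sum of $k$ products) and subtracting from $C$ only add $\bigO(\deg C)$ to the degree, keep the coefficient bound at $(\|C\|\,k\,\deg C)^{\bigO(k^3)}$, and leave every denominator a product of denominators of $C$ with the normalized numerator of $\det D$, hence normalized. Chaining these estimates gives $\deg C'\in\bigO(k^3\deg C)$, $\|C'\|\in(\|C\|\,k\,\deg C)^{\bigO(k^3)}$ and the normalization statement.

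The bulk of the second part is routine bookkeeping; the point I expect to need the most care is exactly the interface with the first part, namely checking that the $\Z(x)$-arithmetic of Algorithm~\ref{algorithm:woodbury} never produces a denominator with vanishing constant term, so that the algorithm genuinely computes a fraction representation of the power-series inverse rather than some spurious rational function. The hypothesis $B=xB^*$ is precisely what makes this work: it forces $D$ to have constant term $I$, hence $\det D$ to be normalized, which both preserves normalization throughout and confines the degree blow-up to the single $k\times k$ determinant/adjugate computation — the step responsible for the cubic-in-$k$ growth of the degree.
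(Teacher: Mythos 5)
Your proposal follows the same structure as the paper's proof: the $m$-approximation claim is derived from Proposition~\ref{proposition:approximation_update}, and the quantitative bounds come from tracking degrees, coefficients and normalization through Algorithm~\ref{algorithm:woodbury}. You are actually more explicit than the paper in two places where it is terse: you spell out why $B=xB^*$ forces $\det(I+BVCU)$ to be normalized (and hence why the $\Z(x)$-arithmetic of the algorithm is consistent with the $\Z((x))$-arithmetic used by Proposition~\ref{proposition:approximation_update}), and you note that $U,V$ coming from Lemma~\ref{lemma:change_decomposition} make $CU$, $VC$, $VCU$ literal selections of entries of $C$, which the paper uses implicitly.

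There is one place where your degree bookkeeping does not quite match the algorithm and would, taken literally, produce a weaker bound. You combine each entry of $E=I+BVCU$ into a single fraction of degree $\bigO(k\deg C)$ and then take $f(x)$ to be the product of all $k^2$ such denominators, of degree $\bigO(k^3\deg C)$; but then $\det(f(x)E)$ and $\det(f(x)E_{ji})$ have degree $\bigO(k^4\deg C)$, and Algorithm~\ref{algorithm:woodbury} stores the fractions $\frac{\det(f(x)E_{ji})}{\det(f(x)E)}$ as they are, without the ``dividing back'' by $f(x)^{k-1}$ that you invoke to recover $\bigO(k^3\deg C)$. That extra cancellation would need a divisibility argument and is not a step of the algorithm. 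The paper avoids this by a slightly different accounting: the entries of $E$ are viewed as uncombined $\pm1$-sums of (at most) $k$ fractions of degree $\bigO(\deg C)$ each (namely the selected entries of $C$), so the ``denominators of $E$'' in Step~2 of the algorithm are the $k^2$ denominators of $VCU$, of degree $\bigO(\deg C)$ each, giving $f(x)$ of degree $\bigO(k^2\deg C)$, $f(x)E$ with polynomial entries of degree $\bigO(k^2\deg C)$, and hence $\det(f(x)E)$ of degree $\bigO(k^3\deg C)$ directly with no cancellation needed. This is a small point — either bound is polynomial in $k\deg C$ and would suffice for the downstream use in Theorem~\ref{theorem:inverse_coefficients_tcquery} — but as written your argument asserts a reduced fraction that the algorithm does not compute, whereas the paper's accounting tracks exactly what the algorithm stores.
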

\begin{proof}
  That $C'$ is an $m$-approximation of $(A + \Delta A)^{-1}$ follows from Proposition \ref{proposition:approximation_update} and the assumption that $C$ is a quotient $m$-approximation. If all denominators of $C$ are normalized and $B$ is of the form $xB^*$ then all computations preserve that intermediate denominators are normalized.  
  
  It remains to show the bounds on $\deg C'$ and $\|C'\|$. Note that for polynomials $h_1, \ldots, h_k \in \Z[x]$ one has $\deg(\sum_i h_i) = \max_i{\deg(h_i)}$ and $\deg(\prod_i h_i) \in  \bigO(\sum_i \deg(h_i))$ when $\sum_i h_i$ and $\prod_i h_i$ are computed in the naive way. Hence in Algorithm \ref{algorithm:woodbury},  $E$ has degree $\bigO(\deg C)$ and $f(x)$ has degree $\bigO(k^2 \deg C)$, and therefore $\det(f(x)E)$ and $(I+BVCU)^{-1}$ have degree $\bigO(k^3 \deg C)$. It follows that $\deg C' \in \bigO(k^3 \deg C)$. The estimation of $\|C'\|$ is similar, using the facts that $\|\sum_i h_i\| \in \bigO(\sum_i \|h_i\|)$ and $\| \prod_i h_i\| \in \bigO(\prod_i{\deg(h_i)} \prod_i{\|h_i\|})$. 
\end{proof}

\begin{algorithm}
\caption{Updating a quotient $m$-approximation}\label{algorithm:woodbury}
\begin{algorithmic}[1]
\Input Matrices $C \in \Z(x)^{n \times n}, U \in \Z[x]^{n \times k}, B \in \Z[x]^{k \times k},$ and $V \in \Z[x]^{k \times n}$ 
\Output Matrix $C' \df C-CU(I+BVCU)^{-1}BVC$
\State Compute $E = I+BVCU\in \Z(x)^{k \times k}$
\State Compute the product $f(x)$ of all denominators of $E$. 
\For {all $i, j \leq k$}
  \State Compute the $ij$th entry of the inverse $E^*$ of $f(x)E$ as $(-1)^{i+j}\frac{\det(f(x)E_{ji})}{ \det(f(x)E)}$.
\EndFor
\State Compute the inverse of $E$ as $f(x)E^*$.
\State Compute $C'$.  
\end{algorithmic}
\end{algorithm}

Our dynamic program will maintain an implicit representation of the matrix $C$ from the previous theorem. For extracting the smallest non-zero terms it suffices to look at the numerators of $C$, as long as the denominators are normalized.

\begin{lemma}\label{lemma:coefficient_extraction}
  Suppose $\frac{g(x)}{h(x)} \in \Z(x)$ for some $g(x) = \sum_i c_i x^i$ and normalized $h(x)$, and that $\frac{g(x)}{h(x)} = \sum_i d_i x^i$. Then $i$ is the smallest number such that $c_i \neq 0$ if and only if it is the smallest such number such that $d_i \neq 0$.
\end{lemma}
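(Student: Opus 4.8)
The plan is to reduce the statement to the elementary fact that multiplying a power series by one with unit constant term does not change its order of vanishing at $0$. Indeed, $h(x)$ being normalized is precisely the hypothesis $h_0 = 1$, so $h(x)$ is a unit of $\Z[[x]]$, and one expects the least index of a non-zero coefficient of $g(x) = h(x)\cdot\sum_i d_i x^i$ to coincide with that of $\sum_i d_i x^i$.

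Concretely, I would start from the identity $g(x) = h(x)\cdot\bigl(\sum_i d_i x^i\bigr)$ in $\Z[[x]]$, which follows from $\frac{g(x)}{h(x)} = \sum_i d_i x^i$, and write it out coefficient-wise as $c_i = \sum_{j=0}^{i} h_j d_{i-j}$ for every $i$, where $h(x) = \sum_i h_i x^i$ and $h_0 = 1$. If all $d_i$ vanish then all $c_i$ vanish and there is nothing to prove, so assume $\sum_i d_i x^i \neq 0$ and let $m$ be the least index with $d_m \neq 0$. First I would check that $c_i = 0$ for every $i < m$: in the convolution sum each factor $d_{i-j}$ has index $i-j \leq i < m$, hence is zero. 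Then I would evaluate $c_m = h_0 d_m + \sum_{j=1}^{m} h_j d_{m-j}$; every $d_{m-j}$ with $1 \leq j \leq m$ has index $m-j < m$ and so vanishes, leaving $c_m = h_0 d_m = d_m \neq 0$. This shows the least index of a non-zero coefficient of $g(x)$ is exactly $m$, which is by definition the least index of a non-zero coefficient of $\sum_i d_i x^i$; since both quantities have been identified with the single number $m$, both directions of the ``if and only if'' follow at once.

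I do not expect any genuine obstacle here: the whole argument is a one-line order computation, $\mathrm{ord}(g) = \mathrm{ord}(h) + \mathrm{ord}(\sum_i d_i x^i) = \mathrm{ord}(\sum_i d_i x^i)$, with the coefficient manipulation above being just an explicit unwinding of it. The only minor points to be careful about are to invoke the ``normalized'' hypothesis exactly where it is needed (to know $h_0 = 1$, equivalently that $h$ is invertible in $\Z[[x]]$) and to dispose of the degenerate case $\sum_i d_i x^i = 0$ separately, as done above.
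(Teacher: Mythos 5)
Your proof is correct and uses essentially the same mechanism as the paper: writing out the convolution identity $c_i = \sum_{j=0}^i h_j d_{i-j}$ with $h_0 = 1$ and arguing inductively that the leading zero patterns of $(c_i)$ and $(d_i)$ coincide. The paper phrases this as a short induction (if $c_\ell = d_\ell = 0$ for all $\ell < i$ then $c_i = 0$ iff $d_i = 0$), while you phrase it by naming the least non-vanishing index $m$ of $(d_i)$ and verifying $c_i = 0$ for $i < m$ and $c_m = d_m \neq 0$; these are just two presentations of the same order computation.
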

\begin{proof}
    Suppose $h(x) = \sum_i e_i x^i$ with $e_0 = 1$. Then $c_0 = e_0 d_0$ and hence $c_0 = 0$ if and only if $d_0 = 0$. Further, $c_i = \sum_{j=0}^i e_j d_{i-j}$. Hence, if $c_\ell = d_\ell = 0$ for all $\ell < i$ then $c_i = 0$ if and only if $d_i = 0$.
\end{proof}

Instead of working with quotient approximations directly, the dynamic program will maintain in $\DynFOar$ evaluations of numerators and denominators under several numbers from $a \in \N$. 
By interpolating the polynomials we can extract the smallest non-zero term from this representation in $\FOMaj(+, \times)$, even when the evaluation is done modulo several primes.

\begin{lemma}\label{lemma:extraction}
For all numbers $d, d', e \in \N$ with $d' > d$ there are numbers $e', b \in \N$ such that the following is true.
Fix a domain of size $n$. Suppose $g(x) = \sum_i c_i x^i \in \Z[x]$ with $\deg g(x) \leq n^d$ and $\|g(x) \| \leq 2^{n^e}$. Let $S \subseteq [n^{d'}]_0 \in \N$ with $|S| \geq n^d+1$, and $P$ a set of $n^{e'}$ primes (among the first $n^b$ numbers). The smallest $i$ such that $c_i \neq 0$ can be defined in $\FOMaj(+, \times)$ from a relation that stores the value $g(a) \pmod p$ for each $a \in S$ and each $p \in P$.
\end{lemma}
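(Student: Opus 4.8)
The plan is to reconstruct the coefficients $c_i$ of $g$ modulo each prime in a suitable sub-collection of $P$ by Lagrange interpolation over the field $\Z_p$, keeping every intermediate quantity at polynomially many bits, and then to decide $c_i = 0$ via the Chinese-remainder criterion that $c_i \equiv 0 \pmod{p}$ for all these $p$. First I would fix the parameters: let $e' := \max\{d',e\}+1$ and $b := e'+1$, so that for large $n$ a set $P$ of $n^{e'}$ primes among the first $n^b$ numbers exists. Any such $P$ contains at most $\pi(n^{d'}) < n^{d'}$ primes that are $\le n^{d'}$, so the set $P' := \{p \in P : p > n^{d'}\}$ is $\FO(+,\times)$-definable from $P$, is nonempty, has at least $n^{e'}-n^{d'}$ elements, and satisfies $\prod_{p \in P'} p > 2^{n^e} \ge \|g(x)\| \ge |c_i|$ for every $i$, provided $n$ is large (the finitely many small $n$ are handled directly). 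Each $p \in P'$ exceeds $n^{d'}$, so the pairwise distinct elements of $S$ (all at most $n^{d'}$) remain pairwise distinct modulo $p$, and a polynomial of degree at most $n^d$ is the unique interpolant over $\Z_p$ through any $n^d+1$ of them.

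Next I would carry out the interpolation. Let $S_0 \subseteq S$ consist of the $n^d+1$ smallest elements of $S$; this is $\FO(+,\times)$-definable. For each $p \in P'$, interpolation over $\Z_p$ reproduces $g$, whence
\[
  c_i \;\equiv\; \sum_{a \in S_0} \bigl(g(a) \bmod p\bigr)\cdot w_{a,i} \pmod{p},
\]
where $w_{a,i} \in \{0,\dots,p-1\}$ is the coefficient of $x^i$ in the $a$-th Lagrange basis polynomial over $\Z_p$; equivalently, $w_{a,i}$ equals $\bigl(\prod_{b \in S_0\setminus\{a\}}(a-b)\bigr)^{-1} \bmod p$ times the coefficient of $x^i$ in $\prod_{b \in S_0\setminus\{a\}}(x-b)$, the inverse existing because $0 < |a-b| \le n^{d'} < p$. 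Now every ingredient of this formula is a bounded-bit computation: the coefficients of $\prod_{b \in S_0\setminus\{a\}}(x-b)$ are (signed) elementary symmetric polynomials of at most $n^d$ integers below $n^{d'}$, hence have $\bigO(n^d\log n)$ bits, and an iterated product of polynomially many polynomials of polynomially bounded degree and coefficients is computable in uniform $\TC^0$ (via Kronecker substitution and iterated integer multiplication); the modular inverse of $\prod_{b}(a-b) \bmod p$ is its $(p-2)$-th power modulo $p$, an iterated product of fewer than $n^b$ factors below $p$; and the outer sum has $n^d+1$ terms, each a product of two $\bigO(\log n)$-bit numbers. By the uniform $\TC^0$ circuits for iterated integer multiplication, division, and polynomial interpolation of Hesse, Allender and Barrington \cite{HesseAB02} (cf.\ Lemma~\ref{lemma:computations_in_fo}), and since $\FOMaj(+,\times)$ coincides with uniform $\TC^0$ \cite{BarringtonIS90}, the relation $\{(i,p,v) : p \in P',\ 0 \le v < p,\ c_i \equiv v \pmod{p}\}$ is $\FOMaj(+,\times)$-definable from the input relation.

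Finally, since $\prod_{p \in P'} p > |c_i|$ for every $i$, we have $c_i = 0$ if and only if $c_i \equiv 0 \pmod{p}$ for all $p \in P'$, so the index we seek is the unique $i \in \{0,\dots,n^d\}$ with
\[
  \bigl(\exists p \in P'\colon (c_i \bmod p) \neq 0\bigr)
  \;\;\wedge\;\;
  \bigl(\forall j < i\;\; \forall p \in P'\colon (c_j \bmod p) = 0\bigr),
\]
which is a first-order combination of the relation defined above; this defines the desired value in $\FOMaj(+,\times)$.

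The main obstacle is not the mathematics, which is just Lagrange interpolation in Chinese-remainder form, but the bookkeeping that keeps every step inside $\FOMaj(+,\times)$. The crucial observations are that the coefficients of $g$, the Lagrange weights $w_{a,i}$, and the products $\prod_{b\ne a}(a-b)$ all carry only polynomially many bits, so that the relevant iterated multiplications over $\Z$, $\Z[x]$ and $\Z_p$ are covered by \cite{HesseAB02}; and that $e'$ and $b$ can be chosen so that, for every admissible $P$, discarding the at most $n^{d'}$ primes that are $\le n^{d'}$ still leaves enough primes whose product exceeds $2^{n^e}$ — both guaranteed by the crude estimates above for all sufficiently large $n$.
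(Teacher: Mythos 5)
Your proof is correct, and it is essentially the transpose of the paper's: the paper first CRR-decodes each $g(a)$ from the residues $g(a)\bmod p$ back to an integer in binary (HAB Theorem~4.1), and then does a single polynomial interpolation over $\Q$ (citing HAB's Cauchy-interpolation result, Corollary~6.5) to recover the coefficients $c_i$ and inspect their signs of nonzeroness directly; you instead interpolate once per prime $p$ over $\Z_p$ via Lagrange to get $c_i\bmod p$, and decide $c_i=0$ by the CRR divisibility criterion at the end. Both hinge on the same two ingredients — that $g$ is determined by its values at $n^d+1$ points and that all intermediate quantities have only polynomially many bits, so HAB's uniform $\TC^0$ iterated arithmetic applies — so they are equally valid. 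The one genuine extra subtlety in your route, which you correctly identify and handle, is that interpolating mod $p$ requires the sample points in $S_0$ to stay distinct mod $p$, forcing you to discard the at most $\pi(n^{d'})<n^{d'}$ primes $\le n^{d'}$ and argue the remaining primes still have product exceeding $2^{n^e}$; the paper's order of operations (decode first, interpolate second) avoids this entirely since CRR decoding has no such constraint. Conversely, your route never materializes the binary representation of the large integers $g(a)$, staying purely in residue form until the very last zeroness test, which is arguably a cleaner factoring of the argument.
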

\begin{proof}
The value $g(a)$ with $a \leq n^{d'}$ is bounded by $2^{n^{e'}}$ for some $e'$ that only depends on $d, d'$ and $e$. As the product of the primes in $P$ exceeds this number, $g(a)$ is uniquely determined by the Chinese remainder representation given by the values $g(a) \pmod p$, and can be decoded in $\FOMaj(+,\times)$ \cite[Theorem 4.1]{HesseAB02}.
The statement follows from the fact that $g(x)$ of degree at most $n^d$ is uniquely determined by the values $g(a)$ for $n^d+1$ pairwise distinct data points $a$, and Cauchy interpolation is in $\FOMaj(+,\times)$ \cite[Corollary 6.5]{HesseAB02}.
\end{proof}

\begin{proofsketchof}{Theorem \ref{theorem:inverse_coefficients_tcquery}}
   Suppose $C \in \Z(x)$ is a normalized quotient $n$-approximation of $A^{-1}$. Then by Lemma \ref{lemma:coefficient_extraction}, the smallest 
   $i < n$ such that the $i$-th coefficient of the $st$-entry of $A^{-1}$ is non-zero is equal to the smallest such $i$ for the numerator of the $st$-entry of $C$. This $i$ can be extracted from the relations stated in Lemma \ref{lemma:extraction} by a $\FOMaj(+, \times)$ formula. 
   
   Our goal is therefore to maintain relations that store the values from Lemma \ref{lemma:extraction} for each entry of $C$. An inspection of the proof of Theorem \ref{theorem:fewerChanges} shows that it suffices to exhibit a dynamic program that maintains such relations for $k \df \frac{\log n}{\log \log n}$ changes of size $k$ each, starting from initial auxiliary relations with respect to a normalized quotient $n$-approximation $C \in \Z(x)$ with numerators of degree at most $n$ and denominator 1. More details of this initialisation are given towards the end of this sketch.
   
   For a domain of size $n$, let $S \df [n^{\lambda}]$ and let $P$ be the set of the first $n^{\mu}$ primes for $\lambda$ and $\mu$ to be determined later. The dynamic program implicitly maintains a quotient $n$-approximation $C \in \Z(x)$ of $A^{-1}$ as follows. For each $a \in S$, each $p \in P$ and each entry $(s, t)$ of $C$, it maintains $g(a) \pmod p$ and $h(a) \pmod p$ if $\frac{g(x)}{h(x)}$ is the $st$-th entry of $C$. Whenever $h(a) = 0 \pmod p$ then $p$ is declared invalid for $a$; whenever $h(a) = 0$ for some denominator $h(x)$ and $a \in S$ then $a$ is declared invalid. The set of primes valid for a value $a$ is denoted by~$P_a$. 
   
   The initial amount of values in $S$ and in $P$ is chosen such that after $k$ changes, sufficiently many valid values remain in $S$ and in each $P_a$ in order to apply Lemma \ref{lemma:extraction} for extracting the smallest non-zero coefficients of denominators from this implicit representation. %

   Suppose $C$ is a quotient $m$-approximation of $A^{-1}$ implicitly stored by the dynamic program. Then we denote by $\calC_a \pmod p$ the evaluation of $C$ at position $a$ modulo prime $p$ for valid $a$ and $p$. By $\calC_a$ we denote the tuple $(\calC_a \pmod {p_1}, \ldots, \calC_a \pmod {p_\eta})$ where $p_1, \ldots, p_{\eta}$ are the primes still valid for $a$. By $\calC$ we denote the tuple $\calC_{a_1}, \ldots, \calC_{a_\kappa}$ where $a_1, \ldots, a_\kappa$ are valid numbers. 

   Formally the program uses a relation $D$ that stores a tuple $(\bar a, \bar p, s, t, \bar v)$ if and only if (i) $a$ is valid, (ii) $p$ is valid for $a$, and (iii) $v$ is the value of the denominator of the $st$-th entry modulo $p$; and similarly a relation $N$ for storing the numerators. These relations encode $\calC_a \pmod p$ as well as the sets $S$ and $P_a$. In the following we abstain from using this formal perspective for the sake of clarity. The descriptions to follow can be easily translated to this formal framework.

   We describe how the program deals with a change $\Delta A$; afterwards we discuss how the auxiliary data is initialized. When a change $\Delta A$ occurs the dynamic program updates the sets $S$ and $P_a$ as well as the tuple $\calC$ according to Algorithm \ref{algorithm:dynfo_updates_implicit}.

  \begin{algorithm}
    \caption{Updating the auxiliary data for distances in $\DynFO(+, \times)$ with $\TC^0$-query}\label{algorithm:dynfo_updates_implicit}
    \begin{algorithmic}[1]
        \Input A change $\Delta A$
        \State \begin{varwidth}[t]{\linewidth}
      Decompose $x \Delta A$ into $UBV$ with $U \in \Z^{n \times k}, B \in \Z[x]^{k \times k},$ and $V \in \Z^{k \times n}$\par
        \hskip\algorithmicindent according to Lemma \ref{lemma:change_decomposition}. 
      \end{varwidth}
        \For{each $a \in S$}
          \For{each prime $p \in P_a$}
            \For{all $(s, t) \in [\frac{\log n}{\log \log n}]^2$}
              \State \begin{varwidth}[t]{\linewidth}
      Compute $g_{st}(a) \pmod p$ and $h_{st}(a) \pmod p$ where $\frac{g_{st}(x)}{h_{st}(x)}$ is the $st$-th entry\par
        \hskip\algorithmicindent of $(I+BVCU)^{-1}$, following Algorithm \ref{algorithm:woodbury}. 
      \end{varwidth}
              \State If $h_{st}(a) = 0 \pmod p$ then remove $p$ from $P_a$
              \State If $P_a = \emptyset$ then remove $a$ from $S$
              \State \begin{varwidth}[t]{\linewidth}
      If $P_a \neq \emptyset$ then compute $\calC'_a \pmod p$ according to $C-CU(I+BVCU)^{-1}BVC$\par
        \hskip\algorithmicindent and following Algorithm  \ref{algorithm:woodbury}. 
      \end{varwidth}
          \EndFor
        \EndFor
      \EndFor
    
    \end{algorithmic}
    \end{algorithm}
  
  All steps can be performed in $\FO(+, \times)$. The loops from Lines 2--4 are executed in parallel. For Line 5, observe that $I+BVCU \pmod p$ can be computed for the same reason as in the proof of Theorem \ref{theorem:non_zero_inverse}.

  Let us analyze the the necessary amount of values in $S$ and $P$. By Lemma \ref{lemma:quotient_approximation_update}(a), the degrees of denominators $h(x)$ of $C$ grow by a factor $k^3$ after each change. Thus, after $k$ change steps the degree of denominators is bounded by $nk^{3k} \in \bigO(n^r)$ for some $r \in \N$. Therefore each denominator $h$ evaluates to $0$ for at most $\bigO(n^r)$ many $a \in S$. All in all there are at most $\bigO(n^{r+2})$ many $a \in S$ such that $h(a) = 0$ for some denominator $h$ of $C$ after one update step, and at most $\bigO(n^{r+3})$ such $a$ in the course of $k$ change steps. 
  
  By Lemma \ref{lemma:quotient_approximation_update}(b), the values of coefficients of $C'$ are bounded by $(\|C\| k \deg C)^{\bigO(k^3)}$ after one change step. Thus, after $k$ changes, the values $h(a)$ for $a \in S$ are bounded by $2^{\bigO(n^{s(r)})}$ for some $s \in \N \to \N$. Thus, if $h(a) \neq 0$ then $h(a) \equiv 0 \pmod p$ for at most $\bigO(n^{s(r)})$ many primes $p$ and for a denominator $h(x)$ of $C$. 
All in all, at most $\bigO(n^{s(r)+2})$ many primes $p$ are removed from $P_a$ in Line 6 in one step. If $h(a) \neq 0$ after each of $k$ many changes, then at most $\bigO(n^{s(r)+3})$ many primes $p$ have been removed from $P_a$. If $h(a) = 0$ after some change, then $h(a) \equiv 0 \pmod p$ for all primes $p$ and therefore $a$ is removed from $S$ in Line 7.
  
Thus, there exists numbers $\lambda, \mu$ such that if one starts with $S = [n^{\lambda}]_0$ and $n^{\mu}$ primes~$P$, then after $k = \frac{\log n}{\log \log n}$ many changes there are still at least $n^r$ numbers $a$ in $S$, each of them with a set $P_a$ of size at least $n^{e'}$, for the number $e'$ that is guaranteed to exists by Lemma \ref{lemma:extraction} applied to $d \df r, d' \df \lambda, e \df s(r)$.  In particular one can define distances with a $\FOMaj(+, \times)$ formula according to Lemma \ref{lemma:extraction}.

We remark shortly on how to initialize the auxiliary data. %
In the proof of Theorem \ref{theorem:fewerChanges}, the initialisation is only applied to the input database. 
Inspecting the proof one can see that the initialisation also has access to the auxiliary relations that are maintained for its input database.
So, a $\FOMaj(+,\times)$ initialisation can interpolate the polynomials $g(x), h(x)$ for each entry of $C$, compute the first $n+1$ coefficients $d_0, \ldots, d_n$ of $\frac{g(x)}{h(x)}$, and therefore obtain a new entry for the quotient $n$-approximation with numerator $\sum_{i=0}^n d_ix^i$ and denominator~$1$. 
Additionally, the initialisation can compute $C(a) \pmod p$ for all $a \in [n^{\lambda}]_0$ and all $n^{\mu}$ primes $p \in P$.
As $\FOMaj(+,\times) = \text{uniform } \TC^0 \subseteq \NC^1 \subseteq \AC[\frac{\log n}{\log \log n}]$ \cite[Theorem 4.3]{ChandraSV84}, by the sketched generalisation of Theorem \ref{theorem:fewerChanges} it is sufficient to maintain the query result for $k =  \frac{\log n}{\log \log n}$ change steps.
\end{proofsketchof}

 }

\end{document}